\RequirePackage{fix-cm}
\documentclass[twocolumn]{svjour3}          
\smartqed  
\usepackage{graphicx}

\usepackage{amsmath,amssymb,amsfonts}
\usepackage{textcomp}
\usepackage[linesnumbered,ruled,vlined]{algorithm2e}
\usepackage{url} 
\usepackage{todonotes}
\usepackage{cellspace}
\usepackage{float}
\usepackage{subfigure}
\usepackage{multirow}
\usepackage{caption}
\usepackage{tikz}
\usepackage{enumitem}
\usepackage{xcolor}
\usepackage{hyperref}
\usepackage{overpic}
\usepackage{tikz}
\usepackage{balance}

\usepackage{siunitx}

\usepackage{microtype}
\usepackage{booktabs}
\usepackage{makecell}

\usepackage{todonotes}

\begin{document}

\title{VTRQ: Enabling Verifiable Trajectory Range Queries in Hybrid-Storage Blockchains
}


\author{Zhongming Yao         \and
       Junchang Xin \and
       Yumeng Song \and
       Yusen Mao \and
       Kristian Torp \and
       Yuemin Ding \and
       Divesh Srivastava \and
       Yushuai Li \and
       Christian S. Jensen \and
       Tianyi Li
}


\institute{Zhongming Yao \at
              Aalborg University, Denmark \\
               Northeastern University, China\\
              \email{zyao@cs.aau.dk}           
           \and
           Junchang Xin \at
              Northeastern University, China\\
               \email{xinjunchang@mail.neu.edu.cn}
            \and
           Yumeng Song \at
              Aalborg University, Denmark \\
               \email{yumengs@cs.aau.dk}
            \and
           Yusen Mao \at
              Northeastern University, China\\   
              \email{maoyusen@stumail.neu.edu.cn}
            \and
           Kristian Torp \at
              Aalborg University, Denmark \\
               \email{torp@cs.aau.dk}
            \and
            Yuemin Ding \at
              University of Navarra, Spain\\
               \email{yueminding@tecnun.es}
            \and           
           Divesh Srivastava \at
              AT\&T Chief Data Office, USA\\
               \email{divesh@research.att.com}
            \and
           Yushuai Li \at
              Aalborg University, Denmark \\
               \email{yusli@cs.aau.dk}
            \and
           Christian S. Jensen \at
              Aalborg University, Denmark \\
               \email{csj@cs.aau.dk}
            \and
           Tianyi Li \at
              Aalborg University, Denmark \\
               \email{tianyi@cs.aau.dk}
}

\date{Received: date / Accepted: date}

\maketitle

\begin{abstract}
Due to their increasingly large volumes, outsourcing of trajectory storage and querying to third-party service providers has become attractive. However, in such outsourced environments, service providers may return incorrect, e.g., incomplete, tampered, or invalid query results, making verifiability of query results an important consideration. Existing hybrid-storage blockchains offer limited support for trajectory data, lacking authenticated data structures (ADS) that enable efficient verification.
For example,  ADSs designed for queries on one-dimensional data are unsuitable for queries on multidimensional trajectory data, while ADSs tailored for discrete data may yield incomplete results when applied to continuous trajectory data.
We propose the first framework for \textbf{v}erifiable \textbf{t}rajectory \textbf{r}ange \textbf{q}ueries in hybrid-storage blockchains, called \texttt{VTRQ}. It features two efficient ADSs: (i) a spatial ADS for road networks that leverages hierarchical organization to aggregate trajectory, edge, and node hashes, thus reducing redundant computations and improving spatial verification efficiency; and (ii) a temporal ADS based on interval trees, which indexes only the start and end times of trajectories, thereby enabling pruning and efficient temporal verification. By separating spatial and temporal indexing, the method reduces the need for data comparison, enhancing both query and verification efficiency.
{To aggregate spatial and temporal query results, \texttt{VTRQ} provides a spatio-temporal edge aggregation mechanism that combines temporal verification of spatial nodes, spatial intersection computation, and temporal intersection analysis to achieve spatio-temporal filtering.
}
Experiments on two real-world datasets show that \texttt{VTRQ} is capable of up to 6× faster query performance and up to an order-of-magnitude improvement in verification efficiency,  compared to the state-of-the-art method.

\keywords{Trajectory \and Verifiable Queries \and Range Queries \and Blockchain}
\end{abstract}

\vspace{4mm}

\section{Introduction}\label{sec:introduction}

\sloppy 

\setlength{\marginparwidth}{1cm}



With the growing use of trajectory data in areas such as traffic management~\cite{wu2024fedapt,li2022evolutionary}, logistics optimization~\cite{kong2022trajectory}, and personalized services~\cite{shang2014personalized}, efficient management and analysis of such data have become increasingly important. Trajectory datasets can be very large. For example, Didi Chuxing’s ride-hailing platform~\cite{didiGlobalWebsite} generates some 15 billion trajectory points per day across all locations, and in Chengdu alone, it generates some 3.6 million trajectory points per day. We use a Chengdu dataset to evaluate the proposed framework empirically. Due to the very large data volume, storing and querying the data requires substantial resources, and it is generally attractive for data owners to be able to outsource these tasks to cloud service providers. In practice, such outsourcing is already widespread. For example, Didi Chuxing outsources the handling of its trajectory data to Alibaba Cloud~\cite{aliyunStartup1077761}, Geotab stores daily fleet trajectories in Google Cloud’s BigQuery~\cite{googleCloudGeotabCaseStudy}, and the BMW Group manages its vehicle telemetry and trajectory data on Amazon~Web Services (AWS)~\cite{awsBmwGroupCaseStudy}. However, service providers may not return correct query results for reasons such as misconfiguration, cyber attacks, program vulnerabilities, and resource constraints~\cite{xia2022litmus,zhou2021veridb,yu2021secure,wu2023enabling,zhang2019gem,zhang2021authenticated,li2024authenticated1,li2024authenticated2}.
For example, 19+ million California voter registration records stored in a misconfigured AWS database were deleted by cybercriminals~\cite{msspalertAwsVoterLeak}.
Therefore, the ability to verify efficiently the completeness and soundness of query results obtained from external service providers is important.

In an outsourcing environment, a blockchain can serve as a decentralized trust anchor for verifying query results. Specifically, efficient verification of a query result requires both verification information corresponding to the query results and a reference fingerprint that the service provider cannot forge. By computing a digest from the query result and the accompanying verification information and then comparing it with the reference fingerprint, the query result can be validated~\cite{mykletun2006authentication,pang2005verifying,li2010authenticated}. However, if the reference fingerprint is stored and maintained by the service provider, the verification loses its independence: the service provider may fabricate the query results, verification information, and matching fingerprint in a mutually consistent manner, reducing verification to self-assertion. Requiring the data owner to  remain online to distribute or sign fingerprints for every query is also impractical, as it incurs excessive operational and computational overheads in large-scale deployments~\cite{pang2005verifying,mykletun2006authentication}. Being an append-only and tamper-evident distributed ledger, a blockchain provides a publicly verifiable and immutable platform for storing reference fingerprints, thereby enabling trustworthy validation of query results.

Proposals~\cite{wu2023enabling,zhang2019gem,zhang2021authenticated,li2024authenticated1,li2024authenticated2} exist that leverage blockchain technology to support third-party service provisioning.
They use the security, transparency, and immutability of blockchains to provide verification mechanisms for query results.
The core idea is that a data owner sends raw data and an authenticated data structure (ADS) to a service provider, while uploading a digest of the ADS to a blockchain.
When computing a query, the service provider must return both the query result and a verification object (VO) using the ADS. The client compares the hash value of the ADS root, computed from the query result and the VO, and then verifies it with the digest stored on the blockchain.



A trajectory captures continuous movement as a sequence of timestamped GPS points with longitude and latitude~\cite{li2020compression,li2021trace,hamdi2022spatiotemporal,yu2023continuous,hu2023spatio}. Therefore, verifying spatio-temporal query results—such as those of range queries, which are common and practical—must consider not only spatial and temporal constraints but also the continuity of the underlying movement. However, existing blockchain-based data outsourcing proposals~\cite{wu2023enabling,zhang2019gem,zhang2021authenticated,li2024authenticated1,li2024authenticated2}, employ ADSs designed for point-based queries (e.g., supporting point or range conditions), and query results are returned as discrete points. 
They fail to consider the continuity of the underlying movement.
For example, consider the spatio-temporal range query $ q = (([4, 7], $ $[3, 6]), [4, 5]) $ in Fig.~\ref{Fig:introduction}. Here, the spatial component $([4, 7], [3, 6])$ defines a rectangle, with $[4, 7]$ and $[3, 6]$ specifying longitude and latitude ranges. The temporal component, $[4, 5]$, is a time interval.
Point-based processing—when trajectory continuity is not taken into account—can fail to identify valid results. 
For example, assuming linear movement between consecutive points in a trajectory,  then trajectory $\textit{Tr}_4$ traverses the query region, but none of its sample points intersect the spatial query region. Similarly, trajectory $\textit{Tr}_1$ satisfies the temporal constraint, although none of its points do. This highlights a key limitation of traditional point-based methods: due to the lack of continuity-aware reasoning, they can miss trajectories that actually satisfy the spatio-temporal constraints.
We focus on designing a framework tailored for trajectory range queries that incorporates ADSs along with verification mechanisms to enable verifiable trajectory range queries in blockchain-based data outsourcing architectures.
This involves two key challenges.

\begin{figure*}[]
\centerline{\includegraphics[width=0.75\textwidth]{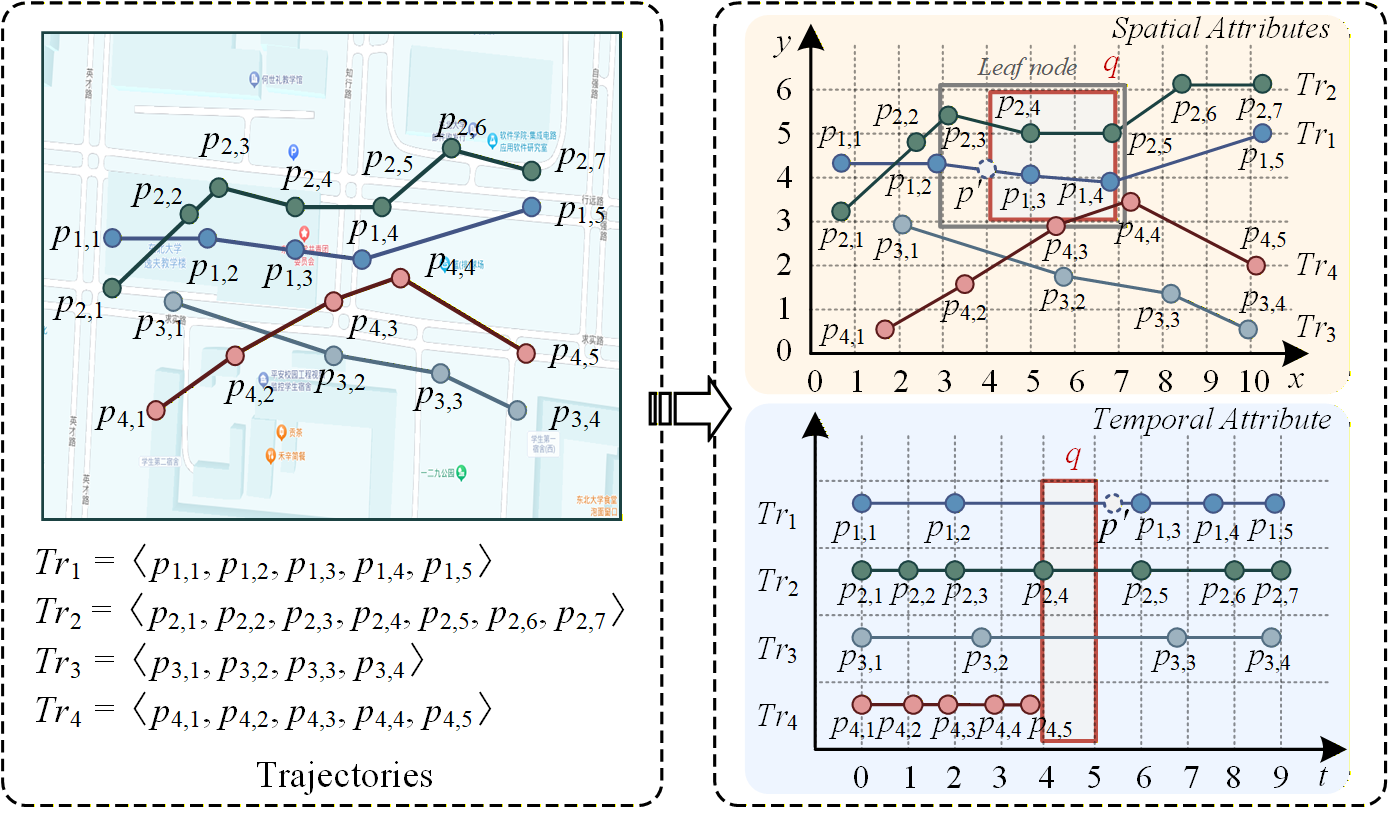}}
\caption{Example of a trajectory range query.}
\label{Fig:introduction}
\end{figure*}

\textit{Challenge I: How to design an ADS to support verifiable trajectory range queries?}
Unlike traditional~indexes, ADSs require extra information for result verification. It involves embedding cryptographic hashes into the index. For trajectory data with massive numbers of~points,~this causes high storage and computation overheads. 
In existing designs~\cite{zhang2019gem,zhang2021authenticated,xu2019vchain,ruan2021lineagechain,zhu2019sebdb,pei2020efficient,jia2025mest,cui2025towards,sun20253fs}, these problems are amplified on trajectory data, since many trajectory points generate excessive verification data.
This reduces construction and verification efficiency, and under memory constraints, it may even render ADSs impractical to construct or use.
Furthermore, mainstream spatio-temporal indexes, such as those based on minimum bounding rectangles (MBRs)~\cite{pfoser2000novel,tao2001mv3r} and grid partitioning~\cite{chakka2003indexing,yang2017novel}, rely on hierarchical partitioning, with each leaf node being responsible for a region and storing all trajectories intersecting this region. 
When hashes are embedded into such indexes to construct ADSs, the hash digest of a leaf node typically commits to the entire set of trajectories stored in that node. 
As a result, once a query or update accesses a leaf node, all trajectories contained in that node must participate in hash recomputation, even if only a small number of trajectories actually satisfy the query, leading to reduced~efficiency.


\textit{Challenge II: How to support efficient verifiable trajectory range queries based on a given ADS?}
Beyond the structural issues in ADS design, verifiable query efficiency remains a key challenge. Query processing and verification must guarantee result completeness without relying simply on point-by-point checks, which are prohibitive at scale.
This inefficiency stems from the fact that, in order to handle spatio-temporal information, spatio-temporal indexes~\cite{pfoser2000novel,tao2001mv3r,chakka2003indexing,yang2017novel} store the complete spatial and temporal details of trajectories.
Thus, after spatial filtering, every point and edge of a candidate trajectory must be checked for temporal validity. For example, given the query $ q = (([4, 7], [3, 6]), [4, 5]) $ in Fig.~\ref{Fig:introduction}, trajectories $\textit{Tr}_1$, $\textit{Tr}_2$, and $\textit{Tr}_4$ satisfy the spatial constraint. Existing methods must check all edges of these trajectories—including those of $\textit{Tr}_4$, which does not overlap with the query's temporal interval. Such repeated checks lead to redundant computations, severely reducing query and verification efficiency in large datasets.
Moreover, embedding hashes into the indexes exacerbates the problem, since query processing and verification involve processing large authentication data, increasing computational overheads.

We introduce a novel framework for \underline{\textbf{v}}erifiable \underline{\textbf{t}}rajectory \underline{\textbf{r}}ange \underline{\textbf{q}}ueries, called \texttt{VTRQ}.
We design two ADSs: a road-oriented spatial ADS (S-ADS) and an interval-tree-based temporal ADS (T-ADS).
To address Challenge I, we first compute a hash value for each trajectory and use it as a unique identifier for the trajectory. In S-ADS, these hash values are further organized into three hierarchical levels: trajectory level, edge level, and tree node level. Trajectory hashes are aggregated at the edge level, and edge hashes are further reused and aggregated at the node level. This design avoids redundant hash computations and repeated concatenations involving non-result trajectories, thereby improving both the construction of S-ADS and verification efficiency. In T-ADS, node hashes are computed by directly aggregating trajectory hashes and metadata used for verification and query acceleration, enabling optimized verification performance.

To address Challenge II, we decouple the spatial and temporal attributes of trajectories and use S-ADS and T-ADS independently.
Further decoupling spatial attributes (e.g., longitude and latitude) is not beneficial, as they form inherently coupled two-dimensional spatial constraints in trajectory queries; separating them would not improve pruning power but would introduce additional overheads for result merging and verification.
S-ADS stores only spatial information used for querying trajectories, while T-ADS stores only the start and end times of trajectories. This separation reduces the amount of data processed, thereby accelerating both query processing and result verification. S-ADS constructs MBRs on boundary edges to enable spatial pre-filtering, while T-ADS stores the maximum timestamp in each subtree to support temporal pruning. These optimizations further accelerate query processing and result verification.
To ensure accurate result aggregation after the decoupling, we propose a Spatio-Temporal Edge Aggregation (STEA) mechanism to achieve fine-grained alignment between spatial and temporal attributes in verifiable trajectory range queries. 
This mechanism first intersects results obtained from S-ADS and T-ADS to form candidate results. Then, fine-grained filtering is applied, combining temporal verification of spatial nodes, spatial intersection computation, and temporal intersection analysis to accurately identify matching trajectory edges.
The STEA mechanism can handle cases such as the one in Fig.~\ref{Fig:introduction}, where the sub-trajectory $\overline{p_{1,2}p_{1,3}}$ of trajectory $\textit{Tr}_1$ satisfies the spatial and temporal ranges of query $q$. While the portion $\overline{p' p_{1,3}}$ is in the spatial range, its temporal interval begins after $5$ and does not overlap with the query interval $[4,5]$, so $\textit{Tr}_1$ is not in the result of the query $q$. 
The main contributions of the paper are summarized as follows:

\begin{itemize}[leftmargin=*]
\item We present \texttt{VTRQ}, a framework for verifiable trajectory range queries in hybrid-storage blockchains. To the best of our knowledge, this is the first work that uses ADSs to enable verifiable trajectory range queries.

\item \texttt{VTRQ} implements two ADSs—a road-aware spatial ADS and an interval tree-based temporal ADS—to capture essential semantics and support efficient verifiable trajectory range queries.

\item  \texttt{VTRQ} introduces a spatio-temporal edge aggregation method to combine intermediate results from the ADSs and extract correct trajectory edges.


\item The experimental study on three real-world datasets demonstrates that \texttt{VTRQ} is capable of up to 6× higher query
efficiency and up to an order of magnitude higher
verification efficiency compared to the state-of-the-art method.

\end{itemize}

We review related work in Section~\ref{sec:related_work} and cover preliminaries in Section~\ref{sec:preliminaries}. 
Section~\ref{sec:ADS} introduces the proposed ADSs, while Section~\ref{sec:query_verification} presents the accompanying query processing and verification methods. 
Section~\ref{sec:verifiability} presents the security analysis. 
Section~\ref{sec:experiments} covers the experimental study, and Section~\ref{sec:conclusion} offers conclusions.

\section{Related work}\label{sec:related_work}

\subsection{Embedded ADS-based methods}

Some studies~\cite{xu2019vchain,ruan2021lineagechain,zhu2019sebdb,pei2020efficient,jia2025mest,bitcoin,wood2014ethereum,wang2022vchain+,singh2023efficient,hao2023efficient,yao2023efficient} build and maintain an ADS in the blockchain’s block. In these methods, an ADS is embedded directly into the blockchain. Users can verify the completeness and soundness of the data by using ADS digests embedded in block headers.
Initially, Bitcoin~\cite{bitcoin} used a Merkle tree as its ADS, primarily to organize and verify transactions within each block, ensuring data integrity and security. Building on this concept, Ethereum~\cite{wood2014ethereum} employs a more sophisticated structure known as the Merkle Patricia Trie, which not only verifies transactions but also stores the entire account state, enabling more efficient state management and faster data access.
Xu et al.~\cite{xu2019vchain} propose an accumulator-based ADS to support verifiable Boolean range queries.
Similarly, Wang et al.~\cite{wang2022vchain+} introduce a sliding window accumulator ADS to optimize verifiable range queries.
Ruan et al.~\cite{ruan2021lineagechain} introduce LineageChain, which employs a Merkle graph as an ADS to capture provenance and features a skip list index optimized for verifiable provenance queries.
Singh et al.~\cite{singh2023efficient} extend this line of work by proposing two ADSs: a two-tier deterministic append-only skip list and a partitioned \textit{B+} tree.
Zhu et al.~\cite{zhu2019sebdb} utilize the Merkle B-tree as an ADS to ensure the soundness and completeness of query results accessed through the index.
Pei et al.~\cite{pei2020efficient} propose a Merkle semantic trie-based ADS to verify queries.
Jia et al.~\cite{jia2025mest} propose the Merkle Extendible Hash Table, an authenticated secondary index for verifying queries on non-primary key attributes.

Despite their security guarantees, the embedded ADS-based methods have several notable drawbacks. First, storing raw data directly on-chain increases the storage burden of a blockchain markedly, resulting in poor scalability and high maintenance costs. Second, an ADS must be constructed for every block, which introduces additional computational overhead during data ingestion. Most importantly, these methods are unsuitable for outsourced storage scenarios—which is our focus—where the data is already stored on external storage nodes.

\subsection{External ADS-based methods}

\begin{figure*}[]
\centerline{\includegraphics[width=0.61\textwidth]{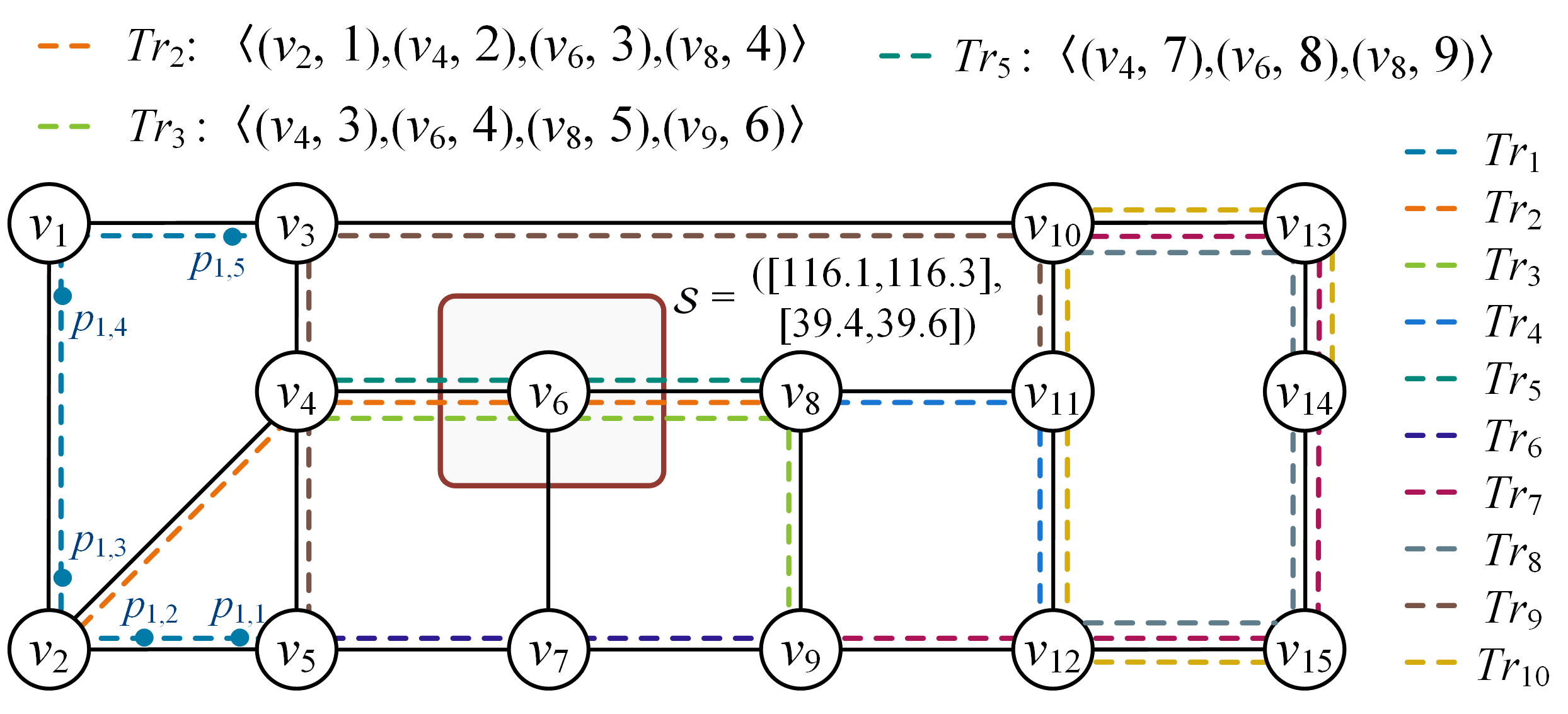}}
\caption{Example trajectories in a road network.}
\label{Fig:definition}
\end{figure*}
To enable verifiable queries when data is stored on external high-performance servers, some studies~\cite{wu2023enabling,zhang2019gem,zhang2021authenticated,li2024authenticated1,li2024authenticated2,cui2025towards,sun20253fs,zhou2023veridkg,hao2022efficient,yao2023learned} also store the ADS at the external service provider instead of on the blockchain.
{By recording the cryptographic digest of the ADS on the blockchain, users can verify the completeness and soundness of query results provided by an external service provider.}
Wu et al.~\cite{wu2023enabling} investigate authenticated queries over graph data in a blockchain-assisted cloud setting, proposing the PAGB authenticated data structure to support privacy-preserving and verifiable graph queries.
Zhang et al.~\cite{zhang2019gem} propose the $\textit{GEM}^2$-tree to minimize gas consumption, where gas refers to the unit of computational cost in blockchain systems, without impacting query performance. Further, Zhang et al.~\cite{zhang2021authenticated} introduce the Chameleon inverted index to achieve constant maintenance costs for keyword queries. 
Li et al.~\cite{li2024authenticated1} propose the Merkle Path DAG for graph data to efficiently handle keyword queries.
Li et al.~\cite{li2024authenticated2} design the MELTree and store the root digests on-chain, along with a verification object construction algorithm to verify the completeness and soundness of query results.
Cui et al.~\cite{cui2025towards} propose a virtual keyword forest that enables secure and reliably verifiable query processing in hybrid-storage blockchains.
Sun et al.~\cite{sun20253fs} propose $E^3FS$ for dynamically updatable datasets, which supports multi-keyword fuzzy search, an important search functionality.
Zhou et al.~\cite{zhou2023veridkg} propose VeriDKG, a verifiable SPARQL query engine for decentralized knowledge graphs that employs the RGB-Trie, a blockchain-maintained authenticated data structure, to provide correctness proofs for query results.

The external ADS-based methods offload raw data to off-chain storage service providers, while storing only the digest of the corresponding ADS on-chain to ensure data completeness and soundness. This design reduces the on-chain storage overhead and improves system scalability, making it particularly attractive for outsourcing environments with large-scale data.
While prior studies have explored verifiable queries in outsourced settings, they fall short of providing efficient support for correct trajectory range querying and query verification.
To the best of our knowledge, we are the first to propose efficient and verifiable trajectory range querying.

\section{Preliminaries}\label{sec:preliminaries}

\subsection{Trajectory range queries}

A raw trajectory is a series of raw GPS points $p=((x,y),t)$, where $x$ is a longitude, $y$ is a latitude, and $t$ is a timestamp~\cite{li2022evolutionary,yao2024camel,hu2024estimator}.
Fig.~\ref{Fig:definition}  shows example raw trajectory $\langle p_{1,1}, \ldots, p_{1,5} \rangle$.

\begin{definition}
A \textbf{road network} is modeled as an undirected graph $G = (V, E)$. Here,
$V = \{v_1, v_2, \ldots, v_{p}\}$ is a set of vertices representing road intersections. Each $v_i=(x,y)$ has a longitude $x$ and a latitude $y$. 
Next, $E$ is a set of edges, each of which is represented by two vertices from $V$. Edges model road segments in-between two intersections.
\end{definition}

Fig.~\ref{Fig:definition} shows a road network with 15 intersections and 21 segments. For example, vertex $v_1$ has two edges connecting it to vertices $v_2$ and $v_3$.

Map-matched trajectories can be obtained from raw trajectories through map-matching~\cite{lou2009map,ma2020forecasting}. In practice, a raw trajectory often does not start or end at a vertex but rather at some point on an edge. 
When this occurs, the start and end points of a trajectory are moved to the nearest vertices, and their relative positions (i.e., distances from the original points to the snapped vertices along the edge) are recorded as offsets~\cite{li2020compression,li2021trace,crosby2019embedding}.
Each point in the map-matched trajectory has a timestamp. When the matched vertex directly corresponds to a raw GPS sample, the original timestamp is preserved. Otherwise, the timestamp is estimated by linear interpolation based on the point’s relative position along the edge.
For simplicity, we assume that all trajectories are aligned in the following context. Accordingly, we index map-matched trajectories and refer to them simply as \textit{trajectories} from now on.
Following the literature~\cite{wang2022road}, we define a trajectory as follows.

\begin{definition}\label{def:trajectory}
    
A  \textbf{trajectory}  $\textit{Tr} = \langle (v_0, t_0),$  $(v_1, t_1), \ldots,$  $(v_m, t_m) \rangle $ is obtained by map-matching a raw trajectory onto a road network $G = (V, E)$, such that $v_k\in V$ and $(v_{k-1}, v_k) \in E $, and $0<k\leq m$. 
Specifically, each pair $(v_i, t_i)$ indicates that the moving object is located at vertex $v_i \in V$ at time $t_i$.
\end{definition}

\begin{example}\label{ex:tra}

In Fig.~\ref{Fig:definition}, 10 dashed lines with different colors represent trajectories $\{\textit{Tr}_1, \ldots, \textit{Tr}_{10}\}$. Specifically, $\textit{Tr}_2$, $\textit{Tr}_3$, and $\textit{Tr}_5$ consist of four, four, and three vertices, respectively.

\end{example}

Trajectory range queries are fundamental operations in trajectory data management, as they underpin more advanced spatio-temporal analytical tasks. Such queries retrieve all trajectories—or some segments of trajectories—that lie within a specified spatio-temporal range~\cite{xu2017range,yin2022efficient,xu2023query}.

\begin{definition}
Given a spatial range $\mathcal{S}$ specified by a longitude interval $[x_{\textit{min}}, x_{\textit{max}}]$ and a latitude interval $[y_{\textit{min}}, y_{\textit{max}}]$, and a temporal range $\mathcal{T}$ specified by a time interval $[t_{\textit{start}}, t_{\textit{end}}]$,
a \textbf{trajectory range query} $q = (\mathcal{S}, \mathcal{T})$ returns all trajectories $\textit{Tr}$ that contain at least one point that satisfies both $\mathcal{S}$ and $\mathcal{T}$.
Note that this point may not be a sampled point. It can be any point along a trajectory's edge.
\end{definition}


\begin{example}
Continuing Example~\ref{ex:tra}, consider the range query $q = (\mathcal{S} = ([116.1,116.3],
[39.4,39.6]), \mathcal{T} = [8, 9])$. In Fig.~\ref{Fig:definition}, the trajectories that intersect $\mathcal{S}$ (the red rectangle) are $\textit{Tr}_2$ (orange dashed line), $\textit{Tr}_3$ (green dashed line), and $\textit{Tr}_5$ (cyan dashed line). Based on the map-matching details of these trajectories, since the temporal ranges for $\textit{Tr}_2$ and $\textit{Tr}_3$ are $[1, 4]$ and $[3, 6]$ respectively, only $\textit{Tr}_5$ satisfies the temporal range $[8, 9]$. Therefore, the result of $q$ is $\textit{Tr}_5$.
\end{example}

\subsection{Verifiable query methods}

A cryptographic hash function~\cite{preneel1994cryptographic,sobti2012cryptographic} takes input of any size and outputs a fixed-sized byte string.

\begin{definition}
\label{hashf}
A \textbf{cryptographic hash function}, denoted as $\textit{h}(\cdot)$, 
transforms input data $d$ into a hashed value $\textit{h}(d)$. 
The function satisfies three properties: 
(i) \textbf{preimage resistance}---given $\textit{h}(\cdot)$ and $\textit{h}(d)$, it is computationally infeasible to find $d$; 
(ii) \textbf{second preimage resistance}---given $\textit{h}(\cdot)$ and $\textit{h}(d)$, it is computationally infeasible to find $d$ and $d'$ such that $h(d)  = h(d')$; 
and (iii) \textbf{collision resistance}---given $\textit{h}(\cdot)$, it is computationally infeasible to find a pair $(d,d')$ such that $h(d)  = h(d')$.

\end{definition}

An authenticated data structure (ADS) is a data structure that allows efficient and secure verification of the integrity of data. A widely used example of an ADS is the Merkle tree~\cite{merkle1987digital}, which is a binary tree-based ADS that stores cryptographic hash values to efficiently verify data integrity.

\begin{definition}
A \textbf{Merkle tree} is a binary tree-based ADS built over a dataset $\mathcal{D} = \{d_1, \ldots, d_n\}$ that satisfies the following properties:

\begin{itemize}[leftmargin=*]
    \item Leaf nodes store a hash value $h_i = \textit{hash}(d_i)$ for  $1\leq i \leq n$.
    \item Non-leaf nodes store a hash value $h = \textit{hash}(h_{\text{left}} | h_{\text{right}})$, where $h_{\text{left}}$ and $h_{\text{right}}$ are the hash values of the left and right child nodes, respectively, and $|$ denotes concatenation.
    \item The hash value at the root node, called the \textit{Merkle root}, uniquely represents the entire dataset.
\end{itemize}

\end{definition}

\begin{figure}[]
\centerline{\includegraphics[width=0.48\textwidth]{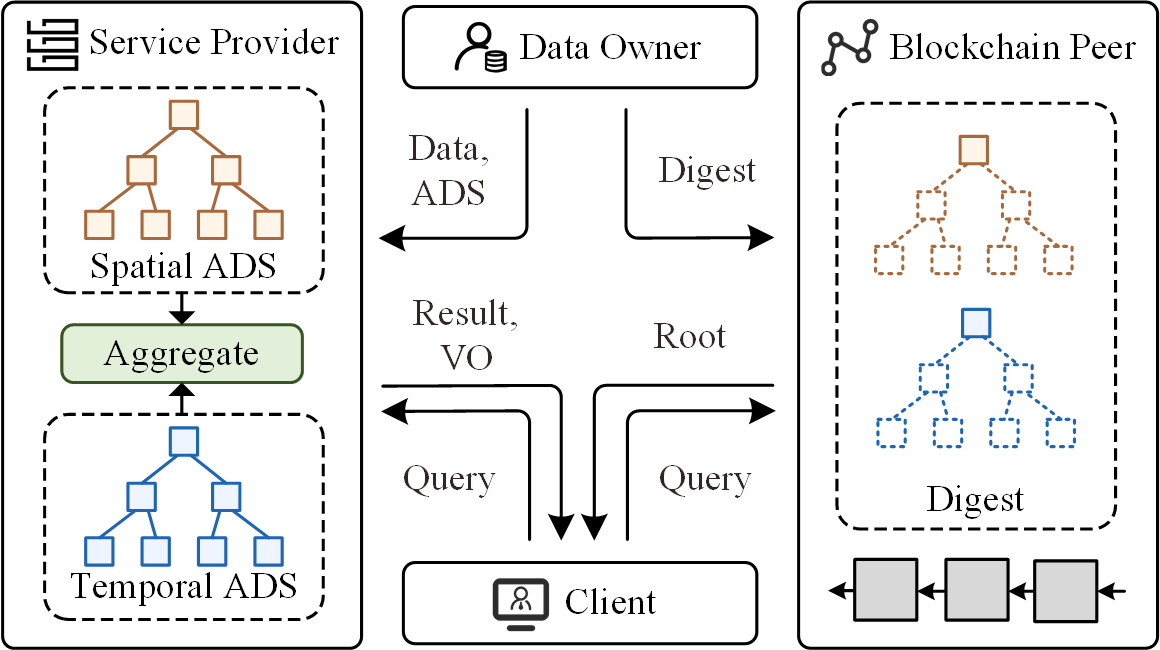}}
\caption{The \texttt{VTRQ} framework.}
\label{Fig:framework}
\end{figure}

\section{\texttt{VTRQ} framework}\label{sec:ADS}

\subsection{System model and threat model}
\textbf{Hybrid-storage blockchain system model.}
{\color{black}As shown in Fig.~\ref{Fig:framework}, the system involves four types of participants.
(i) \textit{Data owners (DO)} generate and maintain raw trajectory data. Given their limited computational and storage capabilities, data owners outsource their data to storage nodes.
(ii) \textit{Service providers (SP)} store the raw data off-chain and provide query processing services. 
 (iii) \textit{Clients} submit queries to service providers to retrieve information from storage nodes.
(iv) \textit{Blockchain peers (BP)} maintain the on-chain authenticated digests of the outsourced data using a blockchain.

DOs first outsource the trajectory data to SP while they construct ADSs and upload them to the SP. The digests of these ADSs (i.e., their root hashes, which serve as fingerprints) are stored on the BPs to enable verification. 
When a client issues a spatio-temporal query, it is decomposed into spatial and temporal components and processed separately  using two specialized ADSs: S-ADS and T-ADS. 
Both candidate trajectory sets returned are intersected and further filtered at a fine granularity based on query constraints. Finally, the SP returns the query results along with VOs. 
The client uses these VOs to reconstruct the root digests and verifies them against the blockchain-stored digests to ensure completeness and soundness.
The intersection and fine-grained filtering can also be locally reproduced by the client to validate the process.

}


{In \texttt{VTRQ}, the blockchain serves as a third-party trust anchor that stores and publishes the digests of two ADSs for query-result verification; it does not store raw data. \texttt{VTRQ} is blockchain-agnostic and does not rely on any specific platform or consensus protocol, as long as the blockchain provides append-only and tamper-evident storage and publication. Therefore, configuration choices such as the consensus mechanism, validator set, and tolerated number of malicious nodes are deployment-level concerns and do not affect the correctness of the proposed verification mechanism.}

\begin{figure*}[]
  \centering
  \begin{tikzpicture}
    \node[anchor=north east,inner sep=0] (img1) at (0,0)
      {\includegraphics[width=0.95\textwidth]{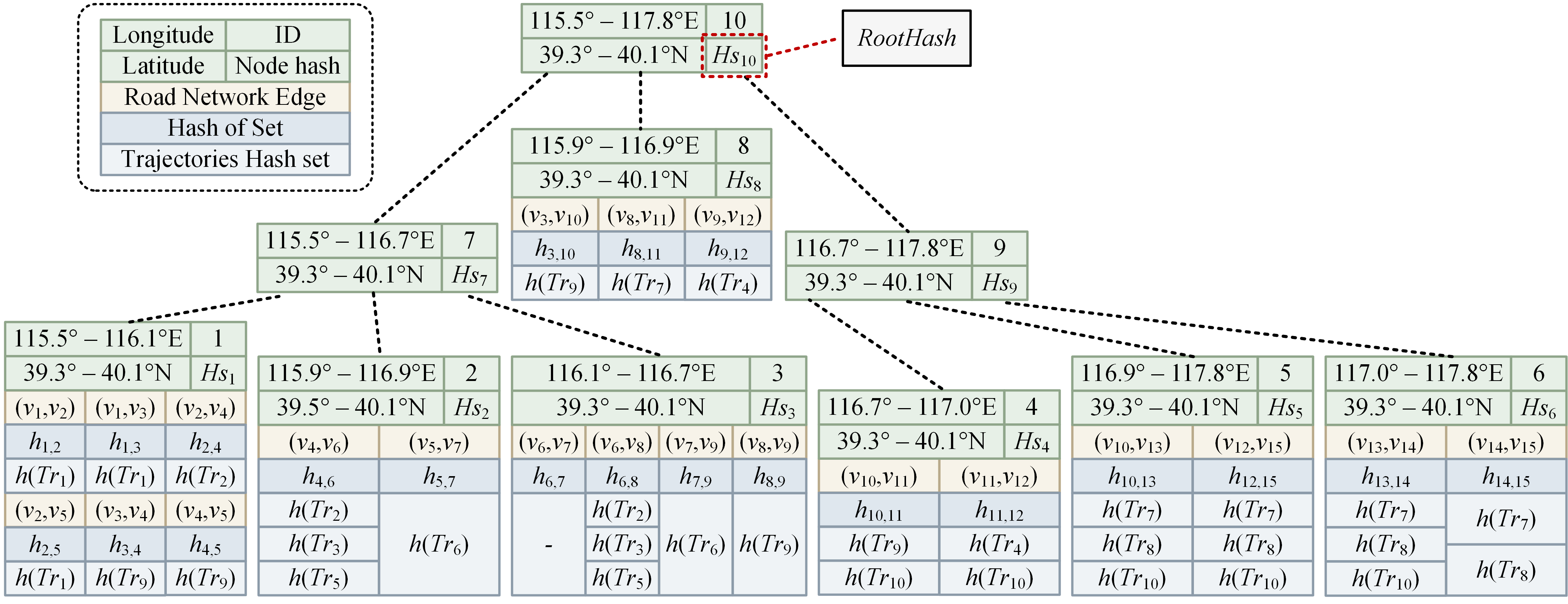}};
       \node[below=0.05cm of img1.south, align=center] {\small(a) S-ADS.};

       \node[anchor=north east,inner sep=0] (img2) at (0,0)
      {\includegraphics[width=0.31\textwidth]{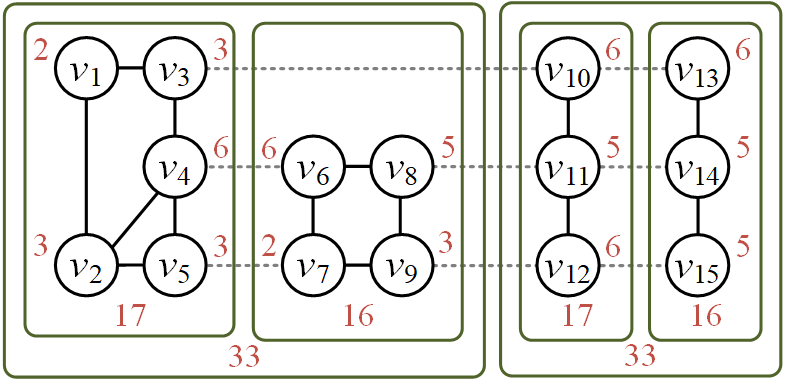}};
   \node[anchor=north, yshift=0cm] at (img2.south)
  {\small (b) Road network partition.};
  \end{tikzpicture}

  \caption{An example of S-ADS.}

  \label{Fig:spatial_ads}
\end{figure*}

\noindent\textbf{Threat model.}
According to existing studies~\cite{wu2023enabling,zhang2019gem,zhang2021authenticated,li2024authenticated1,li2024authenticated2,zhou2023veridkg}, in blockchain-based data outsourcing architectures, the clients, DOs, and BPs are generally considered trustworthy.
However, the SP, as an external service resource, may return tampered or incomplete query results due to program failures, security vulnerabilities, or commercial interests, which can compromise query correctness. Therefore,  following the setting in~\cite{xu2019vchain,zhou2023veridkg}, we treat the SP as an untrusted peer. 
The goal of verifiable query processing is to enable the SP to prove that its query results satisfy completeness and soundness. \textit{Completeness} ensures that no valid results are omitted in a trajectory range query, while \textit{soundness} means that all returned results are authentic and meet the query conditions.


\subsection{Road-aware spatial ADS}

\subsubsection{Construction}

Inspired by~\cite{wang2022road}, we represent a trajectory as a sequence of edges with corresponding timestamps (see Definition~\ref{def:trajectory}) and index trajectories based on these edges. Building on this representation, we develop a road-aware spatial ADS that can efficiently prune irrelevant trajectories and effectively support both range queries and the verification of query results.

\begin{definition}
Given a road network $G = (V, E)$ and a set of trajectories $S_\textit{Tr} = \{\textit{Tr}_1, \textit{Tr}_2, \dots, \textit{Tr}_n\}$, a \textbf{road-aware spatial ADS (S-ADS)} \textit{Ts} is recursively represented as a triple tree:

\[\small
\textit{Ts} =
\begin{cases}
([x,x'], [y,y'], E_h,\textit{Hs}({\textit{Ts}}))   \quad\quad\quad \text{if $\textit{Ts}$ is a leaf node} \\
([x,x'], [y,y'], (\textit{Ts}_0, \textit{Ts}_l, \textit{Ts}_1),\textit{Hs}({\textit{Ts}}))   \text{\quad\quad   otherwise}
\end{cases} 
\]


\begin{itemize} [leftmargin=*]
\item $[x,x']$ is a longitude range.
\item $[y,y']$ is a latitude range.
\item $E_h$ is a set of edges. Each element in $E_h$ is $(v_i,v_j,H(\textit{Tr})_{i,j},h_{i,j})$, where $v_i,v_j \in V$, $H(\textit{Tr})_{i,j}$ is a set of hashes of trajectories passing through the edge $v_i\rightarrow v_j$, and $h_{i,j}=h(h(\textit{Tr}_a)\|h(\textit{Tr}_b)\|h(\textit{Tr}_c) \| \dots)$ is an aggregate hash. $h(\textit{Tr}_a),h(\textit{Tr}_b),h(\textit{Tr}_c) \in H(\textit{Tr})_{i,j}$.

\item $\textit{Ts}_0$, $\textit{Ts}_l$, and $\textit{Ts}_1$ are the left, middle, and right subtrees of $Ts$, respectively. 
$\textit{Ts}_l$ is a leaf node, which stores the border edges between the two subgraphs obtained after partitioning the road network. $\textit{Ts}_0$ and $\textit{Ts}_1$ have the same structure as $Ts$. At the lowest level, the left and right subtrees are composed of leaf nodes that store the edges corresponding to their spatial regions.
\item $\textit{Hs} (\textit{Ts})$ is the hash value of the node:
\[\small
\textit{Hs}({\textit{Ts}}) =
\begin{cases}
h([x,x']\|[y,y']\|E_h)  \text{\;\;\;\;\;\;\;\;\;\;\;\;\;\;\;if \textit{Ts} is a leaf node} \\
h([x,x']\|[y,y']\|\textit{Hs}_{\textit{node}})   \text{\;\;\;\;\;\;\;\;\;\;\;\;\;\;\;\;\;\;\;\;\;\;\;\;\;otherwise}
\end{cases}
\]
where  $\textit{Hs}_{\textit{node}}=\textit{Hs}({\textit{Ts}_0})\|\textit{Hs}(\textit{Ts}_l)\|\textit{Hs}(\textit{Ts}_1)$.
\end{itemize}
\end{definition}

To build S-ADS, we recursively partition the road network using the method proposed in~\cite{wang2022road}. Edges shared between the two subgraphs are referred to as border edges. The weight of each edge is  the number of trajectories traversing it, and the weight of a vertex is the sum of the weights of its incident edges. A binary search strategy is employed to select a splitting point that minimizes the difference in trajectory weights between the resulting subgraphs. The graph is then divided into left and right subgraphs at this point. This process repeats recursively until the number of vertices in each leaf node falls below or equals a threshold $\theta$, which controls the size of the leaf nodes. The partitioning scheme can be extended to support various constraints~\cite{hendrickson1995multi,karypis1995analysis,karypis1998multilevel}, such as minimizing cut edges or balancing subgraph sizes.

\begin{example}
Continuing Fig.~\ref{Fig:definition}, Fig.~\ref{Fig:spatial_ads}(a) provides an example of S-ADS $\textit{Ts}$. 
Given a threshold $\theta=5$, the partitioning of $\textit{Ts}$ is shown in Fig.~\ref{Fig:spatial_ads}(b). 
We first calculate the weight of each node and subgraph, i.e., the red values.
Take $v_4$ as an example:  $v_4$ is connected by four edges, i.e., $v_2 \rightarrow v_4$, $v_3 \rightarrow v_4$, $v_4 \rightarrow v_5$, and $v_4 \rightarrow v_6$.
According to the number of trajectories passing through each edge in Fig.~\ref{Fig:definition}, their respective weights are 1, 1, 1, and 3. Therefore, the total weight of vertex $v_4$ is 6.
After calculating the weights of all vertices, we partition the road network. Since the longitudinal range from 115.5°E to 117.8°E is longer, we partition along the longitude direction.
After~splitting at $v_3 \rightarrow v_{10}$, $v_8 \rightarrow v_{11}$, and $v_9 \rightarrow v_{12}$, the~graph~can~be~divided into two subgraphs. However, the number of vertices in both subgraphs exceeds the threshold~$\theta$.~We~continue to partition them following the same procedure. Finally, the resulting four subgraphs, consisting of the node sets $\{v_1,v_2,v_3,v_4,v_5\}$, $\{v_6,v_7,v_8,v_9\}$, $\{v_{10},v_{11},v_{12}\}$, and $\{v_{13},v_{14},v_{15}\}$, 
all have at most $\theta$ vertices.

Using this partitioning, a $\textit{Ts}$ can be constructed. There are two types of vertices in $\textit{Ts}$: leaf vertices and internal vertices. For ease of description, we assign a unique identifier to each node in this example. 
Take node 10 as an example.
It is both the root node and an internal node, and it has three child nodes. Among them, nodes 7 and 9 are internal nodes, while node 8 is a leaf node. This leaf node stores the border edges: $v_3 \rightarrow v_{10}$, $v_8 \rightarrow v_{11}$,  and $v_9 \rightarrow v_{12}$. Node 7 also has three child nodes, all of which are leaf nodes. However, only node 2 records the border edges connecting its two subgraphs. Node 1 and node 3 store the edges within the subgraphs formed by  $\{v_1,v_2,v_3,v_4,v_5\}$ and $\{v_6,v_7,v_8,v_9\}$, respectively.
\end{example}

\begin{figure*}[]
\centering
\subfigure[Example]{\includegraphics[width=0.61\textwidth]{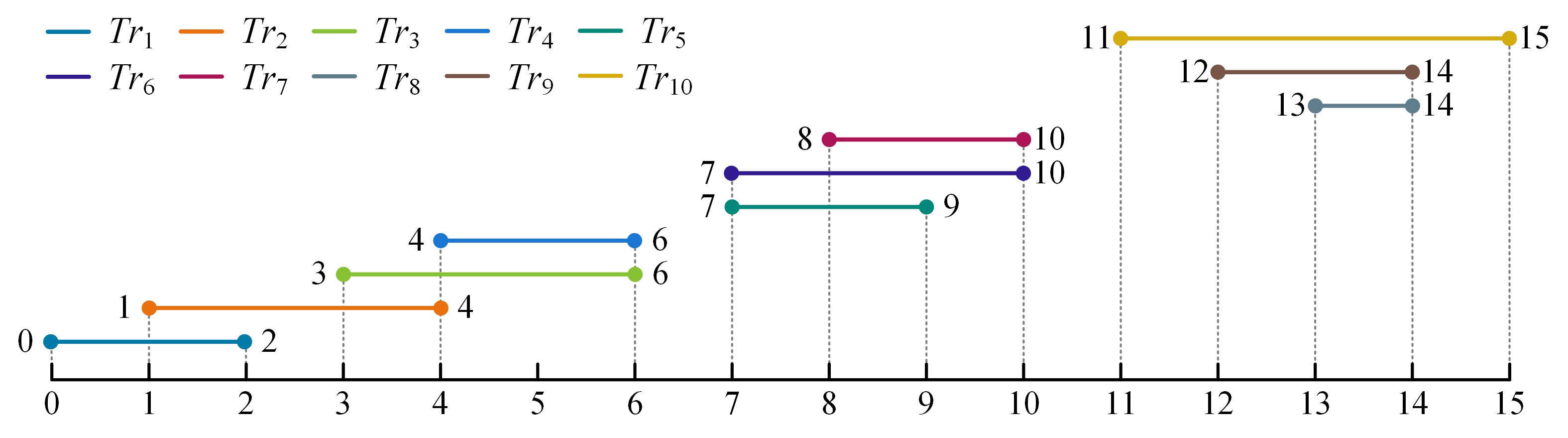}}\hspace{0.3cm}
\subfigure[T-ADS]{\includegraphics[width=0.35\textwidth]{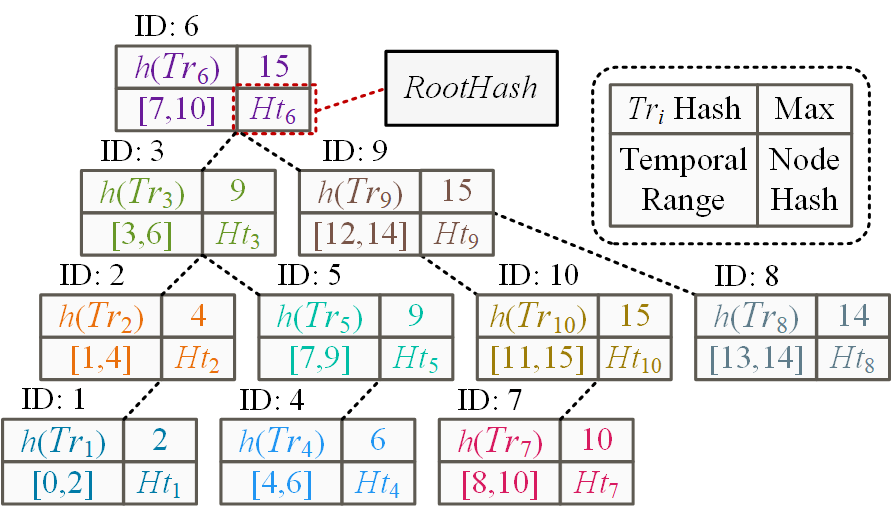}}

\caption{An example of T-ADS.}

\label{fig:temporal}
\end{figure*}

To enhance the verifiability support of the S-ADS, our method introduces three key improvements: (i) it separates spatial and temporal attributes by removing timestamps from the spatial index, allowing temporal and spatial data to be indexed and queried independently, which improves overall efficiency; (ii) it constructs MBRs for each group of border edges, enabling early filtering of non-matching MBRs during query processing and thus reducing unnecessary spatial computations; and (iii) it employs a three-level hierarchical hashing scheme to improve reusability, where each trajectory is first hashed individually, then edge-level hashes are generated by aggregating the hashes of all trajectories passing through each edge, and finally, node-level hashes are computed by combining the hashes of associated edges with relevant metadata. 

\subsubsection{Update}
Similar to~\cite{wang2022road}, when updating the trajectory dataset, a batch of new raw trajectories is first map-matched to the road network. Their hash values are then computed, and each map-matched trajectory edge is indexed on the leaf node. The weights of each subgraph are updated accordingly, which may lead to an imbalance in the number of trajectories across ADS nodes.
After updating the subgraph weights, unbalanced nodes are identified based on the partitioning optimization criteria. We then remove from this set any nodes whose ancestors are already included, retaining only the minimal set of affected nodes. For each remaining node, the partitioning is adjusted: if the left child’s weight exceeds that of the right child, the split boundary is shifted left; otherwise, it is shifted right, until balance is achieved. This process recurses until the leaf node size reaches the stopping threshold.
Finally, for the updated nodes, edge hashes are computed by concatenating and hashing the trajectory hashes. The node hashes are computed by concatenating and hashing the edge hashes. The node hash values are propagated upward from the modified nodes to the root.

\subsubsection{Time complexity analysis}

Given a spatial range query, the system traverses from the root of the S-ADS tree to locate all nodes whose regions intersect with the query range. Let $N$ denote the number of nodes in the ADS tree. Suppose the query overlaps with $s$ leaf nodes, each containing up to $k$ trajectory edges, then the candidate set size is $O(s \cdot k)$. Therefore, the overall query time complexity is $O(\log N + s \cdot k)$, where $\log N$ accounts for tree traversal and $s \cdot k$ for filtering and matching.

\subsection{Interval tree-based temporal ADS}

\subsubsection{Construction}
The RP-Tree~\cite{wang2022road} stores the timestamps of all trajectories at each vertex in the leaf nodes. As the number of trajectories increases, the number of timestamps grows significantly, resulting in very slow query performance. If the RP-tree is adapted as a verification tree for trajectory range queries, the verification process will become inefficient due to some timestamps being unnecessary. Therefore, in designing the S-ADS, we did not store any time information. Instead, we additionally designed an interval tree-based temporal ADS to record only the start and end times of each trajectory~\cite{yao2024tsec}, which reduces the large number of timestamps in the structure. This design improves query efficiency and reduces storage overhead. 
The T-ADS adopts a red-black tree structure for balanced indexing and efficient updates, ensuring logarithmic query and insertion complexity~\cite{finis2015indexing}.
Moreover, it provides more flexible query capabilities, since we do not embed time information into the spatial index, nor do we embed spatial information into the temporal index. As a result, purely spatial queries and purely temporal queries can be supported independently.

\begin{definition}

Given a set of trajectories $S_\textit{Tr} = \{\textit{Tr}_1, \textit{Tr}_2, \dots, \textit{Tr}_n\}$, where each $T_i$ has a  sequence of road network nodes and their associated timestamps  $(v_k, t_k)$, an \textbf{interval tree-based temporal ADS (T-ADS)} \textit{Tt} is recursively represented as a binary tree:
\[
\textit{Tt} =
(h(\textit{Tr}_w), [t,t'],t_{max},(\textit{Tt}_0,\textit{Tt}_1),\textit{Ht}({\textit{Tt}})).
\]

\begin{itemize} [leftmargin=*]
\item $h(\textit{Tr}_w)$ is the hash value of a trajectory $\textit{Tr}_w$, which can uniquely identify the trajectory.
\item $[t, t']$ is a temporal range, where $t$ is the start time of the $\textit{Tr}_w$ and $t'$ is the end time of the $\textit{Tr}_w$.
\item $t_{max}$ is the maximum timestamp among all nodes in the tree $\textit{Tt}$.

\item $\textit{Tt}_0$ and $\textit{Tt}_1$ are the left and right subtrees of $\textit{Tt}$, respectively. $\textit{Tt}_0$ and $\textit{Tt}_1$ have the same structure as $T_t$.

\item $\textit{Ht} (\textit{Tt})$ is the hash value of the node:
\[\small
\textit{Ht}({\textit{Tt}}) =
\begin{cases}
h(\textit{Tr}_w\|[t,t']\|t_{min}\|t_{max}) \text{\quad\quad  if \textit{Tt} is a leaf node} \\
h(\textit{Tr}_w\|[t,t']\|t_{min}\|t_{max}\|\textit{Ht}_{\textit{node}})  \text{ \quad \quad\;\;otherwise}
\end{cases}
\]
where   $\textit{Ht}_{\textit{node}}=\textit{Ht}({\textit{Tt}_0})\|\textit{Ht}(\textit{Tt}_1)$.

\end{itemize}
\end{definition}

\begin{example}
Fig.~\ref{fig:temporal}(a) shows the time intervals of the 10 trajectories from Fig.~\ref{Fig:definition}. The axes represent their timestamps. Fig.~\ref{fig:temporal}(b) is the Ts-tree corresponding to  Fig.~\ref{fig:temporal}(a), constructed based on these time intervals. For convenience, we use $i$ as the node label; it is the index of the trajectory $\textit{Tr}_i$.  
Node 6 is an internal node and also a root node. It has two child nodes, 3 and 9. Interval $[7,10]$ represents the start time of trajectory $\textit{Tr}_6$. Next, 15 is the maximum value of the subtree, and $\textit{Ht}_6$ is the hash value combining the information of $\textit{Tr}_6$ with the information of the subtree. Node~1 is a leaf node, and its $\textit{Ht}_1$ only contains the information of trajectory~1.

Given a query time interval $[5,6]$, we start from the root node, whose temporal range is $[7,10]$. Although this range does not intersect with the query interval, its subtree maximum value is 15, which exceeds the lower bound of the query (5), requiring us to continue searching its left subtree. Node 3 has a temporal range of $[3,6]$, overlapping with the query, and a maximum value of 9, which is also greater than 5; therefore, we proceed to its children. Among them, node 4 has a temporal range of $[4,6]$ that overlaps with the query and stores trajectory $Tr_4$, which is returned as a result. The left child of node 3 (node 1) has a temporal range of $[0,2]$, which does not intersect with the query interval, and its maximum value is 2, smaller than 5, so it can be safely pruned. Returning to the right child of the root node (node 9), its temporal range is $[12,14]$, which does not intersect the query interval. Although its maximum value is 15, larger than 5, there is no overlap with $[5,6]$, so no further traversal is necessary. Ultimately, only trajectory $Tr_4$ matches the query interval $[5,6]$.
\end{example}


\subsubsection{Update}

In a T-ADS, the update process for inserting a new interval transaction consists of three main steps. First, the interval is inserted into the interval tree, which is built upon a red-black tree structure. To maintain balance, the red-black tree may perform rotations (left or right) to restructure the tree and recoloring operations to preserve red-black properties—such as preventing consecutive red nodes and ensuring consistent black-height across all paths. Second, the $\textit{Max}$ field of each affected node along the insertion path is updated upward to reflect the maximum end time among all intervals in its subtree. Finally, the hash values along the insertion path are recomputed from the bottom up, culminating in an updated $\textit{RootHash}$ at the top of the structure to preserve the completeness and soundness of the data. 

\subsubsection{Time complexity analysis}

Inserting a new interval involves three steps: (i) inserting the interval into the red-black tree and performing necessary rebalancing, which takes $O(\log n)$ time, where $n$ denotes the current number of intervals in the tree; (ii) updating the $\textit{Max}$ fields along the insertion path, affecting at most $\log n$ nodes; (iii) recomputing the hash values of these affected nodes to preserve the integrity of the authenticated data structure, which also involves at most $\log n$ nodes. Since each step operates within the tree's height, the overall time complexity of the insertion is $O(\log n)$.
Given a query time interval, we traverse the tree to locate all intervals that intersect with it. Each query path has a depth of $O(\log n)$. Let $k$ be the number of matched intervals, then the time complexity of the query is $O(\log n + k)$.

\section{Query and verification}\label{sec:query_verification}

\subsection{Spatio-temporal edge aggregation}\label{sec:aggregation}


After independently retrieving candidate sets from S-ADS and T-ADS, we can take the intersection of the two sets to obtain the query result. However, this approach overlooks cases where only part of an edge overlaps with a query interval. For example, in Fig.~\ref{Fig:introduction}, consider  again the query $q = (([4, 7], [3, 6]), [4, 5])$. Suppose a leaf node covers the region $([3,7], [3,6])$, and $\textit{Tr}_1$ is among the candidate trajectories. The index compares the time interval $[2,7]$ of $\overline{p_{1,2}p_{1,3}}$—the sub-trajectory of $\textit{Tr}_1$ within the leaf node region—with the query interval $[4,5]$. Since $[2,7]$ covers $[4,5]$, the system considers it a match. In reality, only the edge $\overline{p' p_{1,3}}$ actually lies within the spatial range, and its time interval begins after $5$, which does not overlap with $[4,5]$. Thus, $\textit{Tr}_1$ does not satisfy the~query.


\begin{algorithm}
\caption{Edge aggregation}
\label{alg:stea}
\KwIn{Spatial result set $\mathcal{R}_s$, Temporal result set $\mathcal{R}_t$, spatial region $\mathcal{S}$, temporal interval $T$}
\KwOut{Spatio-temporal result set $\mathcal{R}$}

$\mathcal{R}' \gets \mathcal{R}_s \cap \mathcal{R}_t$\;
$\mathcal{R} \gets \emptyset$\;

\ForEach{$\textit{Tr}_n \in \mathcal{R}'$}{
    \ForEach{edge $e = (v_i, v_j)$ in $\textit{Tr}_n$}{
        $T_e \gets$ time interval of $e$\;
        
        \uIf{$T_e \subseteq T$ \textbf{and} $e \in (E_{ri} \cup E_{rc})$}{
            $\mathcal{R} \gets \mathcal{R} \cup \{e\}$\;
        }
        \uElseIf{$T_e \cap T \neq \emptyset$}{
            \uIf{$e \in E_{ri}$}{
                $\mathcal{R} \gets \mathcal{R} \cup \{e\}$\;
            }
            \uElseIf{$e \in E_{rc}$}{
                Compute intersection points $p_{in}, p_{out}$\;
                Estimate timestamps $t_{in}, t_{out}$\;
                \If{$[t_{in}, t_{out}] \cap T_q \neq \emptyset$}{
                    $\mathcal{R} \gets \mathcal{R} \cup \{e\}$\;
                }
            }
        }
    }
}
\Return $\mathcal{R}$\;
\end{algorithm}

To address this problem, we propose a spatio-temporal edge aggregation mechanism (STEA) to achieve fine-grained alignment between spatial and temporal attributes in verifiable trajectory range queries.
We first perform independent queries on the S-ADS and T-ADS to obtain result sets $ \mathcal{R}_s $ and $ \mathcal{R}_t $, which satisfy the spatial and temporal constraints, respectively. The STEA process then proceeds in two main steps: (i) compute the intersection $ \mathcal{R}' = \mathcal{R}_s \cap \mathcal{R}_t $ to derive an initial set of candidate trajectories; and (ii) for each trajectory in $ \mathcal{R}' $, examine all of its edges to verify whether any edge simultaneously satisfies both the spatial and temporal query conditions.

To support this aggregation, we categorize the relationship between an edge $e = (v_i, v_j)$ and the spatial region $\mathcal{S}$ into three types:

\begin{definition}
For an edge $e = (v_i, v_j)$, if both $v_i$ and $v_j$ are located within the spatial query region $\mathcal{S}$, $e$ is said to be \textbf{inside} $\mathcal{S}$. All such edges are denoted as $E_{ri}$.
\end{definition}

\begin{definition}
For an edge $e = (v_i, v_j)$, if $v_i$ is inside $\mathcal{S}$ and $v_j$ is outside $\mathcal{S}$, or vice versa, or both are outside but $e \cap \mathcal{S} \neq \emptyset$, then $e$ is said to \textbf{intersect} $\mathcal{S}$. All such edges are denoted as $E_{rc}$.
\end{definition}

\begin{definition}
For an edge $e = (v_i, v_j)$, if $e \notin (E_{ri} \cup E_{rc})$ then $e$ is considered \textbf{outside} $\mathcal{S}$. These edges are denoted as $E_{ro}$, and defined by $E_{ro} = E - E_{ri} - E_{rc}$.
\end{definition}

Based on the above classification and the edge’s temporal overlap with the query time range, STEA filters candidate results through the following steps:

\begin{itemize}[leftmargin=*]
    \item \textbf{Full temporal inclusion with spatial overlap:} If an edge's time interval is completely within the query time range and it belongs to $E_{ri} \cup E_{rc}$, the edge is retained.

    \item \textbf{Partial temporal overlap:} If the time interval of the edge overlaps with the query's time interval, then:
    \begin{itemize}[leftmargin=*]
        \item If it belongs to $E_{ri}$, it is retained directly.
        \item If it belongs to $E_{rc}$, the spatial intersection points between the edge and the query region are computed. The corresponding timestamps are estimated using linear interpolation. If the interpolated time interval of spatial overlap intersects with the query time range, the edge is retained.
    \end{itemize}

\end{itemize}

Algorithm~\ref{alg:stea} describes the process of STEA. Given the spatial result set $\mathcal{R}_s$, temporal result set $\mathcal{R}_t$, spatial region $\mathcal{S}$, and temporal interval $T$, the algorithm first computes the initial candidate set $\mathcal{R}' = \mathcal{R}_s \cap \mathcal{R}_t$ (Lines~1-2). It then iterates through each trajectory in $\mathcal{R}'$ and examines all of its edges to verify their spatio-temporal validity (Lines~3--15).
For each edge, if its time interval is fully contained within $T$ and it has spatial overlap (i.e., it belongs to $E_{ri}$ or $E_{rc}$), it is directly retained (Lines~5-6). If the time interval partially overlaps with $T$, then: (i) if the edge is spatially inside $\mathcal{S}$ (i.e., $e \in E_{ri}$), it is also retained (Lines~8--9); (ii) if it intersects $\mathcal{S}$ (i.e., $e \in E_{rc}$), the algorithm computes the timestamps of the spatial intersection points. If the interpolated time interval overlaps with $T$, the edge is retained as a valid spatio-temporal segment (Lines~11--14).

\subsection{Verification of query results}

Once the query execution is completed, the system must verify the soundness and completeness of the result before it can be trusted. The verification process consists of two stages: (i) verifying the correctness of the query results obtained from the two ADSs; and (ii) verifying the correctness of the final result after aggregation.

\noindent\textbf{Verification of ADS query results.}
The system uses two ADSs, corresponding to the spatial and temporal indexes, to support verifiable range queries. Each internal node stores a cryptographic hash summarizing its child nodes, while each leaf node contains a hash digest of its own content.

During query execution, each ADS performs a depth-first traversal and returns not only the matched result set but also a VO, which includes: (i) the query path from the matched leaf nodes to the root; (ii) the authentication hashes of all sibling nodes along the path. When a query returns no results (i.e., the result set is empty), the ADS still provides a verification object to prove that the empty result is trustworthy. Specifically, the VO includes a Merkle verification path that demonstrates the query range falls outside or between the indexed intervals and does not match any leaf node.
Using this information, the client can locally reconstruct the ADS root hash $\textit{RootHash}$ and compare it with the trusted root hash $\textit{RootHash}'$ published on the blockchain. If $\textit{RootHash} = \textit{RootHash}'$, the query result is considered correct and complete with respect to the corresponding ADS. This verification is performed independently for both the S-ADS and T-ADS.

\noindent\textbf{Verification of aggregated results.}
After verifying the query results from the S-ADS and T-ADS, the client can re-execute the spatio-temporal aggregation process locally, as described in Section~\ref{sec:aggregation}, to further ensure that the returned trajectory edges satisfy the combined query constraints. This re-execution incurs minimal overhead and does not impose a substantial burden on the client.

\section{Security analysis}\label{sec:verifiability}

{

We provide the formal security definition of \texttt{VTRQ} in line with common security standards for verifiable queries~\cite{zhou2023veridkg,10.65109/GPHO5000,yao2025vgq}. 

\begin{definition} \label{query_ve} (Query verifiability). A trajectory range query is verifiable if the probability that any polynomial-time adversary $\mathcal{A}$ succeeds in the following experiment is negligible:

Given a trajectory range query $q$, $\mathcal{A}$ is picked as the service provider responsible for executing trajectory queries, and $\mathcal{A}$ executes $q$ and produces a result set $\mathcal{R}$ with proof $VO$. $\mathcal{A}$ succeeds if: (i) An element exists in $\mathcal{R}$ that does not satisfy $q$ (violating soundness); or (ii) there exists a result element that satisfies $q$ but is not in $\mathcal{R}$ (violating completeness). \end{definition}

\begin{theorem} \label{theo} The trajectory range query is verifiable under Definition~\ref{query_ve} if the following assumptions hold: (i) the hash function is a pseudo-random function, and (ii) the blockchain peers are trustworthy.
\end{theorem}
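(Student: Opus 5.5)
We argue by contradiction and reduce any successful adversary $\mathcal{A}$ in the experiment of Definition~\ref{query_ve} to a hash collision (or a forged on-chain digest). By assumption (ii), the root digests $\textit{RootHash}'_s$ of the S-ADS and $\textit{RootHash}'_t$ of the T-ADS stored by the blockchain peers are exactly those computed by the data owner. The client accepts only if the root hashes it reconstructs from $\mathcal{R}$ and the VO equal these digests. We then use the three-stage pipeline: S-ADS query, T-ADS query, and the deterministic STEA re-executed by the client. A soundness or completeness violation of the final result must therefore come from a violation at one of these stages.

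\textbf{Step 1: reduction to the ADS stages.} Algorithm~\ref{alg:stea} is deterministic and run locally by the client. Suppose $\mathcal{R}_s$ and $\mathcal{R}_t$ are exactly the true spatial and temporal answers, with the true trajectory contents bound by $h(\textit{Tr})$. Then the output $\mathcal{R}$ is exactly the answer to $q$. This follows from the case analysis of STEA: edges in $E_{ri}\cup E_{rc}$ with temporal overlap are handled either directly or by interpolation of $t_{in},t_{out}$. Conversely, any trajectory that truly satisfies $q$ intersects $\mathcal{S}$ and has $[t,t']\cap\mathcal{T}\neq\emptyset$, so it lies in both true candidate sets. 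Hence it suffices to show that $\mathcal{A}$ cannot make the client accept a tampered $\mathcal{R}_s$ or $\mathcal{R}_t$ except with negligible probability.

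\textbf{Step 2: soundness per ADS.} Suppose a returned trajectory is fabricated or altered, i.e., it is not the committed $\textit{Tr}$, or it is attached to a wrong edge or interval. Recomputing $h(\textit{Tr})$, the edge hash $h_{i,j}$, $\textit{Hs}$ up the three-level S-ADS hierarchy, or $\textit{Ht}$ up the T-ADS path then yields the genuine root. In that case, at the lowest level where the recomputed input differs from the genuine one, two distinct concatenated strings hash to the same value. Because each node hash includes its ranges $[x,x'],[y,y']$, or $[t,t'],t_{max}$, tampering with this metadata is caught the same way. A collision for a pseudo-random function (treated as a random oracle) is found by a polynomial-time adversary only with negligible probability, so this case has negligible success.

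\textbf{Step 3: completeness per ADS — the main obstacle.} The VO must commit to every pruned subtree. For the S-ADS, each sibling or pruned node appears via its hash together with its authenticated MBR $[x,x']\times[y,y']$. For the T-ADS, each pruned node appears with its authenticated $[t,t']$ and $t_{max}$. The client re-checks that every pruned subtree is legitimately disjoint from the query: the MBR does not intersect $\mathcal{S}$, or $t_{max}<t_{start}$ (and, for right-subtree pruning, the BST order on start times exceeds $t_{end}$). If $\mathcal{A}$ omits a matching trajectory, one of two things must happen. Either it reports a pruned node whose authenticated metadata actually overlaps the query, which the client rejects. Or it forges that metadata, which again yields a hash collision along the path to the root. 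The delicate part is making precise that the client's pruning checks exactly mirror the traversal rules, especially the interval-tree rule with $t_{max}$ and the empty-result case. We will state these checks explicitly as part of verification. Combining Steps 1--3 with a union bound over the polynomially many hash evaluations, $\mathcal{A}$'s success probability is negligible, which proves Theorem~\ref{theo}.
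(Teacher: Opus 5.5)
Your proof is correct at the level of a sketch, and it shares the paper's skeleton: a contradiction in which every soundness or completeness violation becomes either a hash collision (Assumption (i)) or a tampered on-chain digest (Assumption (ii)).

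The paper stops at that skeleton. It handles both cases uniformly by asserting that an incomplete or forged $\mathcal{R}_s$ or $\mathcal{R}_t$ that still reconstructs the published root must be a collision. It never mentions the aggregation step inside the proof.

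You add two ingredients the paper leaves implicit.

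\textbf{Step~1 (reduction to the two ADSs).} You reduce end-to-end correctness to per-ADS correctness. This uses three facts: Algorithm~\ref{alg:stea} is deterministic, it is correct under the linear-movement model, and the client re-runs it on trajectories bound by $h(\textit{Tr})$. This is what justifies reasoning only about S-ADS and T-ADS.

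\textbf{Step~3 (what completeness needs).} Suppose the VO contains only the query path and sibling hashes, as the paper's verification subsection describes. Then an omitted matching subtree can be handed over as its genuine hash, the root still matches, and no collision arises. Completeness follows only if two things hold:
\begin{itemize}
\item every pruned node is opened to its hashed metadata ($[x,x'],[y,y']$, or $[t,t']$ and $t_{max}$) together with the rest of its hash preimage;
\item the client re-checks the pruning rule on that metadata.
\end{itemize}

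The paper's version buys brevity and independence from the VO format. Yours buys an argument in which the completeness case genuinely reduces to collision resistance.

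Two points to tighten:
\begin{itemize}
\item The Merkle step should pick the first level, going upward from the tampered data, at which differing inputs give equal outputs. The lowest level at which inputs differ is just the tampered leaf, whose hash may well differ.
\item You, like the paper, are really using collision resistance. A pseudo-random-function assumption does not literally provide that; your explicit random-oracle reading is what closes the gap.
\end{itemize}
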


\begin{proof}We prove the theorem by contradiction, addressing two cases:

\underline{Case 1}. An element of the result that satisfies $q$ is missing from $\mathcal{R}$, violating the completeness.

\underline{Case 2}. $\mathcal{R}$ contains an element that does not satisfy $q$, violating the soundness.

During the verifiable query process, the SP queries the S-ADS and T-ADS. From the S-ADS, the SP obtains an intermediate result $ \mathcal{R}_s $ along with the verification object $ \textit{VO}_s$. From the T-ADS, an intermediate result $ \mathcal{R}_t $ and its proof $ \textit{VO}_t$ are obtained. The client combines each result with its corresponding proof to reconstruct the root digests of the ADSs, which are then compared against the digests stored on the blockchain to verify the completeness and soundness of the results.
If any portion of the result is missing, this implies that the ADS allows the construction of an incomplete result that still produces the same root digest as the correct one, thereby violating completeness (Case 1). 
If the result is tampered with or forged, it suggests that two distinct query results can be derived from the ADS while producing the same root digest, thereby violating soundness (Case 2).
If either Case 1 or Case 2 occurs, one of two contradictions arises:

\begin{itemize}[leftmargin=*]
\item The ADS digest root is generated by a cryptographic hash function. If two distinct values produce the same digest, this implies a hash collision, contradicting Assumption i.
\item The ADS digest root stored on the blockchain has been tampered with, indicating that a majority of peers maintaining the blockchain are malicious and have gained control over the system, contradicting Assumption ii.
\end{itemize}

Thus, under the assumptions, $\mathcal{A}$ cannot violate verifiability.
\end{proof}
}

{
\begin{table*}[!t]
\centering

\caption{Trajectory query time (s) for all methods.}

\begin{tabular}{cccccccccccccccc}
   \toprule

\multicolumn{2}{c}{Dataset} & \multicolumn{3}{c}{Xi'an} & \multicolumn{3}{c}{Chengdu} &\multicolumn{3}{c}{Geolife}\\
\cmidrule(lr){1-2} \cmidrule(lr){3-5} \cmidrule(lr){6-8} \cmidrule(lr){9-11}
 \textbf{Tem}&\textbf{Spa} & \textbf{1km} & \textbf{2km} & \textbf{3km}  & \textbf{1km} & \textbf{2km} & \textbf{3km} & \textbf{1km} & \textbf{2km} & \textbf{3km} \\

\midrule
\multirow{4}{*}{\rotatebox{90}{\textbf{1 min}}}

& MBT
& 0.488  & 0.935 & 1.321
& 0.506 & 0.856 & 1.261 
& 0.245 & 0.300 & \underline{0.528}  \\

& MPT
& 0.822 & 2.449 & 4.967
& 0.903 & 2.841 & 5.557
& \underline{0.117} & \underline{0.225} & 0.905  \\

& RPMT 
& \underline{0.098} & \underline{0.211} & \underline{0.319}
& \underline{0.110} & \underline{0.302} & \underline{0.503} 
& 0.235 & 0.304 & 1.215  \\

& VTRQ  
& \textbf{0.051} & \textbf{0.128} & \textbf{0.159}
& \textbf{0.049} & \textbf{0.146} & \textbf{0.241} 
& \textbf{0.087} & \textbf{0.100} & \textbf{0.497} \\

\midrule

\multirow{4}{*}{\rotatebox{90}{\textbf{5 min}}} 
& MBT 
& 0.487 & 0.976 & 1.377
& 0.524 & 0.933 & 1.315
& 0.248 & 0.309 &  \underline{0.541}
 \\

& MPT
& 1.291 & 2.512 & 5.166
& 0.911 & 2.854 & 5.658
&  \underline{0.119} &  \underline{0.240} & 0.935  \\

& RPMT 
& \underline{0.108} & \underline{0.218} & \underline{0.357}
& \underline{0.117} & \underline{0.319} & \underline{0.523}
& 0.243 & 0.331 & 1.333
 \\

& VTRQ  
& \textbf{0.062} & \textbf{0.130} & \textbf{0.160}
& \textbf{0.066} & \textbf{0.152} & \textbf{0.244}
& \textbf{0.093} & \textbf{0.108} & \textbf{0.507} \\

\midrule
\multirow{4}{*}{\rotatebox{90}{\textbf{10 min}}} 
& MBT
& 0.516 & 1.005 & 1.387
& 0.532 & 0.971 & 1.421
& 0.251 & 0.316 & \underline{0.559}  \\

& MPT
& 1.223 & 2.477 & 5.788
& 0.944 & 3.382 & 5.744
& \underline{0.123} & \underline{0.248} & 0.985  \\

& RPMT 
& \underline{0.114} & \underline{0.220} & \underline{0.393}
& \underline{0.123} & \underline{0.320} & \underline{0.535}
& 0.251 & 0.361 & 1.439
  \\

& VTRQ  
& \textbf{0.067} & \textbf{0.135} & \textbf{0.172}
& \textbf{0.070} & \textbf{0.155} & \textbf{0.251}
& \textbf{0.104} & \textbf{0.103} & \textbf{0.512} \\

\bottomrule
\end{tabular}
\label{tab:Query on trajectory index1}

\end{table*}}

{
\begin{table}[t]
\renewcommand\tabcolsep{5pt}
\centering
 \caption{Number of results for all methods.}

\begin{tabular}{cccc}
   \toprule

Dataset &Xi'an &Chengdu  &Geolife\\

\midrule

MBT & 54  & 31 & 3   \\
 MRT & 54  & 40 & 4  \\
 RPMT & 55  & 42  &5  \\
 VTRQ  & 55  & 42 & 5  \\

\bottomrule
\end{tabular}
\label{tab:result1}

\end{table}}

\section{Experiments}\label{sec:experiments}
\subsection{Experimental settings}

\noindent\textbf{Implementation details.} 
We implement a service provider and a client in Python 3.10.11, using JSON to store the spatial and temporal ADSs due to its lightweight nature, readability, and efficient parsing. The service provider runs on an Intel Core i9 CPU @ 2.67GHz with 16 GB of memory, while the client runs on an Intel Core i7 CPU @ 2.60GHz with 16 GB memory.
{
{\texttt{VTRQ} uses the blockchain only as a third-party trust anchor for storing and publishing the digests of two ADSs, and it does not depend on any specific blockchain platform or consensus protocol. We choose to implement this layer using a Tendermint-based blockchain system~\cite{tendermintWebsite}. The system is deployed on macOS 13.4 with LevelDB as the underlying storage engine.}

\noindent\textbf{Datasets.} 
We use three datasets: Chengdu, Xi'an, and Geolife. The first two stem from Didi~\cite{didiGlobalWebsite}
, while the third one is from Microsoft~\cite{zheng2011geolifeUserGuide}.

\begin{itemize}[leftmargin=3.4mm]




\item \textbf{Xi'an}. The Xi'an trajectory dataset was extracted from the online car-hailing in Xi'an between October 1 and November 30, 2018. It comprises 45,851 trajectories with approximately 79 million trajectory points and incorporates part of the Xi'an road network, which includes 8,651 edges and 3,946 vertices.
    
\item \textbf{Chengdu}. The Chengdu trajectory dataset was extracted from the online car-hailing in Chengdu from October 1, 2018 to November 30, 2018. It contains 284,608 trajectories with approximately 220 million trajectory points and incorporates part of the Chengdu road network, which includes 8,882 edges and 5,725 vertices.
    
\item \textbf{Geolife}. The Geolife trajectory dataset was collected by 178 users in the Microsoft Research Asia Geolife project between April 2007 and October 2011. It comprises 17,621 trajectories with a total distance of 1,251,654 kilometers and a total duration of 48,203 hours. It also includes part of the corresponding Beijing road network, which contains approximately 57,202 edges and 48,662 vertices.

    
    

\end{itemize}

All datasets consist of $((x, y), t)$  points.
We apply an existing map-matching method~\cite{mao2025dutytte} to project data from the datasets onto the road networks. 


\noindent\textbf{Baselines.}
We evaluate \texttt{VTRQ} against three baselines.




\noindent\underline{\textit{\textbf{Merkle B-Tree (MBT)}}} extends the existing ADS~\cite{pei2020efficient} to the trajectory scenario. Each trajectory point’s longitude, latitude, and timestamp are treated as separate keys, while the corresponding trajectory is stored as the attribute. Three independent Merkle B-Trees are constructed, one per dimension. During query processing, all three trees must be accessed and their results merged.

\noindent\underline{\textit{\textbf{Merkle R-Tree (MRT)}}} extends the existing ADS~\cite{yung2012authentication} to the trajectory scenario. Each trajectory point’s longitude, latitude, and timestamp serve as keys, with the trajectory stored as the attribute. A Merkle R-Tree is then built, which retains the spatial-temporal pruning capability of the R-Tree while providing verifiability through the Merkle structure.
}

\noindent\underline{\textit{\textbf{Road Network Partition Merkle Tree (RPMT)}}} adapts the state-of-the-art Road Network Partition Tree~\cite{wang2022road} to support verifiable queries. The tree is constructed following the RP-Tree rules, and each node is augmented with Merkle hash values, resulting in the Road Network Partition Merkle Tree.




{
\begin{table*}[!t]
\renewcommand\tabcolsep{1.5pt}
\centering
\caption{Comprehensive evaluation of trajectory query time (s).}

\begin{tabular}{cccccccccccccccccccccc}
   \toprule

\multicolumn{2}{c}{Dataset} & \multicolumn{5}{c}{Xi'an} & \multicolumn{5}{c}{Chengdu} &\multicolumn{5}{c}{Geolife}\\
\cmidrule(lr){1-2} \cmidrule(lr){3-7} \cmidrule(lr){8-12} \cmidrule(lr){13-17}
 \textbf{Tempora}l&\textbf{Spatial} &\textbf{1km} & \textbf{2km} & \textbf{3km} & \textbf{4km} & \textbf{5km} & \textbf{1km} & \textbf{2km} & \textbf{3km} & \textbf{4km} & \textbf{5km} & \textbf{1km} & \textbf{2km} & \textbf{3km} &\textbf{ 4km} & \textbf{5km} \\

\midrule
\multirow{2}{*}{\textbf{1min}} 
& RPMT 
& \underline{0.584} & \underline{2.043} & \underline{3.340} & \underline{4.942} & \underline{7.095}  
& \underline{6.080} & \underline{14.98} & \underline{21.82} & \underline{29.19} & \underline{38.83}  

& \underline{0.406} & \underline{1.195} & \underline{2.423} & \underline{2.894} & \underline{3.304} 
 \\

& VTRQ  
& \textbf{0.190} & \textbf{0.507} & \textbf{0.950} & \textbf{1.339} & \textbf{1.847} 
& \textbf{1.048} & \textbf{3.617} & \textbf{5.335} & \textbf{7.435} & \textbf{10.64} 

& \textbf{0.098} & \textbf{0.247} & \textbf{0.547} & \textbf{0.743} & \textbf{0.979}  \\




\midrule
\multirow{2}{*}{\textbf{30min}} 
& RPMT 
& \underline{0.614} & \underline{1.943} & \underline{3.305} & \underline{5.247} & \underline{6.942}
& \underline{5.860} & \underline{14.81} & \underline{21.24} & \underline{28.99} & \underline{40.31}  
  
& \underline{0.403} & \underline{1.089} & \underline{2.274} & \underline{3.359} & \underline{3.560} 
\\

& VTRQ  
& \textbf{0.197} & \textbf{0.525} & \textbf{0.914} & \textbf{1.327} & \textbf{1.901} 
& \textbf{1.058} & \textbf{3.599} & \textbf{5.202} & \textbf{7.581} & \textbf{10.87} 

& \textbf{0.101} & \textbf{0.223} & \textbf{0.553} & \textbf{0.755} & \textbf{0.979} 
 \\

\midrule
\multirow{2}{*}{\textbf{1h}} 
& RPMT 
& \underline{0.606} & \underline{2.175} & \underline{3.785} & \underline{4.950} & \underline{7.151} 
& \underline{6.150} & \underline{15.31} & \underline{21.02} & \underline{29.32} & \underline{40.10}  
 
& \underline{0.381} & \underline{1.028} & \underline{2.262} & \underline{2.994} & \underline{3.356} 
 \\

& VTRQ  
& \textbf{0.208} & \textbf{0.531} & \textbf{1.031} & \textbf{1.338} & \textbf{1.934} 
& \textbf{1.068} & \textbf{3.513} & \textbf{5.328} & \textbf{7.427} & \textbf{10.33} 

& \textbf{0.101} & \textbf{0.238} & \textbf{0.581} & \textbf{0.755} & \textbf{1.011} \\

 



\midrule
\multirow{2}{*}{\textbf{4h}} 
& RPMT 
& \underline{0.632} & \underline{1.954} & \underline{3.652} & \underline{5.067} & \underline{7.247}  
& \underline{5.977} & \underline{15.01} & \underline{21.08} & \underline{29.76} & \underline{39.41}  

& \underline{0.427} & \underline{1.071} & \underline{2.332} & \underline{3.200} & \underline{3.379} \\

& VTRQ  
& \textbf{0.265} & \textbf{0.597} & \textbf{1.007} & \textbf{1.432} & \textbf{1.959} 
& \textbf{1.112} & \textbf{3.520} & \textbf{5.298} & \textbf{7.608} & \textbf{10.81} 

& \textbf{0.104} & \textbf{0.239} & \textbf{0.548} & \textbf{0.768} & \textbf{0.970} \\

\midrule
\multirow{2}{*}{\textbf{12h}} 
& RPMT 
& \underline{0.667} & \underline{2.201} & \underline{3.702} & \underline{5.230} & \underline{7.227}  
& \underline{5.992} & \underline{15.00} & \underline{21.39} & \underline{30.02} & \underline{40.59}  

& \underline{0.507} & \underline{1.101} & \underline{2.518} & \underline{3.040} & \underline{3.431} \\

& VTRQ 
& \textbf{0.385} & \textbf{0.710} & \textbf{1.140} & \textbf{1.553} & \textbf{2.132} 
& \textbf{1.333} & \textbf{3.889} & \textbf{5.405} & \textbf{7.780} & \textbf{11.13} 

& \textbf{0.107} & \textbf{0.271} & \textbf{0.571} & \textbf{0.851} & \textbf{0.970} \\

\midrule
\multirow{2}{*}{\textbf{24h}} 
& RPMT 
& \underline{0.751} & \underline{2.425} & \underline{3.884} & \underline{5.412} & \underline{7.545} 
& \underline{6.158} & \underline{15.68} & \underline{21.48} & \underline{30.00} & \underline{42.26}  
 
& \underline{0.390} & \underline{1.057} & \underline{2.307} & \underline{3.163} & \underline{3.366} \\

& VTRQ  
& \textbf{0.517} & \textbf{0.884} & \textbf{1.302} & \textbf{1.736} & \textbf{2.267} 
& \textbf{1.858} & \textbf{4.319} & \textbf{6.087} & \textbf{8.237} & \textbf{11.84} 
& \textbf{0.124} & \textbf{0.296} & \textbf{0.588} & \textbf{0.793} & \textbf{0.952} \\

\bottomrule
\end{tabular}
\label{tab:Query on trajectory index}

\end{table*}}

\subsection{Query performance}

We first evaluate all methods on smaller subsets of the datasets because MBT and MRT are impractical on the full-scale datasets, particularly the Xi’an and Chengdu datasets. Specifically, we use three days of data from Xi’an, one day of data from Chengdu, and 30\% of the Geolife dataset. The spatial query range is selected from $\{1\textit{km}, 2\textit{km}, 3\textit{km}\}$, and the temporal range is selected from $\{1\textit{min}, 5\textit{min}, 10\textit{min}\}$.
Table~\ref{tab:Query on trajectory index1} reports the trajectory query time on the three datasets.
\texttt{VTRQ} achieves the lowest query latency, outperforming the three baselines.

\begin{figure*}[]
\centering
\subfigure[Xi'an]{\includegraphics[width=0.31\textwidth]{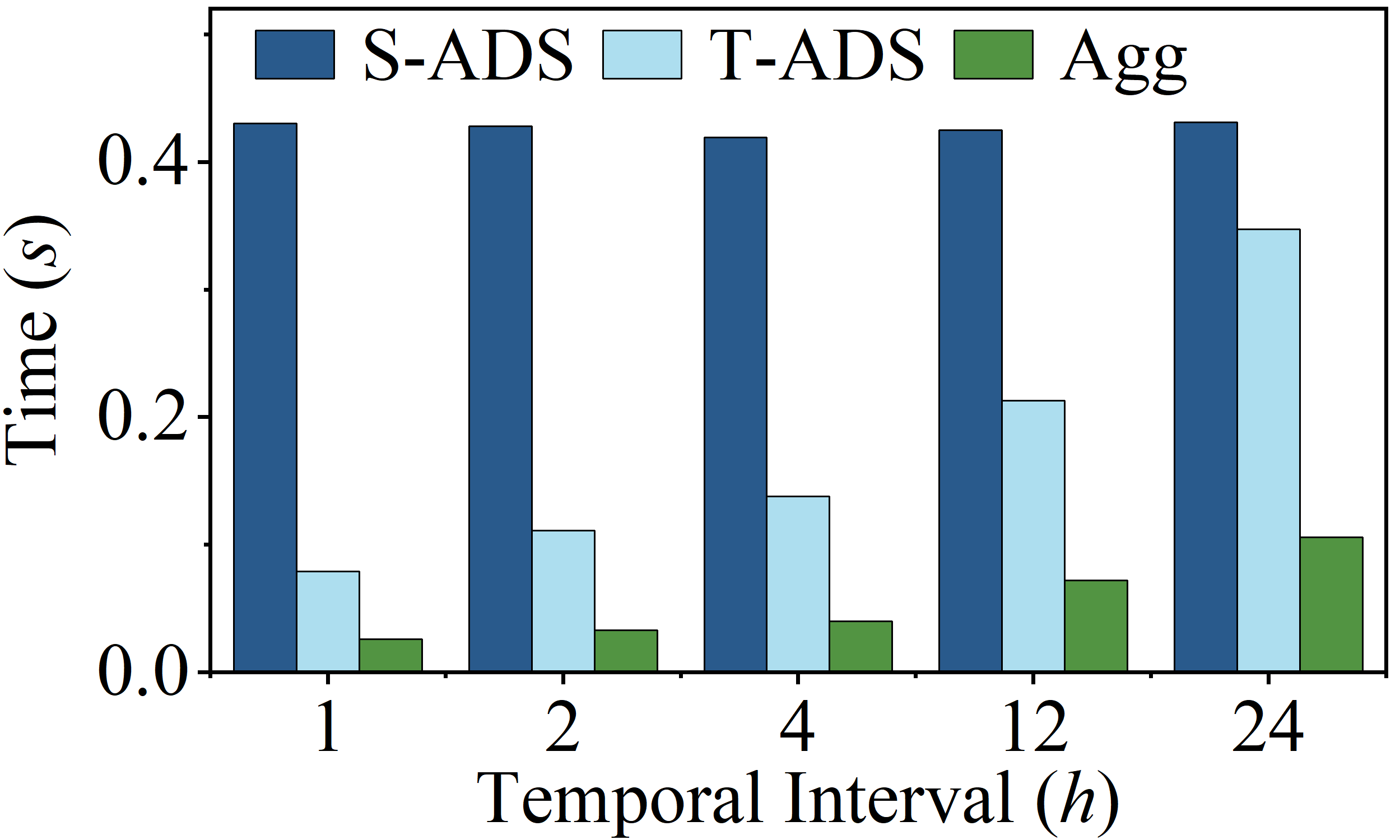}}
\hspace{0.0025\textwidth}
\subfigure[Chengdu]{\includegraphics[width=0.31\textwidth]{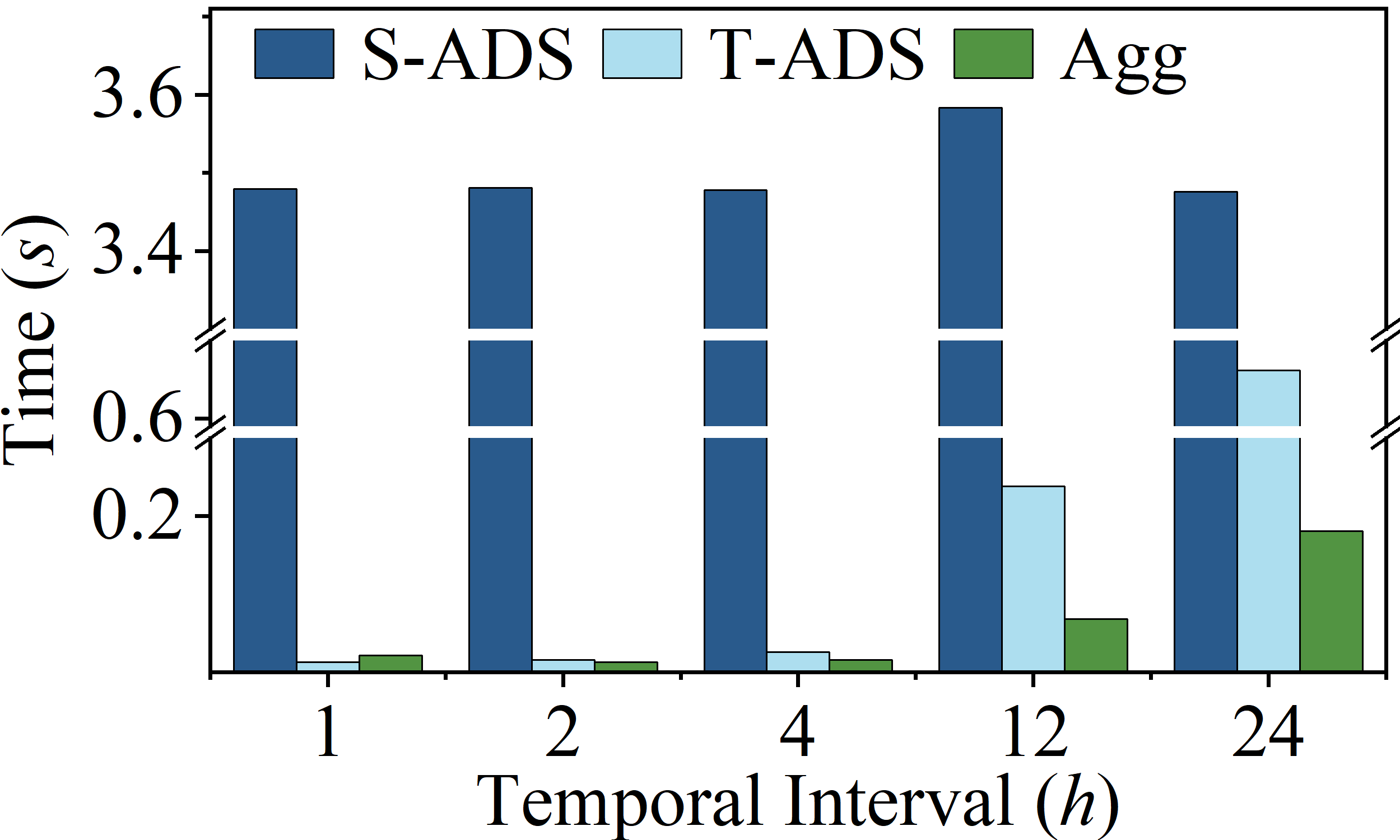}}
\hspace{0.0025\textwidth}
\subfigure[Geolife]{\includegraphics[width=0.31\textwidth]{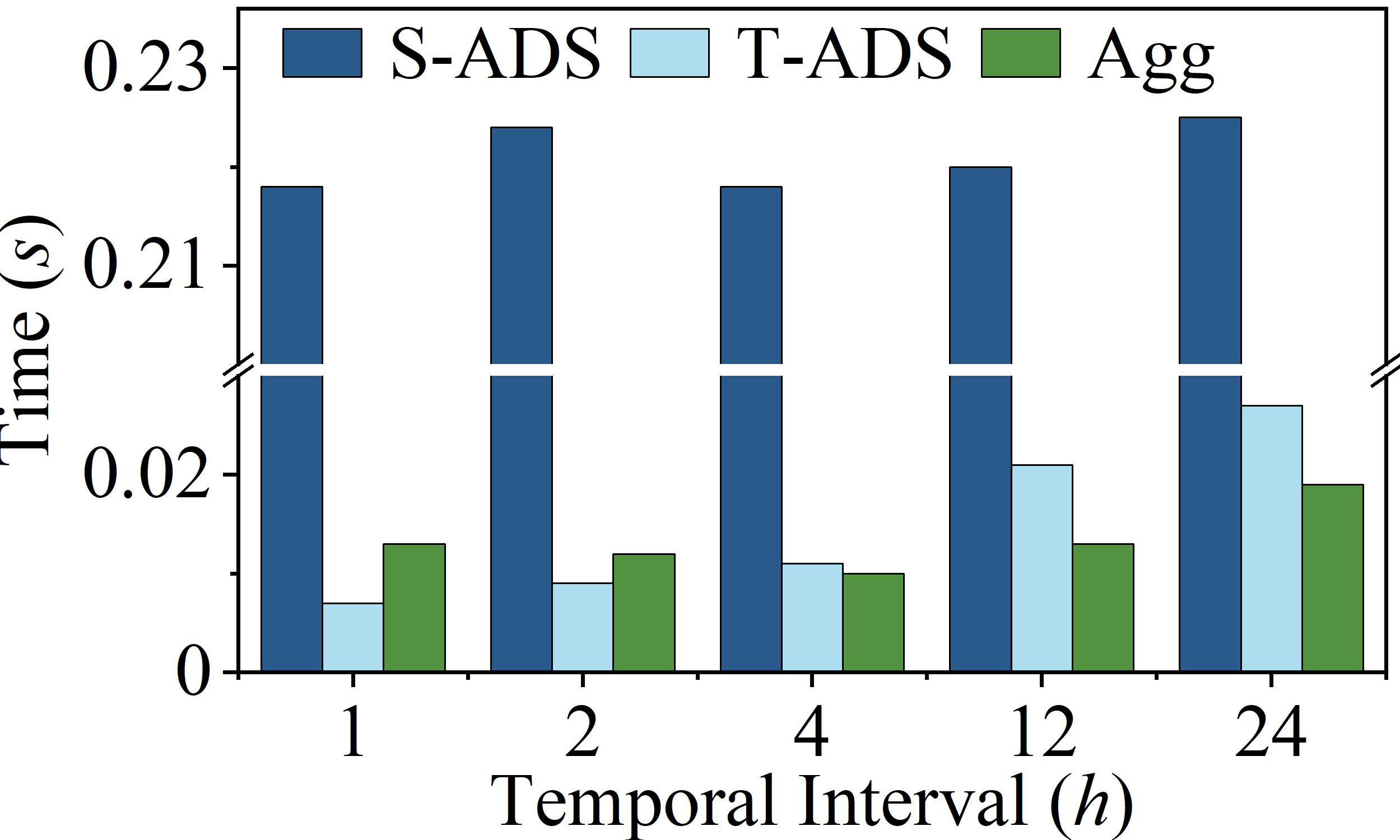}}
\caption{Impact on query time of the temporal range.}
\label{exfig:query with temporal}
\end{figure*}

\begin{figure*}[]
\centering
\subfigure[Xi'an]{\includegraphics[width=0.31\textwidth]{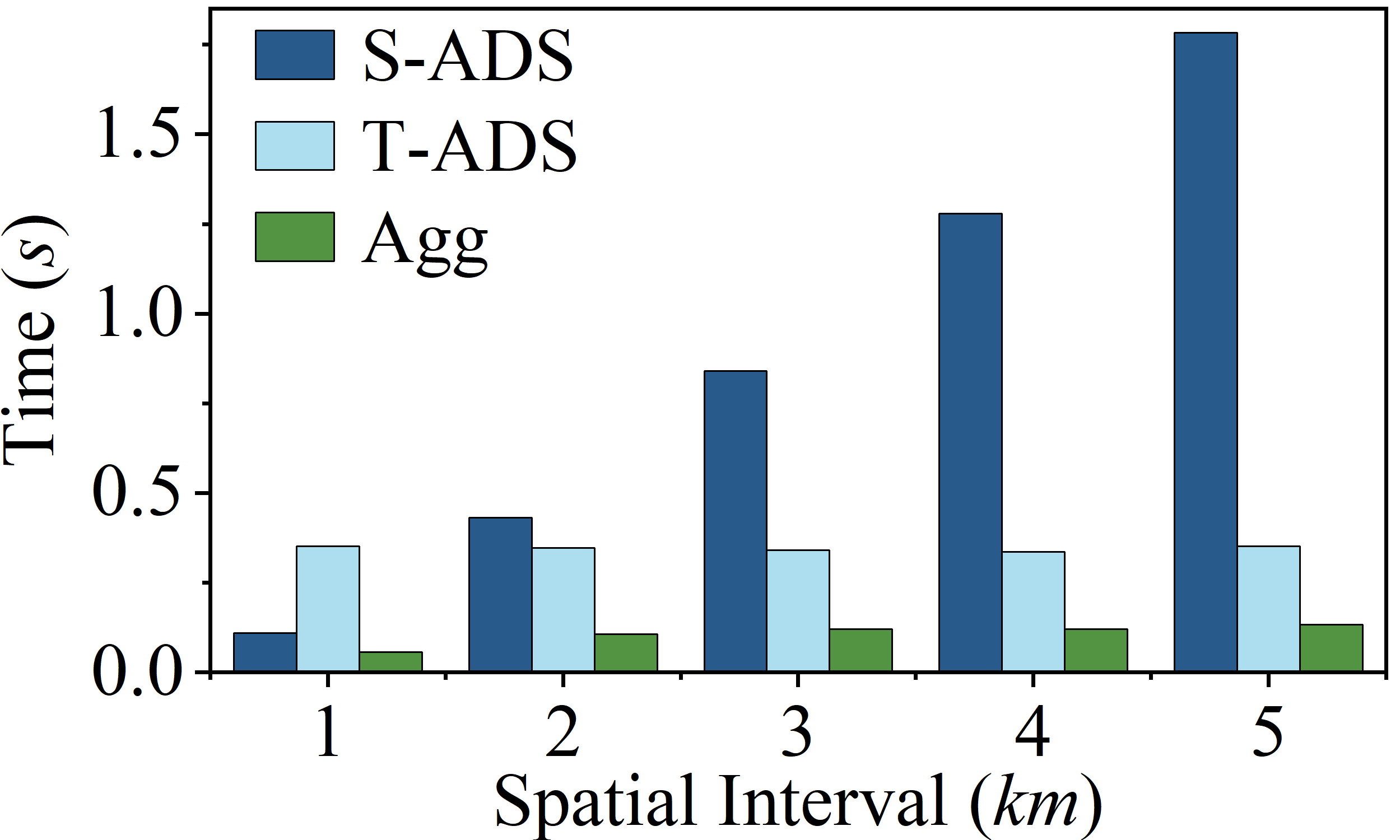}}
\hspace{0.0025\textwidth}
\subfigure[Chengdu]{\includegraphics[width=0.31\textwidth]{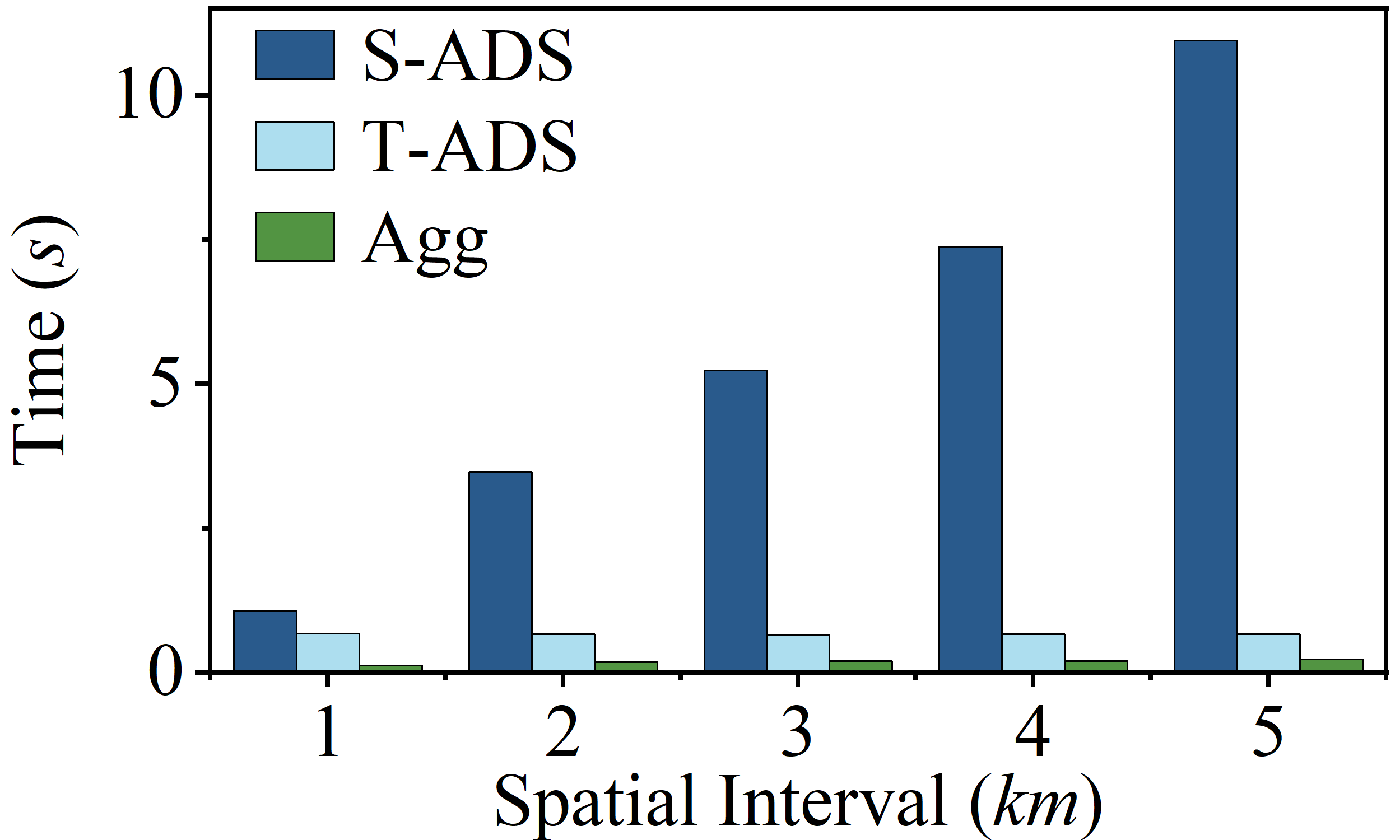}}
\hspace{0.0025\textwidth}
\subfigure[Geolife]
{\includegraphics[width=0.31\textwidth]{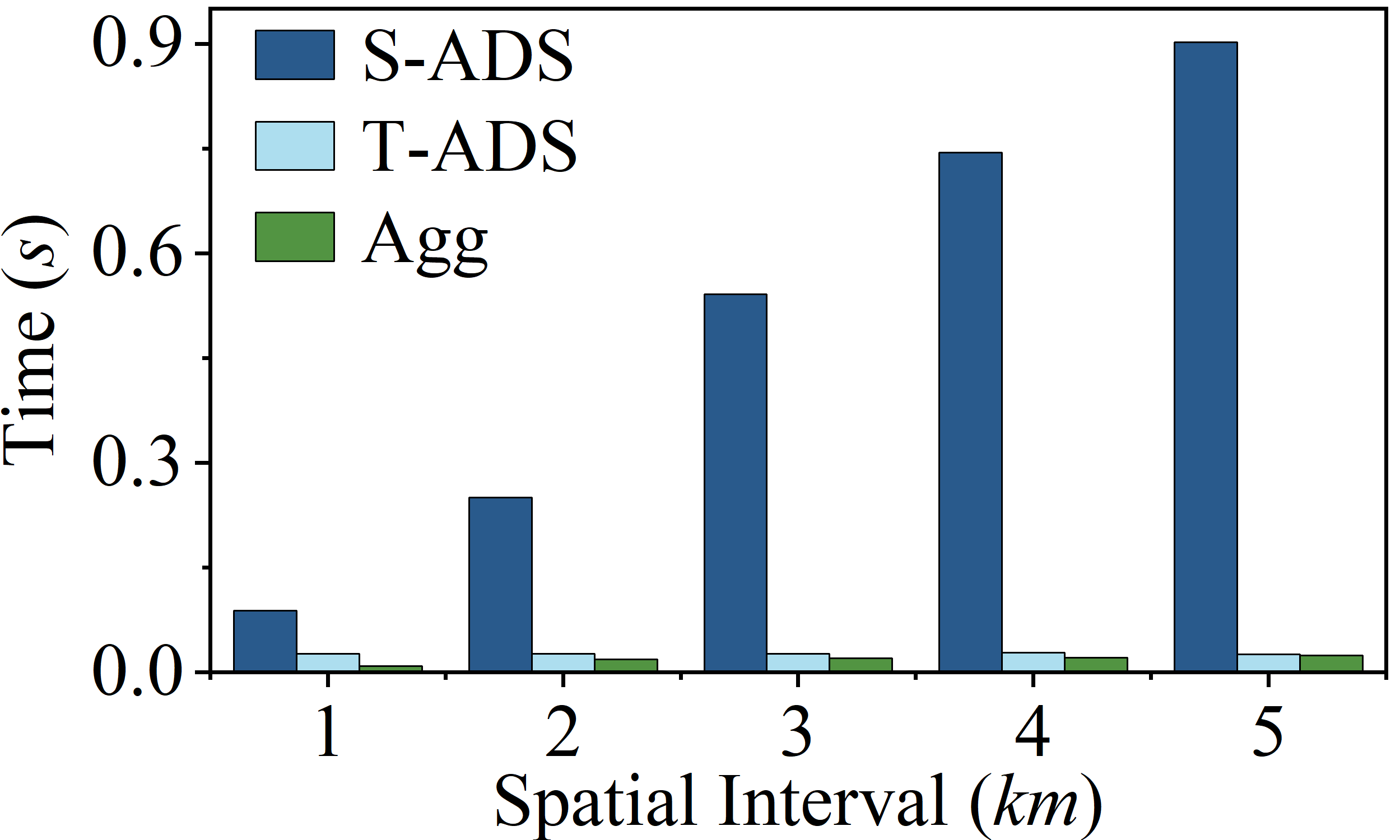}}
\caption{Impact on query time of the spatial range.}
\label{exfig: query with spatial}
\end{figure*}

{
\begin{table}[!t]
\renewcommand\tabcolsep{3pt}
\centering
\caption{STEA evaluation.}

\begin{tabular}{ccccccccccccc}
   \toprule

\multicolumn{2}{c}{Dataset} & \multicolumn{2}{c}{Xi'an} & \multicolumn{2}{c}{Chengdu} &\multicolumn{2}{c}{Geolife}\\
\cmidrule(lr){1-2} \cmidrule(lr){3-4} \cmidrule(lr){5-6} \cmidrule(lr){7-8}
\multicolumn{2}{c}{\textbf{Phase}} & \textbf{Cand.}  & \textbf{Agg.}  & \textbf{Cand.}  & \textbf{Agg.} & \textbf{Cand.}  & \textbf{Agg.} \\

\midrule

& RPMT 
 & 992 & 991
 & 1899 &  1897
 & 54 &54   \\

& VTRQ  
 & 1002 & 991
 & 1906 & 1897 
 & 55 & 54  \\

\bottomrule
\end{tabular}
\label{tab:STEA evaluation}
\vspace{-1mm}
\end{table}}

This is because:
(i) MBT constructs three independent Merkle B-Trees for longitude, latitude, and time, where each trajectory point is indexed separately. During query execution, all trees must be accessed, and their results must be  merged, causing many node accesses and high computational overhead.
(ii) MRT organizes all trajectory points in a single Merkle R-Tree. Although it preserves the spatial-temporal pruning capabilities of the R-Tree, it also indexes individual trajectory points, causing excessive node accesses.
(iii) RPMT stores both spatial and temporal information in each node. For every trajectory located on edges in the spatial query region, RPMT must check whether its timestamp range intersects with the temporal query interval, causing longer query times. In addition, since a trajectory may traverse the same edge multiple times at different times, RPMT stores duplicate trajectory IDs on the same edge to avoid omissions.
(iv) \texttt{VTRQ} separates the spatial and temporal attributes and records the start and end timestamps for each trajectory, thereby avoiding complex timestamp management. In addition, by constructing MBRs for the border edges that connect subgraphs, \texttt{VTRQ} enables efficient pre-filtering and reduces the computational overhead associated with border-edge checking.
In addition to being impractical for large datasets, MBT and MRT have a major limitation—result loss, as shown in Table~\ref{tab:result1}.
Since their indexes are point-level, trajectories that intersect a query range but have no sampled points in it, are not retrieved, leading to incomplete query results.

We report the trajectory query performance of \texttt{VTRQ} and RPMT on the complete datasets in Table~\ref{tab:Query on trajectory index}. The spatial query range is selected from $\{1\textit{km}, 2\textit{km}, 3\textit{km}, 4\textit{km}, 5\textit{km}\}$.
The temporal range is selected from $\{1\textit{min}, 30\textit{min}, 1h, 4h, 12h, 24h\}$. 
For all datasets, whether varying the spatial range with a fixed temporal size or varying the temporal range with a fixed spatial size, \texttt{VTRQ} consistently achieves lower query times than RPMT. The query efficiency of \texttt{VTRQ} can be up to six times higher than that of RPMT. The underlying reasons were explained for the small-scale evaluation.
We make three additional observations:
first, as the query time interval increases, the query time of the RPMT does not grow linearly. This is because, when the spatial query range is fixed, all trajectory information on the edges in this spatial range must still be verified against the temporal constraints, regardless of temporal range changes. 
Second, for \texttt{VTRQ}, the query time grows roughly linearly as the temporal query range expands, although the increase is modest. The reason is that while a wider temporal query range inevitably increases the workload of the temporal query process, the time consumed by the spatial query process accounts for a larger proportion of the total query time, to be discussed in more detail shortly.
Finally, across all spatio-temporal query conditions, \texttt{VTRQ} takes longer in terms of query time on the Chengdu dataset than on the Xi’an and Geolife datasets. This is because the Chengdu dataset has more trajectories, leading to larger S-ADS and T-ADS structures, requiring more computation.

{
\begin{table*}[!t]
\centering
\caption{Verification time (s) for all methods.}
\begin{tabular}{cccccccccccccccc}
   \toprule

\multicolumn{2}{c}{Dataset} & \multicolumn{3}{c}{Xi'an} & \multicolumn{3}{c}{Chengdu} &\multicolumn{3}{c}{Geolife}\\
\cmidrule(lr){1-2} \cmidrule(lr){3-5} \cmidrule(lr){6-8} \cmidrule(lr){9-11}
 \textbf{Tem}&\textbf{Spa} & \textbf{1km} & \textbf{2km} & \textbf{3km}  & \textbf{1km} & \textbf{2km} & \textbf{3km} & \textbf{1km} & \textbf{2km} & \textbf{3km} \\

\midrule
\multirow{4}{*}{\rotatebox{90}{\textbf{1 min}}}

& MBT 
& 1.043 & 2.144 & 2.676
& 1.072 & 2.500 & 3.997
& 0.492 & 0.726 & 1.257  \\

& MRT
& 2.069 & 4.918 & 10.28
& 1.977 & 7.579 & 14.34
& \textbf{0.036} & \textbf{0.043} & \textbf{0.222}  \\

& RPMT 
& \underline{0.269} & \underline{0.512} & \underline{1.078}
& \underline{0.281} & \underline{1.061} & \underline{1.790}
& 0.763 & 0.821 & 4.019 \\

& VTRQ  
& \textbf{0.105} & \textbf{0.179} & \textbf{0.270}
& \textbf{0.062} & \textbf{0.218} & \textbf{0.358}
& \underline{0.144} & \underline{0.164} & \underline{0.698}
 \\

\midrule

\multirow{4}{*}{\rotatebox{90}{\textbf{5 min}}} 
& MBT 
& 1.106 & 2.149 & 2.782
& 1.090 & 2.663 & 4.113
& 0.530 & 0.767 & 1.262  \\

& MRT
& 2.029 & 5.087 & 10.57
& 2.055 & 7.804 & 14.42
&\textbf{0.038} & \textbf{0.050} & \textbf{0.233}  \\

& RPMT 
& \underline{0.278} & \underline{0.567} & \underline{1.170}
& \underline{0.298} & \underline{1.071} & \underline{1.866}
& 0.799 & 0.864 & 4.037
  \\

& VTRQ  
& \textbf{0.114} & \textbf{0.188} & \textbf{0.274}
& \textbf{0.076} & \textbf{0.223} & \textbf{0.361}
& \underline{0.150} & \underline{0.180} & \underline{0.712} \\

\midrule
\multirow{4}{*}{\rotatebox{90}{\textbf{10 min}}} 
& MBT 
& 1.107 & 2.313 & 2.858 
& 1.117 & 2.840 & 4.290 
& 0.562 & 0.775 & 1.294  \\

& MRT
& 2.135 & 5.244 & 10.89 
& 2.181 & 8.255 & 14.66 
& \textbf{0.040} & \textbf{0.054} & \textbf{0.256}  \\

& RPMT 
& \underline{0.285} & \underline{0.582} & \underline{1.256} 
& \underline{0.306} & \underline{1.088} & \underline{1.880} 
& 0.808 & 0.894 & 4.169 
 \\

& VTRQ  
& \textbf{0.114} & \textbf{0.206} & \textbf{0.271} 
& \textbf{0.079} & \textbf{0.228} & \textbf{0.383} 
& \underline{0.160} & \underline{0.187} & \underline{0.738}  \\

\bottomrule
\end{tabular}
\label{tab:tab:Verification Performance1}
\end{table*}}

Fig.~\ref{exfig:query with temporal} shows the composition of query time for \texttt{VTRQ} on the three datasets in a spatial range of 2 km as the temporal query interval varies. The query time is consumed by three activities: spatial query, temporal query, and result aggregation. The results show that spatial queries consume most of the query time on all three datasets. This is because the spatial query involves three steps: (i) locating the matching subgraph; (ii) retrieving the edges in the subgraph that are in the spatial query range; and (iii) incorporating the qualified trajectories on these matched edges into the query result set. In contrast, the temporal query requires a simple pruning operation on the tree structure.
Furthermore, as the temporal query range increases, the proportion of the spatial query time decreases steadily, while the proportions of temporal query time and result aggregation time increase. This is because when the temporal range expands, the number of trajectories satisfying the temporal conditions in the T-ADS grows significantly. Next, as the temporal query range expands, the number of trajectories satisfying both the spatial and temporal queries also increases, which in turn further increases the cost of final result aggregation.
Fig.~\ref{exfig: query with spatial} shows the composition of query time for \texttt{VTRQ} within 24 hours while varying the spatial range. The proportion of the spatial query time increases gradually, whereas the proportion of the temporal query time decreases. The reason is that as the spatial query range expands, the number of covered trajectories increases, resulting in more trajectories being selected as candidate trajectories. 

{
\begin{table*}[!t]
\renewcommand\tabcolsep{1.5pt}
\centering
\caption{Comprehensive evaluation of verification time (s).}

\begin{tabular}{cccccccccccccccccccccc}
   \toprule

\multicolumn{2}{c}{Dataset} & \multicolumn{5}{c}{Xi'an} & \multicolumn{5}{c}{Chengdu} &\multicolumn{5}{c}{Geolife}\\
\cmidrule(lr){1-2} \cmidrule(lr){3-7} \cmidrule(lr){8-12} \cmidrule(lr){13-17}
 \textbf{Tempora}l&\textbf{Spatial} &\textbf{1km} & \textbf{2km} & \textbf{3km} & \textbf{4km} & \textbf{5km} & \textbf{1km} & \textbf{2km} & \textbf{3km} & \textbf{4km} & \textbf{5km} & \textbf{1km} & \textbf{2km} & \textbf{3km} &\textbf{ 4km} & \textbf{5km} \\

\midrule
\multirow{2}{*}{\textbf{1min}} 
& RPMT 
& \underline{1.944} & \underline{7.631} & \underline{14.37} & \underline{20.61} & \underline{28.24}  
& \underline{21.43} & \underline{64.99} & \underline{96.23} & \underline{133.9} & \underline{183.1}   

& \underline{1.196} & \underline{3.786} & \underline{10.21} & \underline{12.55} & \underline{14.65}  \\

& VTRQ 
& \textbf{0.208} & \textbf{0.717} & \textbf{1.410} & \textbf{2.017} & \textbf{2.975}
& \textbf{2.243} & \textbf{7.587} & \textbf{10.88} & \textbf{16.00} & \textbf{22.13}
& \textbf{0.151} & \textbf{0.414} & \textbf{0.835} & \textbf{1.047} & \textbf{1.334}   \\



\midrule
\multirow{2}{*}{\textbf{30min}} 
& RPMT 
& \underline{1.962} & \underline{7.256} & \underline{13.79} & \underline{20.79} & \underline{29.82} 
& \underline{21.54} & \underline{65.37} & \underline{93.28} & \underline{134.5} & \underline{186.9}   
& \underline{1.149} & \underline{3.814} & \underline{9.442} & \underline{12.80} & \underline{14.84} \\

& VTRQ  
& \textbf{0.224} & \textbf{0.768} & \textbf{1.394} & \textbf{2.074} & \textbf{3.099}
& \textbf{2.276} & \textbf{7.686} & \textbf{11.05} & \textbf{15.34} & \textbf{21.84}
& \textbf{0.156} & \textbf{0.365} & \textbf{0.804} & \textbf{1.060} & \textbf{1.380}  \\

\midrule
\multirow{2}{*}{\textbf{1h}} 
& RPMT 
& \underline{1.970} & \underline{7.442} & \underline{14.37} & \underline{20.76} & \underline{29.12} 
& \underline{20.97} & \underline{66.32} & \underline{93.77} & \underline{135.0} & \underline{185.1}   
& \underline{1.158} & \underline{3.944} & \underline{9.354} & \underline{12.23} & \underline{14.96}  \\

& VTRQ  
& \textbf{0.262} & \textbf{0.804} & \textbf{1.483} & \textbf{2.133} & \textbf{3.061}
& \textbf{2.253} & \textbf{7.505} & \textbf{11.19} & \textbf{15.45} & \textbf{21.52}
& \textbf{0.161} & \textbf{0.354} & \textbf{0.859} & \textbf{1.061} & \textbf{1.342} \\



\midrule
\multirow{2}{*}{\textbf{4h}} 
& RPMT 
& \underline{1.916} & \underline{7.461} & \underline{14.27} & \underline{20.30} & \underline{28.73} 
& \underline{21.03} & \underline{66.76} & \underline{94.41} & \underline{131.3} & \underline{185.1}   
& \underline{1.475} & \underline{4.357} & \underline{9.303} & \underline{12.82} & \underline{14.98}  \\

& VTRQ 
& \textbf{0.371} & \textbf{0.891} & \textbf{1.602} & \textbf{2.267} & \textbf{3.396}
& \textbf{2.285} & \textbf{7.693} & \textbf{10.76} & \textbf{15.78} & \textbf{22.06}
& \textbf{0.166} & \textbf{0.358} & \textbf{0.830} & \textbf{1.100} & \textbf{1.388} \\

\midrule
\multirow{2}{*}{\textbf{12h}} 
& RPMT 
& \underline{1.891} & \underline{7.316} & \underline{14.35} & \underline{20.56} & \underline{28.71}  
& \underline{20.92} & \underline{64.09} & \underline{93.55} & \underline{136.0} & \underline{183.8}   
& \underline{1.207} & \underline{4.156} & \underline{9.811} & \underline{12.52} & \underline{15.02}  \\

& VTRQ  
& \textbf{0.558} & \textbf{1.083} & \textbf{1.863} & \textbf{2.484} & \textbf{3.144}
& \textbf{2.811} & \textbf{7.959} & \textbf{11.66} & \textbf{16.03} & \textbf{22.69}
& \textbf{0.190} & \textbf{0.389} & \textbf{0.819} & \textbf{1.059} & \textbf{1.359} \\

\midrule
\multirow{2}{*}{\textbf{24h}} 
& RPMT 
& \underline{2.255} & \underline{7.347} & \underline{14.93} & \underline{20.59} & \underline{29.08}
& \underline{20.79} & \underline{66.51} & \underline{94.24} & \underline{132.6} & \underline{181.8}   
& \underline{1.126} & \underline{4.087} & \underline{9.924} & \underline{12.56} & \underline{14.80}  \\

& VTRQ  
& \textbf{0.849} & \textbf{1.364} & \textbf{2.059} & \textbf{2.653} & \textbf{3.461}
& \textbf{4.027} & \textbf{8.908} & \textbf{12.54} & \textbf{17.02} & \textbf{23.48}
& \textbf{0.204} & \textbf{0.639} & \textbf{0.868} & \textbf{1.134} & \textbf{1.411} \\

\bottomrule
\end{tabular}
\label{tab:Verification Performance}

\end{table*}}

We evaluate the effectiveness of the STEA algorithm using a spatial range of 5 km and a temporal range of 1 day. Table~\ref{tab:STEA evaluation} presents the results after candidate selection. In \texttt{VTRQ}, the candidate set is obtained by intersecting the spatial and temporal conditions, while in RPMT, it comes directly from the query results. After aggregation, we observe that although RPMT produces a smaller candidate set, it still contains trajectories that do not satisfy the query conditions. Therefore, RPMT requires the same aggregation procedure as \texttt{VTRQ}. As discussed in Section~\ref{sec:aggregation}, the reason is that while some trajectory edges satisfy the spatial and temporal constraints, the constraints are not satisfied simultaneously.

\subsection{Verification performance}

We evaluate the verification performance on smaller subsets of the datasets, using the same settings as when evaluating the query performance.
Table~\ref{tab:tab:Verification Performance1} reports the verification time on the three subsets. On both the Chengdu and Xi’an subsets, \texttt{VTRQ} achieves lower verification times than RPMT.
This is because:
(i) MBT and MRT construct indexes at the point level, resulting in excessive verification overhead due to hash computations across numerous trajectory points.
(ii) RPMT embeds temporal information in its spatial structure. During the verification phase, the hash aggregation process must traverse large volumes of temporal data.
(iii) \texttt{VTRQ} verifies the spatial and temporal components of trajectories separately. The S-ADS does not store temporal information, substantially reducing the verification data size, while the T-ADS retains only the start and end timestamps of each trajectory. 
We note that on the Geolife dataset, the verification time of \texttt{VTRQ} exceeds that of RPMT. This is because Geolife contains fewer trajectories but a much larger road network. As shown in Table~\ref{tab:STEA evaluation}, the number of trajectories in the same query range is small, which reduces the benefit of \texttt{VTRQ}’s spatio-temporal separation strategy.

\begin{figure*}[t]
\centering
\subfigure[Xi'an]{\includegraphics[width=0.31\textwidth]{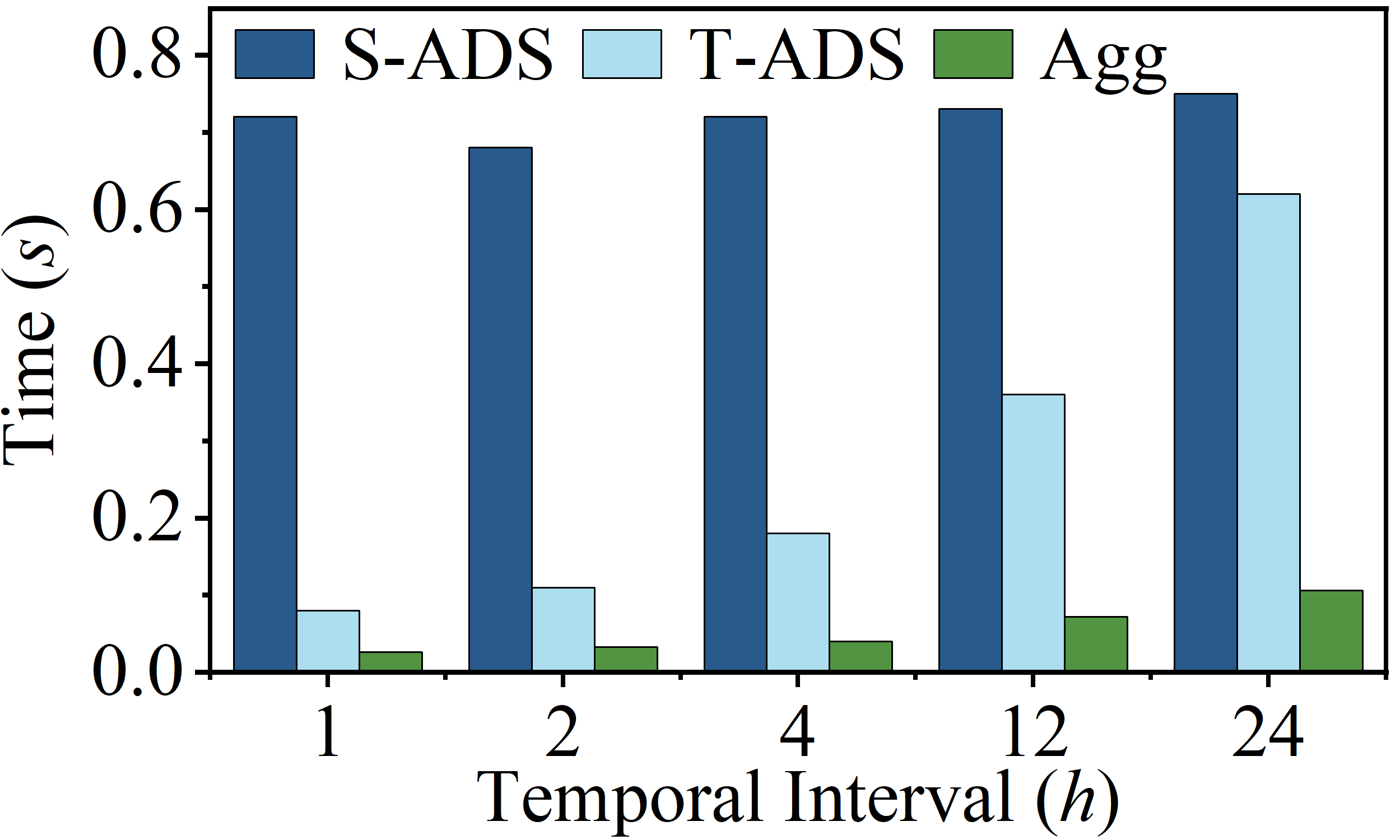}}
\hspace{0.0025\textwidth}
\subfigure[Chengdu]{\includegraphics[width=0.31\textwidth]{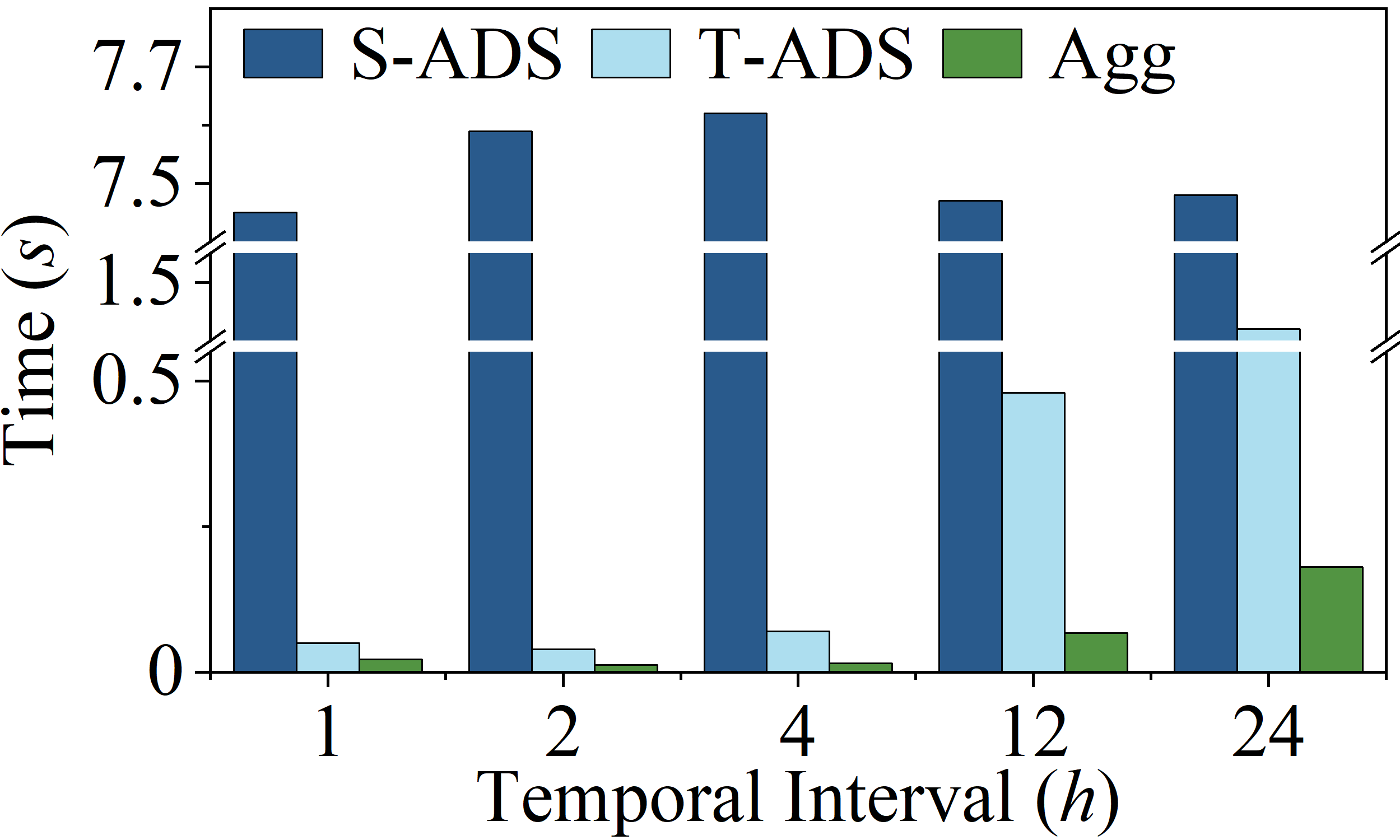}}
\hspace{0.0025\textwidth}
\subfigure[Geolife]{\includegraphics[width=0.31\textwidth]{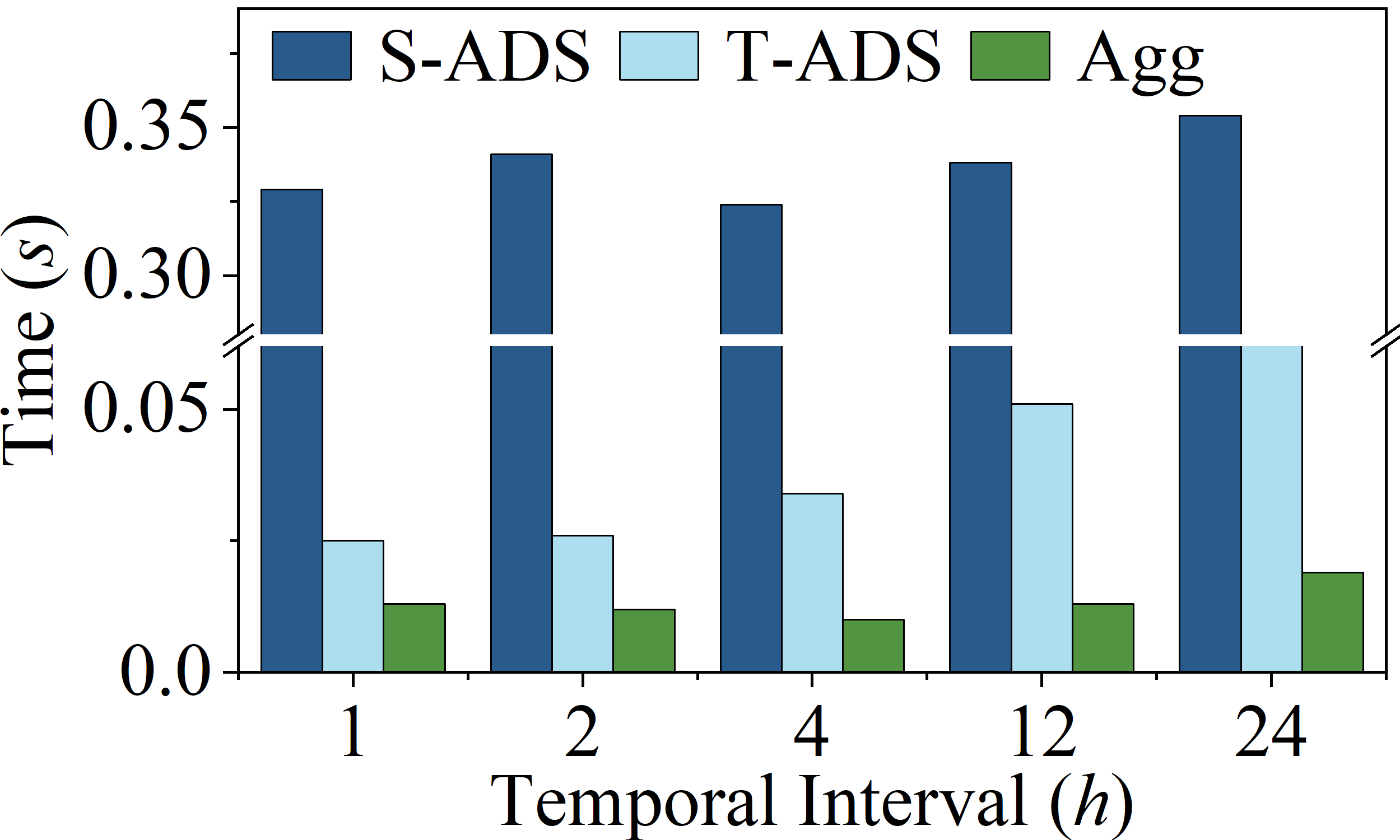}}
\caption{Impact on  verification time of the temporal range.}
\label{verification with temporal}
\end{figure*}

\begin{figure*}[t]
\centering
\subfigure[Xi'an]{\includegraphics[width=0.31\textwidth]{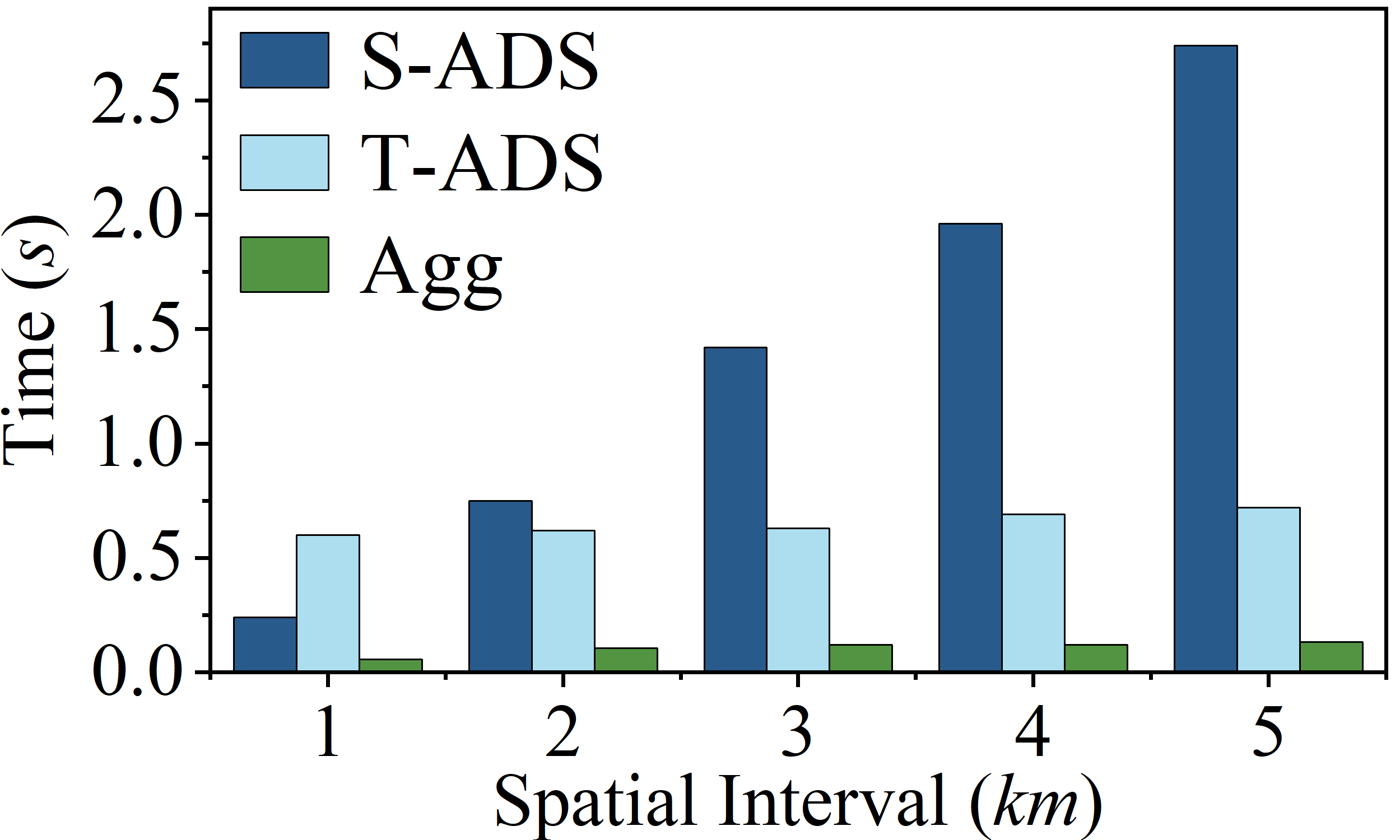}}
\hspace{0.0025\textwidth}
\subfigure[Chengdu]{\includegraphics[width=0.31\textwidth]{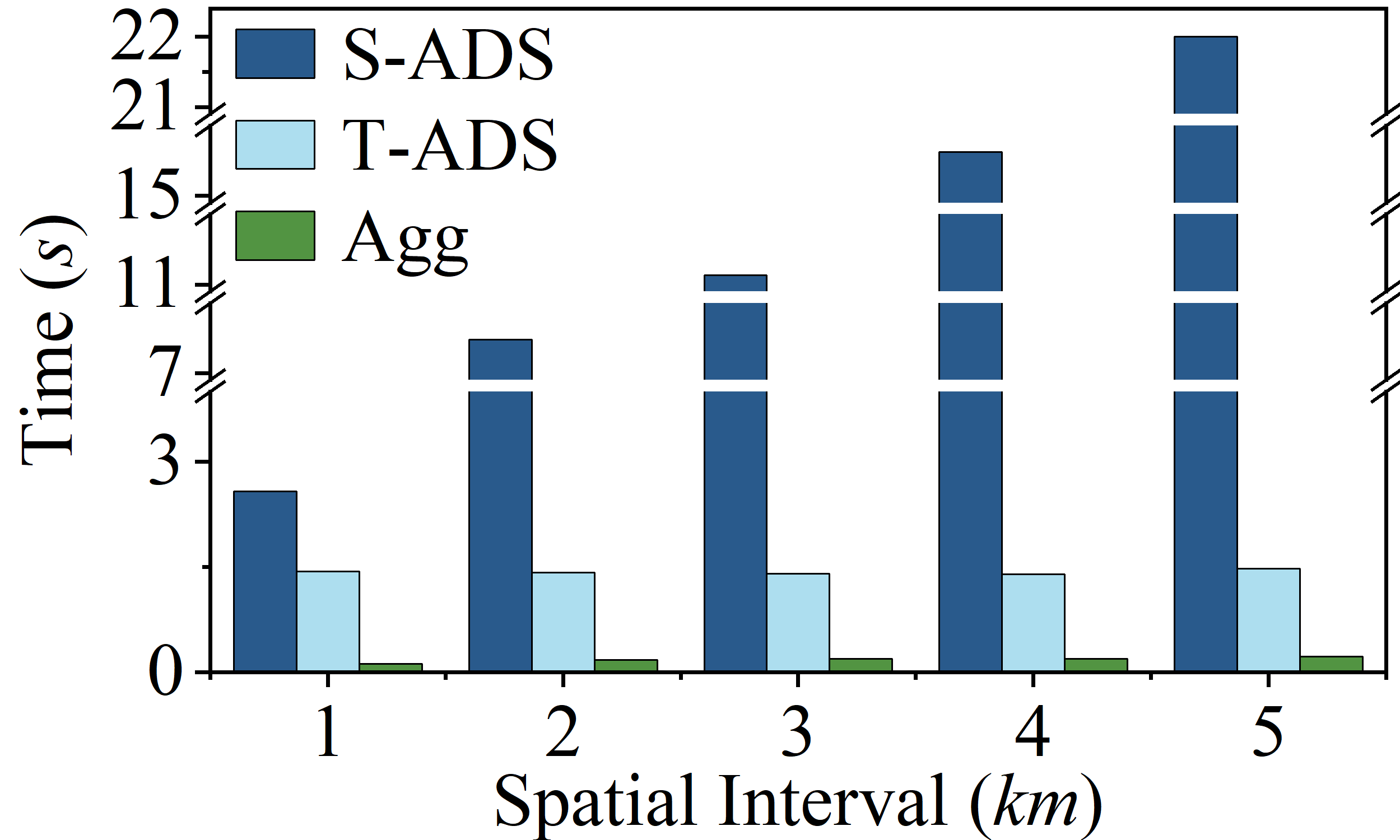}}
\hspace{0.0025\textwidth}
\subfigure[Geolife]{\includegraphics[width=0.31\textwidth]{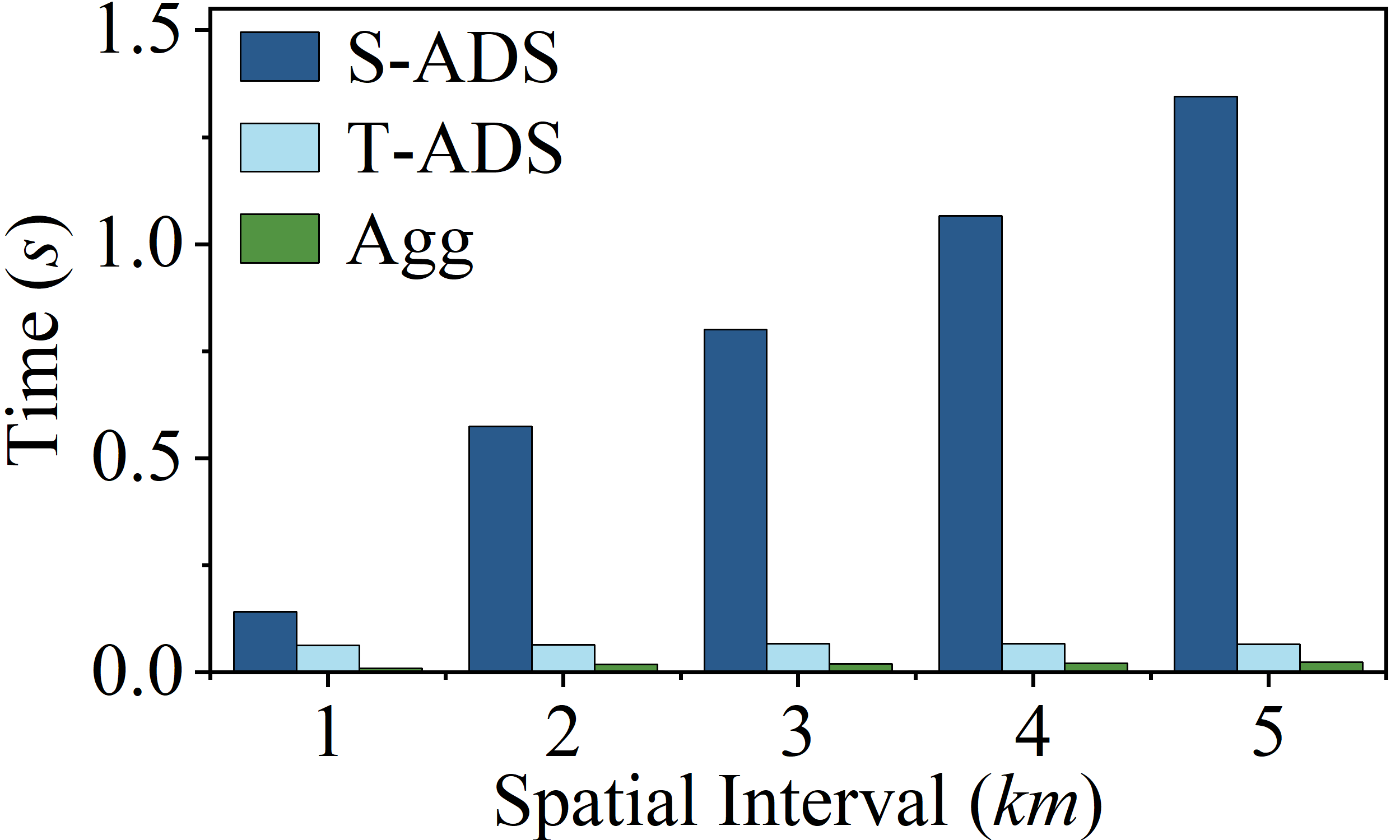}}
\caption{Impact on verification time of the spatial range.}
\label{verification with spatial}
\vspace{-1mm}
\end{figure*}

We evaluate the verification performance of \texttt{VTRQ} and RPMT on the three complete datasets under the same settings as in the query performance evaluation, as shown in Table~\ref{tab:Verification Performance}.
For all datasets, \texttt{VTRQ} shows lower verification time than RPMT. In most cases, its verification efficiency exceeds that of the baseline method by approximately an order of magnitude. 
The reasons for this have been analyzed in the small-scale experiments.
In addition, we make two observations:
first, when the spatial query range is fixed and the temporal query range is expanded, or vice versa, the verification time grows roughly linearly. However, this growth is more pronounced when the spatial range increases. This is because the verification of the T-ADS is relatively straightforward and does not involve nested structures, whereas S-ADS verification is more complex—a phenomenon that will be further analyzed in {\color{black}Fig.~\ref{verification with temporal} and Fig.~\ref{verification with spatial}.} 
{Second, \texttt{VTRQ}'s verification times on Xi’an and Geolife are shorter than those on Chengdu because the Xi’an and Geolife datasets contain fewer trajectories.}

\begin{figure*}[t]
\centering
\subfigure[Xi'an]{\includegraphics[width=0.31\textwidth]{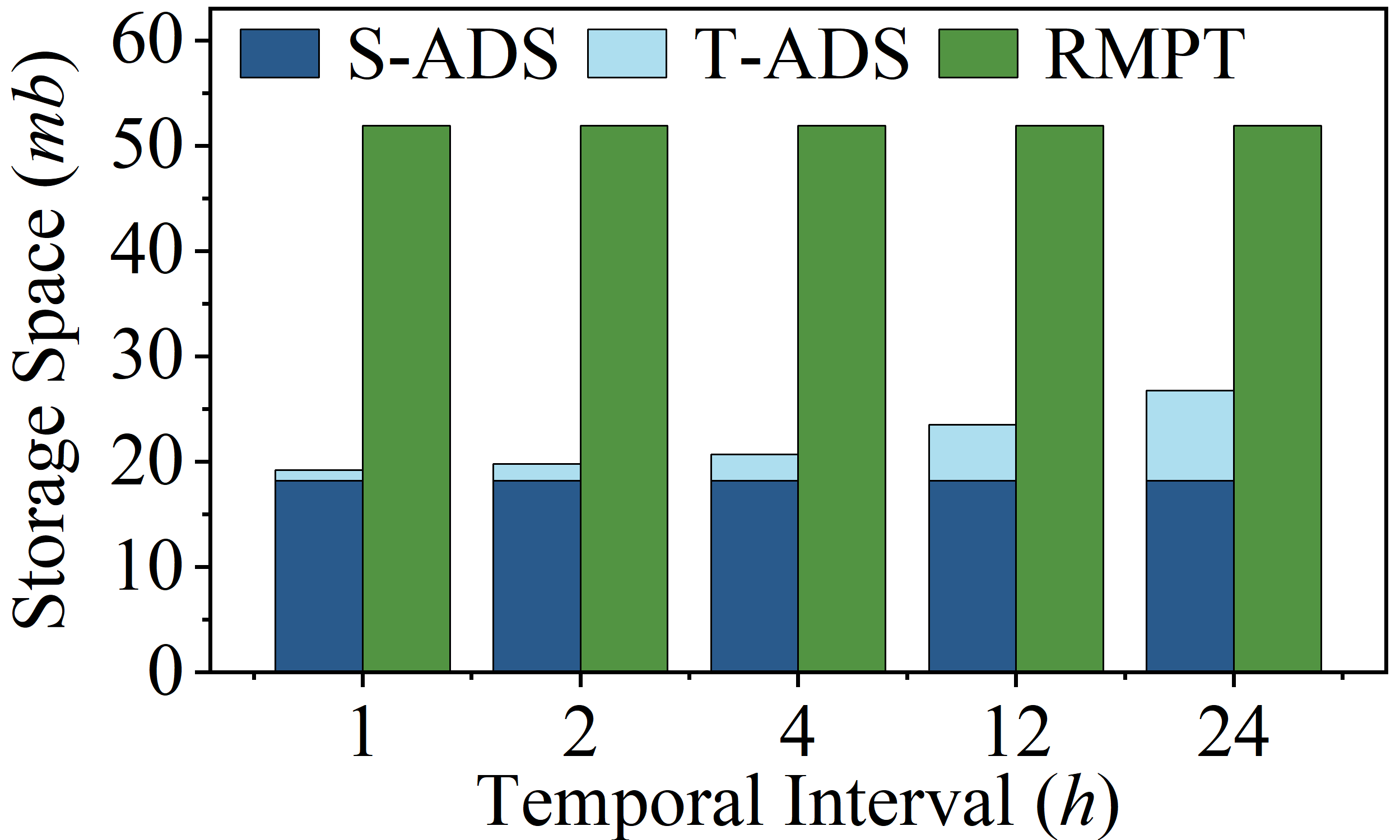}}
\hspace{0.0025\textwidth}
\subfigure[Chengdu]{\includegraphics[width=0.31\textwidth]{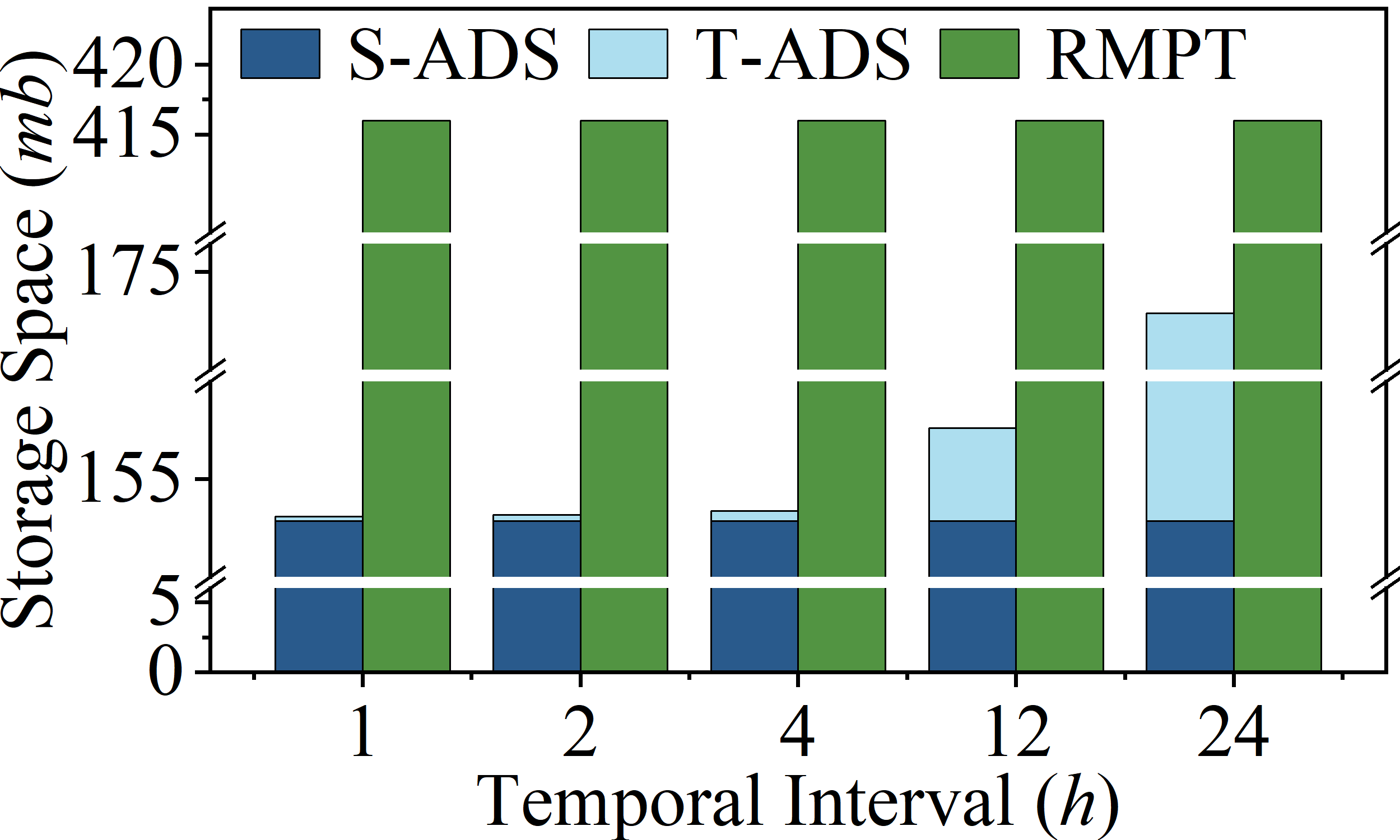}}
\hspace{0.0025\textwidth}
\subfigure[Geolife]
{\includegraphics[width=0.31\textwidth]{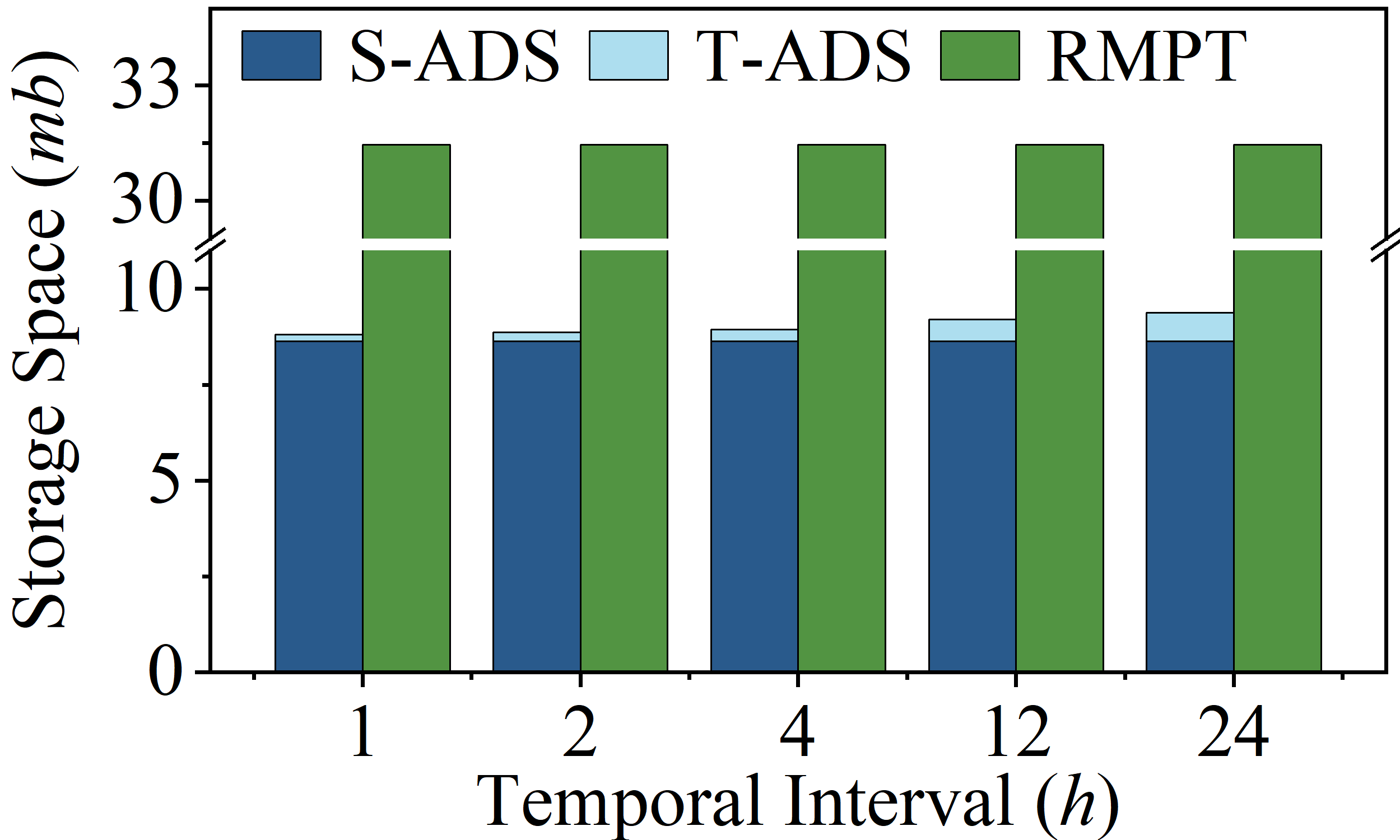}}
\caption{Impact on VO size of the temporal range.}
\label{VO size with tempora}
\end{figure*}

\begin{figure*}[t]
\centering
\subfigure[Xi'an]{\includegraphics[width=0.31\textwidth]{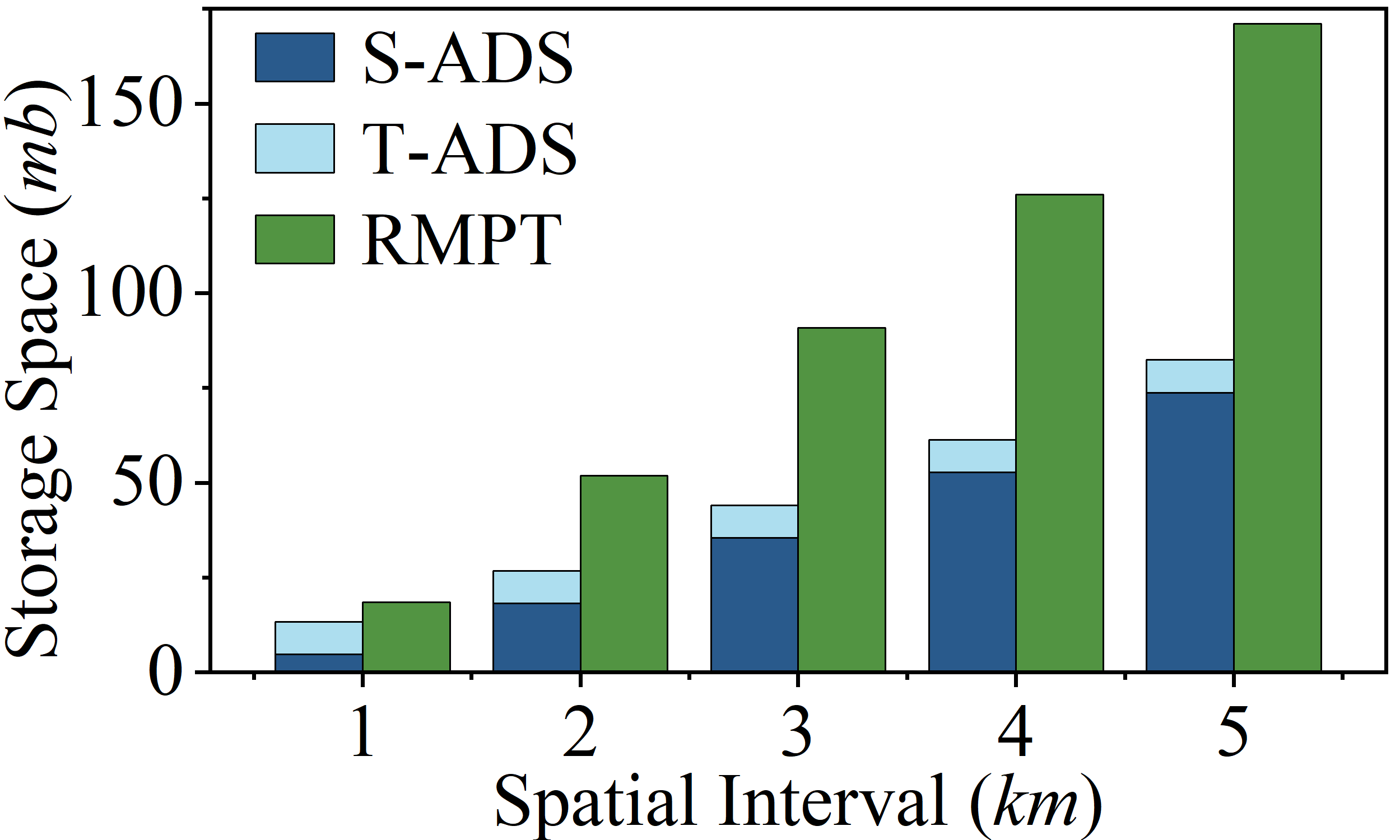}}
\hspace{0.0025\textwidth}
\subfigure[Chengdu]{\includegraphics[width=0.31\textwidth]{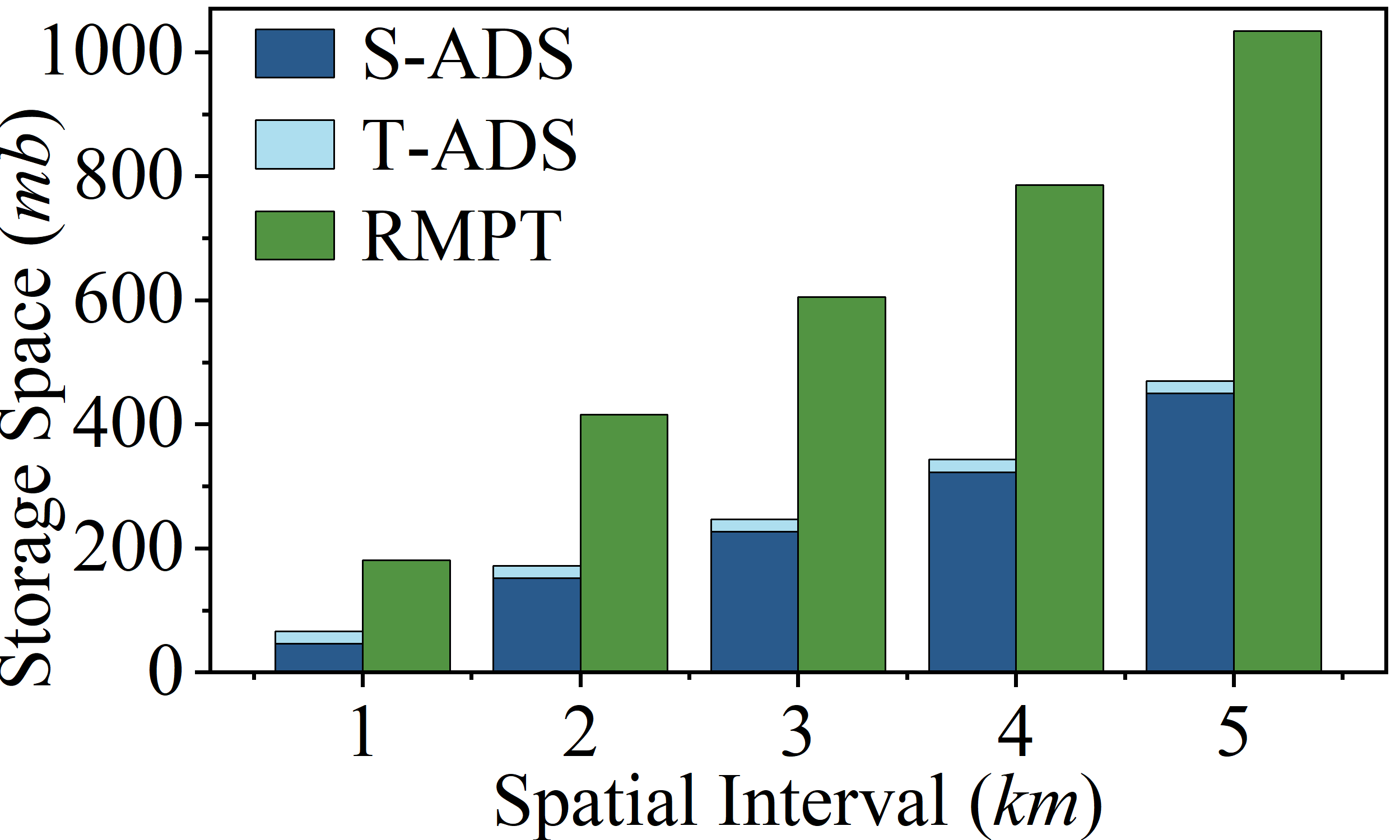}}
\hspace{0.0025\textwidth}
\subfigure[Geolife]{\includegraphics[width=0.31\textwidth]{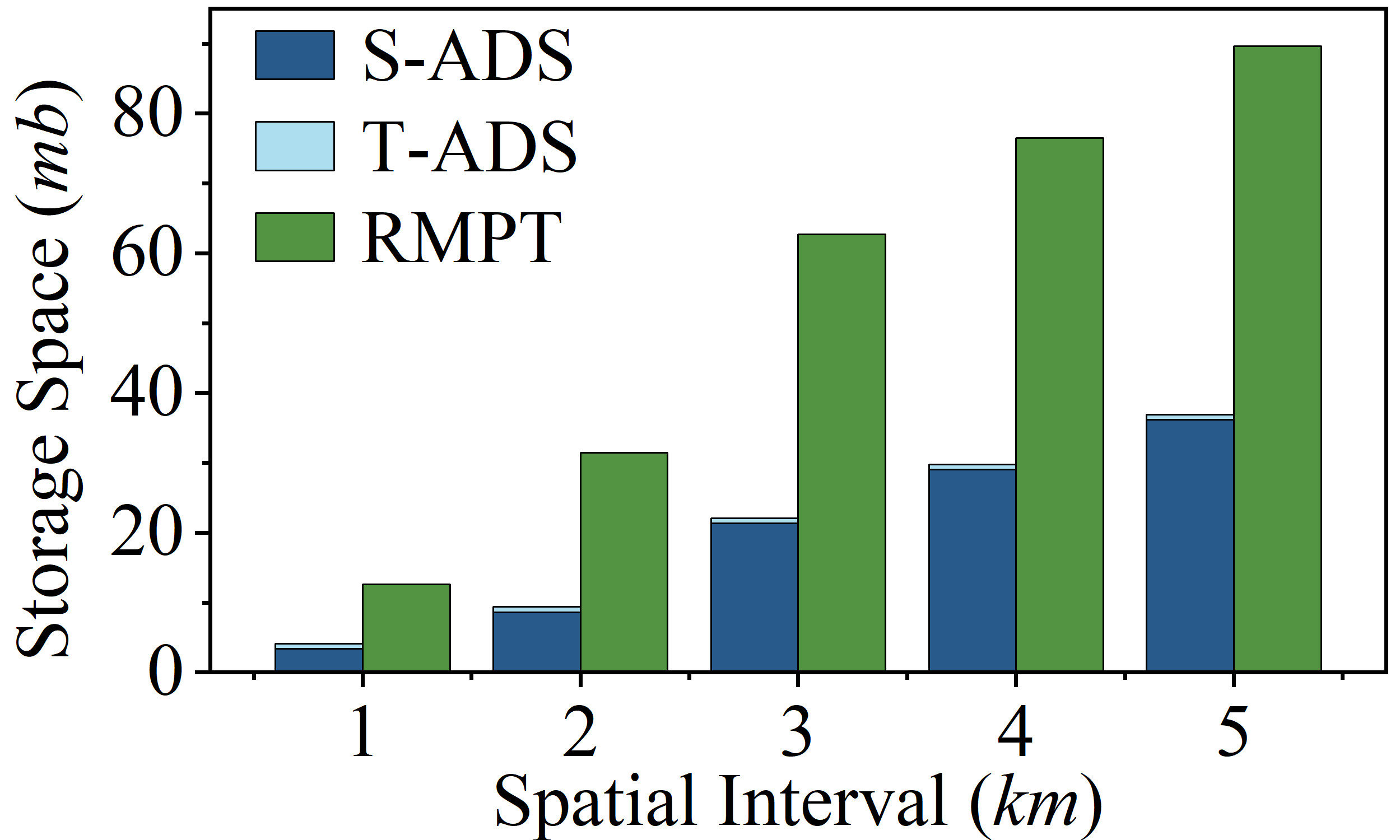}}
\caption{Impact on VO size of the spatial range.}
\label{VO size with spatial}
\end{figure*}

Fig.~\ref{verification with temporal} shows a breakdown of the verification time for \texttt{VTRQ} on the three datasets within 2 km while varying the temporal range. The verification process includes three parts: S-ADS verification, T-ADS verification, and result aggregation. S-ADS verification dominates the verification time. As the temporal verification range increases, the proportion of T-ADS verification increases, while the S-ADS proportion decreases. Although the temporal range expands, its cost is lower than those of the S-ADS verification and result aggregation. This is due to the higher complexity of the spatial verification: it involves not only hashing trajectory data on individual edges but also aggregating all edges in the relevant subgraph. Finally, the root-level aggregation is completed by traversing and combining the query path in the spatial tree. 
Fig.~\ref{verification with spatial} shows the breakdown of the verification time on the three datasets using a temporal range of 24 hours while varying the spatial range. S-ADS verification still dominates the verification time. 
As the spatial range increases, the proportion of the T-ADS verification time gradually decreases, while the proportion of the S-ADS verification time increases accordingly. 
This is because a larger spatial range results in the consideration of more spatial information, thereby increasing the verification overhead of S-ADS.
Notably, on the Xi'an dataset, when the spatial range is set to 1 kilometer, the proportion of the T-ADS verification is higher than that of S-ADS. This is because a smaller spatial range results in fewer trajectories falling in the query region, causing a low spatial verification overhead. In contrast, under the fixed 24-hour temporal range, the query result of T-ADS includes a relatively large set of trajectories, resulting in a higher cost.

Figs.~\ref{VO size with tempora} and ~\ref{VO size with spatial} report the VO sizes for \texttt{VTRQ} and RPMT. \texttt{VTRQ}’s VO consists of two parts: (i) the spatial VO, generated by querying S-ADS, and (ii) the temporal VO, generated by querying T-ADS.
In all cases, \texttt{VTRQ}’s total VO is smaller than for RPMT. This is because \texttt{VTRQ} separates spatial and temporal verification, each of which uses a lightweight ADS structure. RPMT handles both jointly, causing higher overhead.
In Fig.~\ref{VO size with tempora}, \texttt{VTRQ}’s temporal VO size increases linearly as the temporal query range. This is because the spatial range is fixed. The spatial VO size stays constant, while the temporal VO grows linearly as the time range increases.
Similarly, Fig.~\ref{VO size with spatial} shows a linear increase in the spatial VO size as the spatial query range  increases. With the temporal range fixed, temporal VO is constant. The spatial VO grows linearly with the spatial range.
Moreover, the trends in VO size under varying spatial and temporal query ranges align with the verification time results. This is because larger temporal or spatial ranges lead to larger VO sizes and longer verification times.

\subsection{ADS maintenance cost}
We evaluate the maintenance cost of S-ADS and T-ADS in \texttt{VTRQ} from four dimensions: construction time, storage cost, update time, and cost on blockchain.

\begin{table*}[t]
\centering
\caption{Comprehensive evaluation of ADS construction time (s).}
\setlength{\tabcolsep}{3pt}
\renewcommand{\arraystretch}{1.1}

\begin{tabular}{c c c c c c c c c c c c c}
\toprule
Dataset
& \multicolumn{4}{c}{Xi'an}
& \multicolumn{4}{c}{Chengdu}
& \multicolumn{4}{c}{Geolife} \\
\cmidrule(lr){2-5}\cmidrule(lr){6-9}\cmidrule(lr){10-13}
Scale
& S-ADS & T-ADS & VTRQ & RPMT
& S-ADS & T-ADS & VTRQ & RPMT
& S-ADS & T-ADS & VTRQ & RPMT \\
\midrule

\multirow{2}{*}{1}
& 31.701 & 2.9791 & \multirow{2}{*}{\textbf{34.680}} & \multirow{2}{*}{\underline{46.336}}
& 118.56 & 8.1801 & \multirow{2}{*}{\textbf{126.74}} & \multirow{2}{*}{\underline{185.49}}
& 380.55 & 23.659 & \multirow{2}{*}{\textbf{404.21}} & \multirow{2}{*}{\underline{867.55}} \\
& 91.4\% & 8.6\% &  & 
& 93.5\% & 6.5\% &  &
& 94.1\% & 5.9\% &  & \\

\midrule

\multirow{2}{*}{2}
& 70.496 & 6.2989 & \multirow{2}{*}{\textbf{76.795}} & \multirow{2}{*}{\underline{117.51}}
& 240.63 & 18.062 & \multirow{2}{*}{\textbf{258.69}} & \multirow{2}{*}{\underline{539.19}}
& 834.59 & 26.661 & \multirow{2}{*}{\textbf{861.25}} & \multirow{2}{*}{\underline{1496.5}} \\
& 91.8\% & 8.2\% &  & 
& 93.0\% & 7.0\% &  &
& 96.9\% & 3.1\% &  & \\

\midrule

\multirow{2}{*}{3}
& 152.66 & 14.856 & \multirow{2}{*}{\textbf{167.52}} & \multirow{2}{*}{\underline{350.81}}
& 599.56 & 47.833 & \multirow{2}{*}{\textbf{647.39}} & \multirow{2}{*}{\underline{3734.1}}
& 1169.8 & 27.658 & \multirow{2}{*}{\textbf{1197.46}} & \multirow{2}{*}{\underline{1955.9}} \\
& 91.1\% & 8.9\% &  & 
& 92.6\% & 7.4\% &  &
& 97.7\% & 2.3\% &  & \\

\midrule

\multirow{2}{*}{4}
& 203.54 & 20.730 & \multirow{2}{*}{\textbf{224.27}} & \multirow{2}{*}{\underline{550.08}}
& 1361.1 & 127.72 & \multirow{2}{*}{\textbf{1488.8}} & \multirow{2}{*}{\underline{10022}}
& 1705.6 & 27.832 & \multirow{2}{*}{\textbf{1733.45}} & \multirow{2}{*}{\underline{2944.3}} \\
& 90.8\% & 9.2\% &  & 
& 91.4\% & 8.6\% &  &
& 98.4\% & 1.6\% &  & \\

\midrule

\multirow{2}{*}{5}
& 304.76 & 32.880 & \multirow{2}{*}{\textbf{337.64}} & \multirow{2}{*}{\underline{1044.4}}
& 2446.6 & 292.57 & \multirow{2}{*}{\textbf{2739.2}} & \multirow{2}{*}{\underline{36350}}
& 2162.6 & 29.484 & \multirow{2}{*}{\textbf{2192.13}} & \multirow{2}{*}{\underline{3798.8}} \\
& 90.3\% & 9.7\% &  & 
& 89.3\% & 10.7\% &  &
& 98.7\% & 1.3\% &  & \\

\bottomrule
\end{tabular}

\label{tab:ADS construct}
\end{table*}

\begin{figure*}[t]
\centering
\subfigure[Xi'an]{\includegraphics[width=0.31\textwidth]{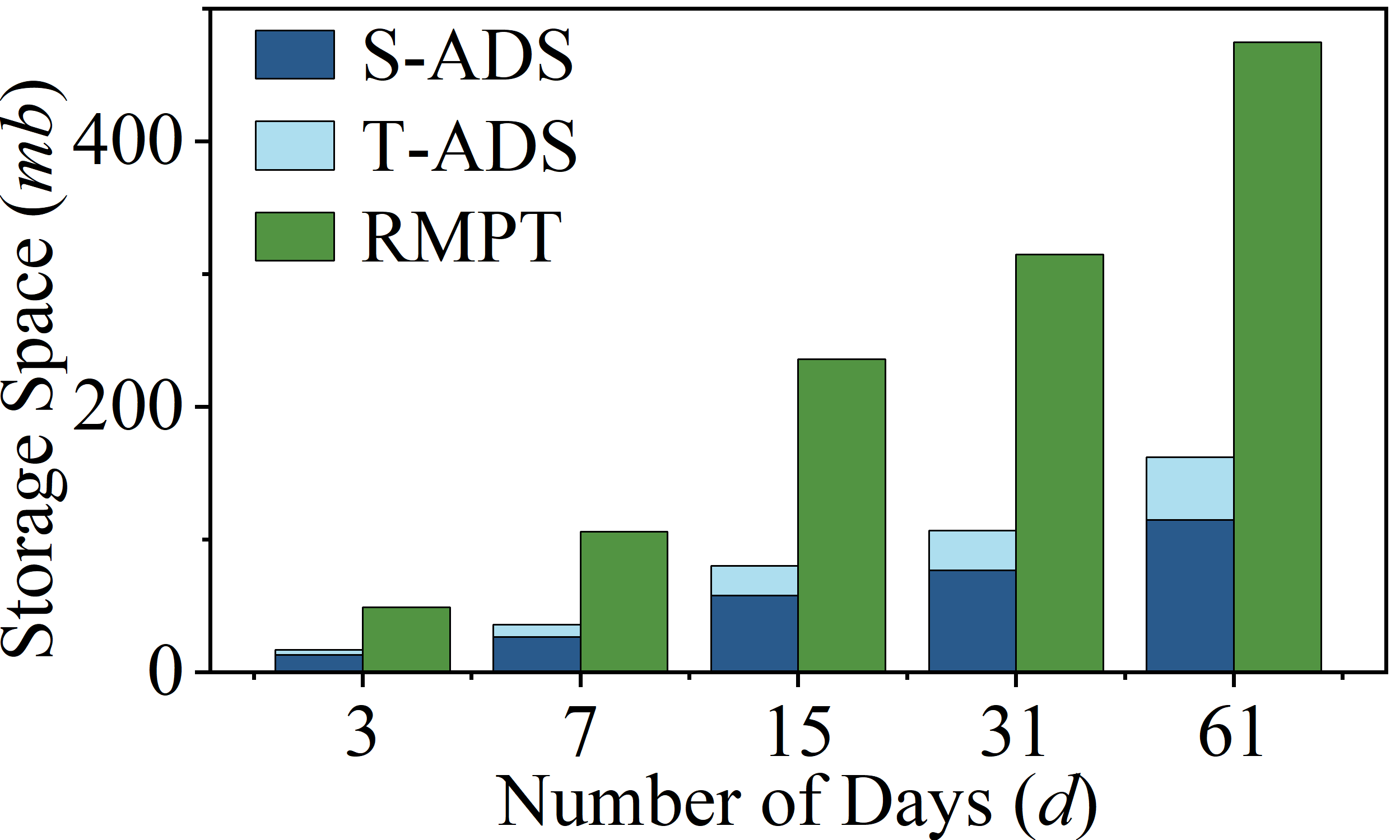}}
\hspace{0.0025\textwidth}
\subfigure[Chengdu]{\includegraphics[width=0.31\textwidth]{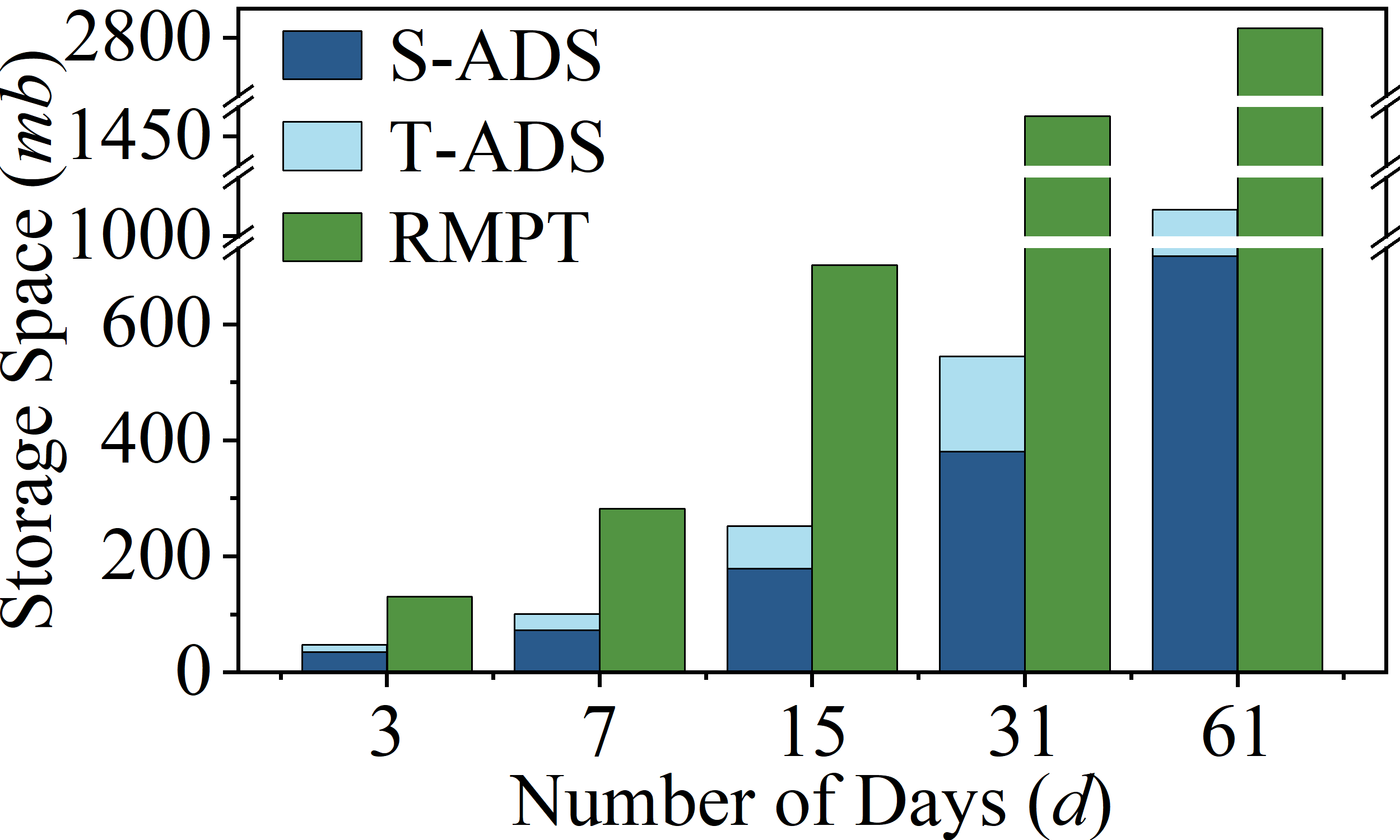}}
\hspace{0.0025\textwidth}
\subfigure[Geolife]
{\includegraphics[width=0.31\textwidth]{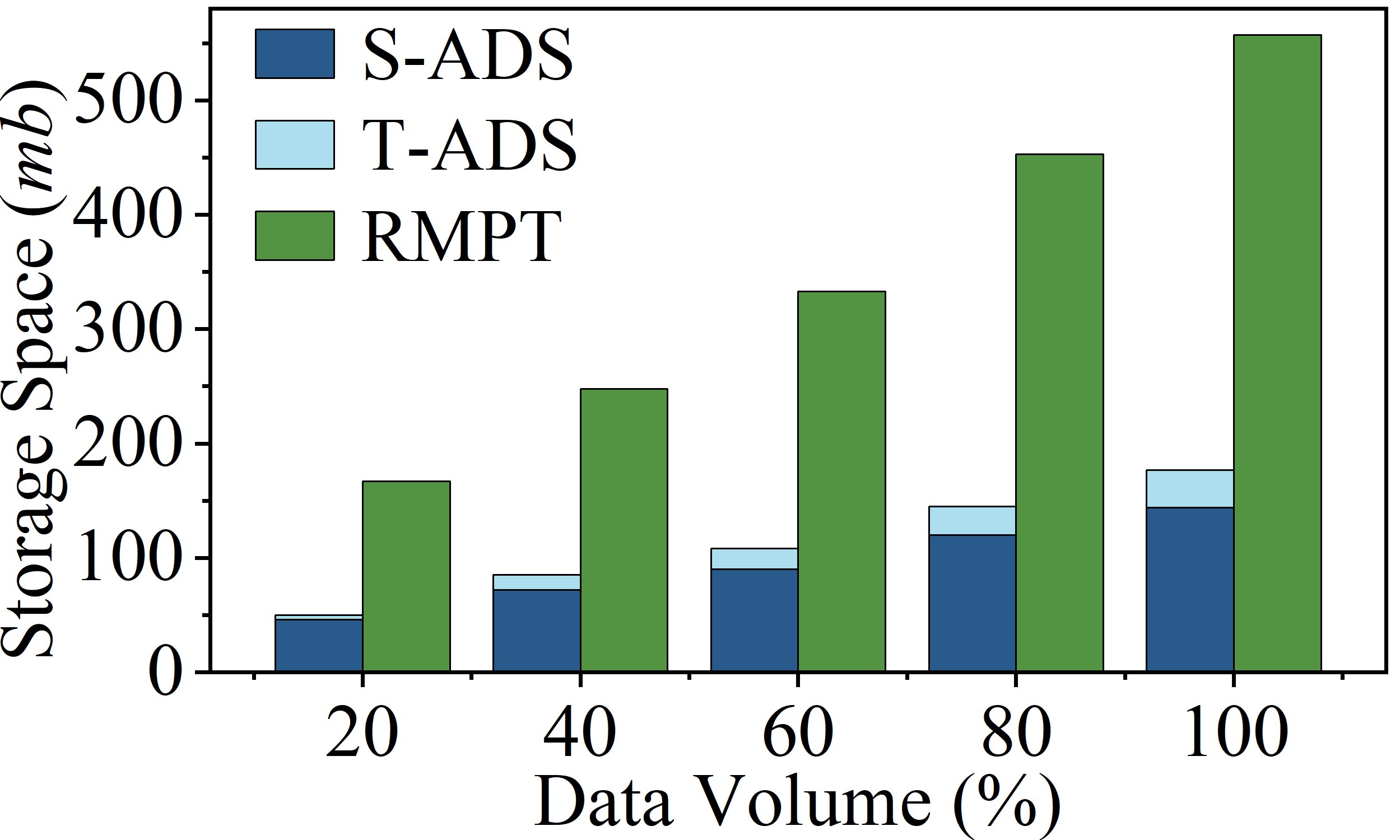}}
\caption{ADS storage cost.}
\label{ADS size}
\end{figure*}
\begin{figure*}[t]
\centering
\subfigure[Xi'an]{\includegraphics[width=0.31\textwidth]{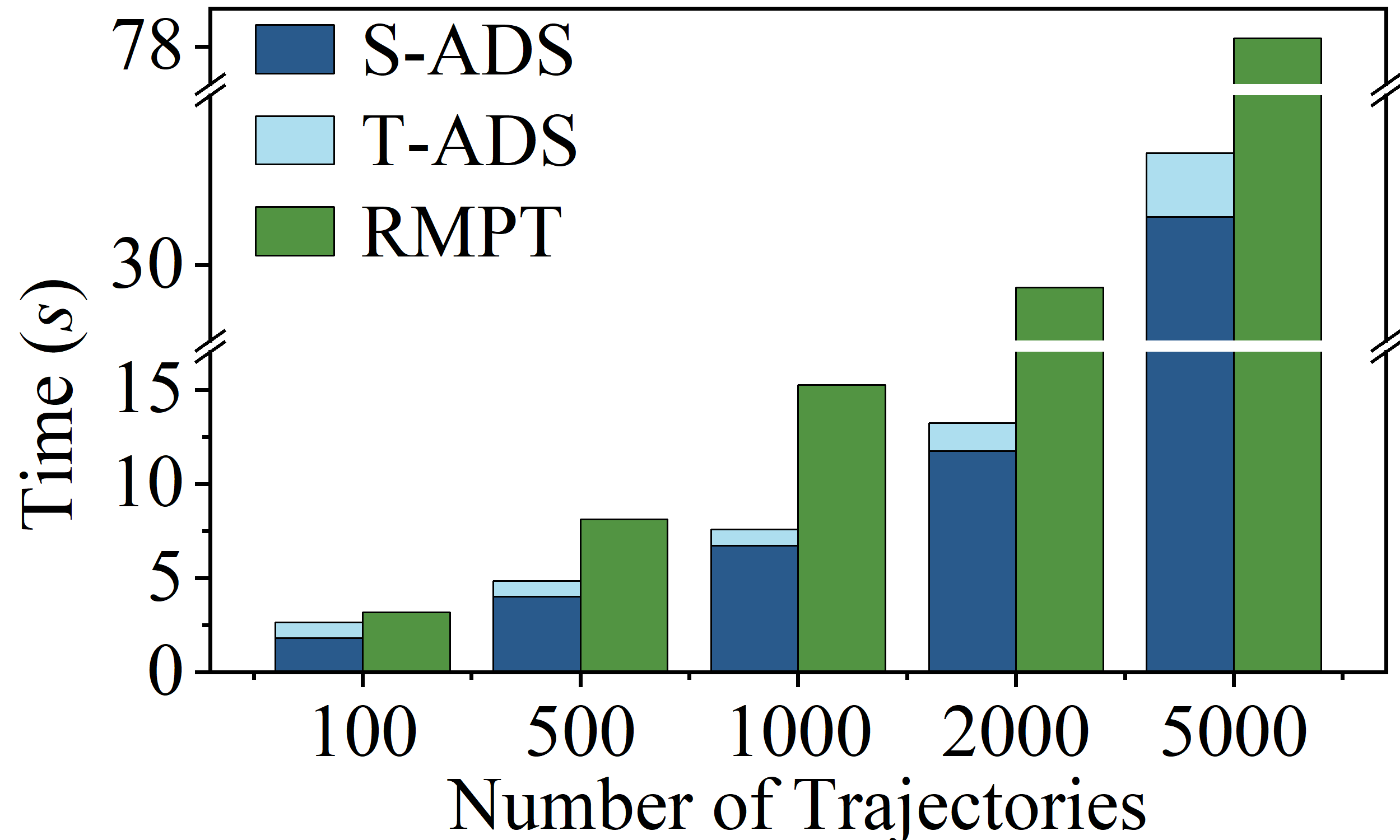}}
\hspace{0.0025\textwidth}
\subfigure[Chengdu]{\includegraphics[width=0.31\textwidth]{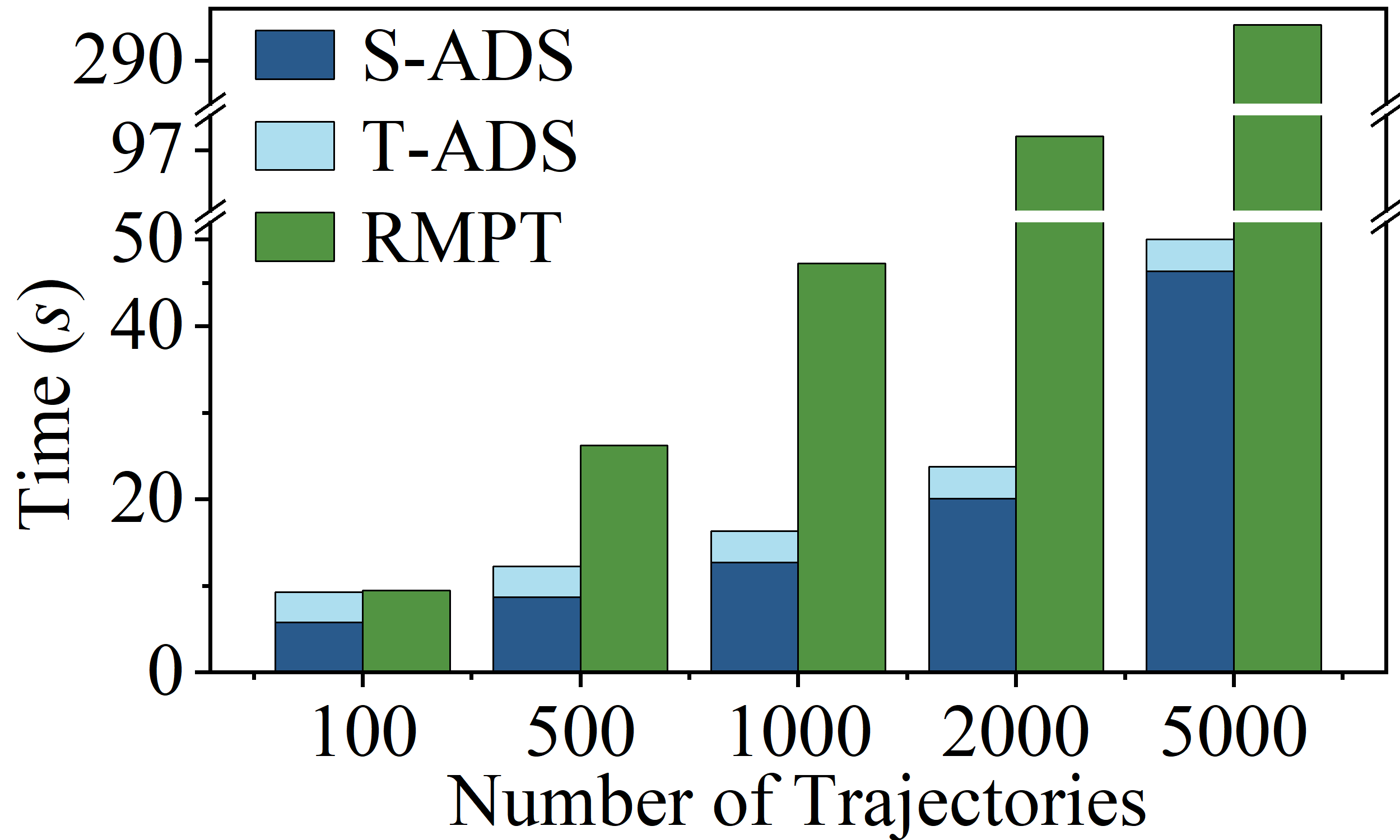}}
\hspace{0.0025\textwidth}
\subfigure[Geolife]{\includegraphics[width=0.31\textwidth]{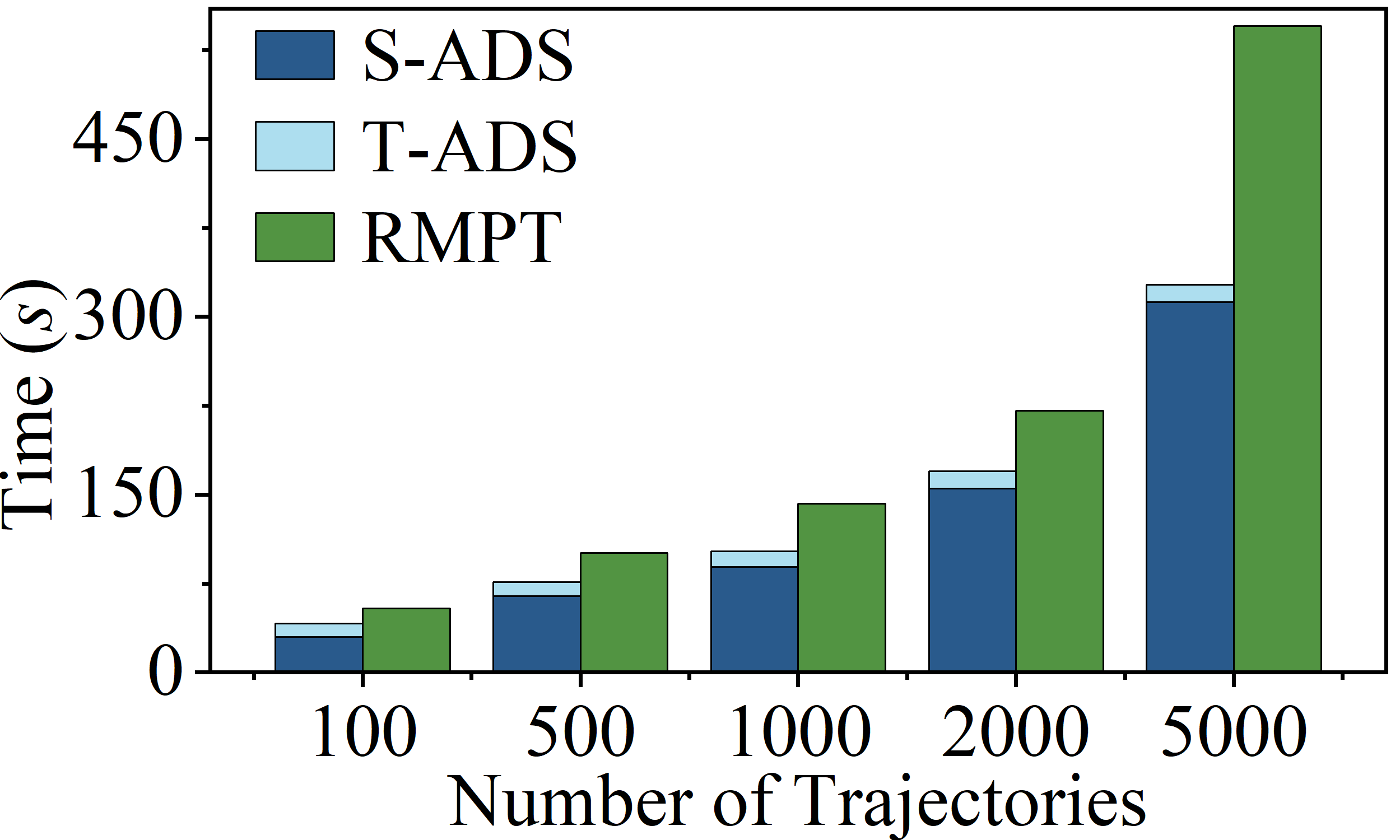}}
\caption{ADS update time.}
\label{ADS update}
\vspace{-1mm}
\end{figure*}

\subsubsection{ADS construction time}
Table~\ref{tab:ADS construct} shows ADS construction time at different dataset scales.
Here, Scale represents the data size, controlled by the number of days of trajectory data (3, 7, 15, 31, and 61 days) for the Xi’an and Chengdu datasets, and the percentage of the dataset (20\%, 40\%, 60\%, 80\%, and 100\%) for the Geolife dataset.
For our method, we also report the time proportion spent on S-ADS and T-ADS construction.
As shown in Table~\ref{tab:ADS construct}, \texttt{VTRQ} achieves approximately a 2× speedup over the baseline on the Chengdu dataset. This improvement comes from the decoupling of the spatial and temporal attributes in the trajectory data, which reduces redundant timestamp insertions.
The improvement is less pronounced on the Xi’an and Geolife datasets due to the smaller data volumes and lower optimization potentials.
In addition, 
both RPMT and \texttt{VTRQ} exhibit longer construction times on the Geolife dataset, due to its larger road network.


\subsubsection{ADS storage cost}

Fig.~\ref{ADS size} presents the storage cost of ADS construction on the three datasets. For the Xi’an and Chengdu datasets, data covering 3, 5, 7, 31, and 61 days are used, while for the Geolife dataset, 20\%, 40\%, 60\%, 80\%, and 100\% of the total data are used.
As shown in Fig.~\ref{ADS size}, \texttt{VTRQ} consistently requires less storage than the RPMT. This improvement stems from the lightweight indexing design, which separates spatial information from complex temporal data and eliminates redundant records of repeated trajectory segments.


\subsubsection{ADS update time}

Fig.~\ref{ADS update} shows the ADS update time when 100, 500, 1,000, 2,000, and 5,000 additional trajectories are inserted after construction. For the Chengdu dataset, ADSs are built with 30 days of data; for the Xi’an dataset, 15 days of data are used; and for the Geolife dataset, 40\% of the total data is used for the initial construction.
As shown in Fig.~\ref{ADS update}, our method outperforms the RPMT across all datasets. This is mainly due to several key optimizations:
In RPMT, trajectory insertion involves spatial matching and temporal information handling. This process is complex and incurs high computational cost.
In contrast, our S-ADS completely eliminates the dependency on temporal data, relying on the spatial information of trajectories for matching, which simplifies the insertion process. For the T-ADS, we retain only the start and end timestamps of each trajectory, avoiding redundant timestamp insertions and reducing the data size. 
Moreover, we replace the original sequential timestamp list with a balanced binary tree-based temporal ADS. This optimization improves insertion and query efficiency, thereby enhancing update performance.

{
\begin{table}[!t]
\renewcommand\tabcolsep{5pt}
\centering

\caption{On-chain cost (s).}

\begin{tabular}{cccccc}
   \toprule

\textbf{Number }&\textbf{2} &\textbf{4} &\textbf{8} &\textbf{16} &\textbf{32}\\

\midrule

512 bits
& 161.21 & 161.26 & 161.33 &161.37 & 161.43  \\

256 bits
& 161.06 & 161.17 & 161.20 & 161.27 & 161.34  \\

\bottomrule
\end{tabular}
\label{tab:on chain cost}

\end{table}}

\subsubsection{Cost on blockchain}


To evaluate the blockchain performance of our framework, we uploaded  two ADS root hashes to Tendermint. Each root hash is 256 bits, for a total of 512 bits. We also evaluate an alternative configuration in which the two hashes are concatenated and re-hashed using SHA256, yielding one 256-bit combined root hash. During verification, the same operation can be applied to compute the combined root hash, ensuring identical verification correctness with negligible time overhead while reducing on-chain storage.
As reported in Table~\ref{tab:on chain cost}, both configurations show nearly identical end-to-end delay. This is because the latency mainly comes from block generation and confirmation. Given the extremely small data sizes involved, the impact of data volume on blockchain upload performance is negligible.

\section{Conclusion}\label{sec:conclusion}

We proposed \texttt{VTRQ}, the first framework to support verifiable trajectory range queries in hybrid-storage blockchain systems. By designing two tailored ADSs—a S-ADS and a T-ADS—\texttt{VTRQ} enables efficient and trustworthy query processing over outsourced trajectory data. Furthermore, a spatio-temporal edge aggregation mechanism was introduced to ensure that only trajectory edges satisfying both spatial and temporal constraints are retained. Experimental results demonstrate that \texttt{VTRQ} significantly outperforms the existing method, achieving up to 6× improvement in query efficiency and up to an order-of-magnitude enhancement in verification performance. 
In future research, it is of interest to extend \texttt{VTRQ} to support higher-dimensional trajectories such as 3D trajectories for drone flights.



\balance
\bibliographystyle{unsrt}
\bibliography{sample-base}

\begin{thebibliography}{10}

\bibitem{wu2024fedapt}
Jia Wu, Tingyi Dai, Peiyuan Guan, Su~Liu, Fangfang Gou, Amir Taherkordi, Yushuai Li, and Tianyi Li.
\newblock {FedAPT}: Joint adaptive parameter freezing and resource allocation for communication-efficient federated vehicular networks.
\newblock {\em IoT-J}, 11(11):19520--19536, 2024.

\bibitem{li2022evolutionary}
Tianyi Li, Lu~Chen, Christian~S. Jensen, Torben~Bach Pedersen, Yunjun Gao, and Jilin Hu.
\newblock Evolutionary clustering of moving objects.
\newblock In {\em ICDE}, pages 2399--2411, 2022.

\bibitem{kong2022trajectory}
Fanhui Kong, Jianqiang Li, Bin Jiang, Huihui Wang, and Houbing Song.
\newblock Trajectory optimization for drone logistics delivery via attention-based pointer network.
\newblock {\em TITS}, 24(4):4519--4531, 2022.

\bibitem{shang2014personalized}
Shuo Shang, Ruogu Ding, Kai Zheng, Christian~S. Jensen, Panos Kalnis, and Xiaofang Zhou.
\newblock Personalized trajectory matching in spatial networks.
\newblock {\em VLDB Journal}, 23:449--468, 2014.

\bibitem{didiGlobalWebsite}
{DiDi Global}.
\newblock {DiDi} global official website.
\newblock https://www.didiglobal.com/, 2026.
\newblock Accessed January 18, 2026.

\bibitem{aliyunStartup1077761}
{Alibaba Cloud}.
\newblock Alibaba cloud startup program case.
\newblock https://startup.aliyun.com/info/1077761.html, 2026.
\newblock Accessed January 18, 2026.

\bibitem{googleCloudGeotabCaseStudy}
{Google Cloud}.
\newblock Geotab customer case study.
\newblock https://cloud.google.com/customers/geotab, 2026.
\newblock Accessed January 18, 2026.

\bibitem{awsBmwGroupCaseStudy}
{Amazon Web Services}.
\newblock {BMW} group case study.
\newblock https://aws.amazon.com/solutions/case-studies/bmw-group-case-study/, 2026.
\newblock Accessed January 18, 2026.

\bibitem{xia2022litmus}
Yu~Xia, Xiangyao Yu, Matthew Butrovich, Andrew Pavlo, and Srinivas Devadas.
\newblock {Litmus}: Towards a practical database management system with verifiable acid properties and transaction correctness.
\newblock In {\em SIGMOD}, pages 1478--1492, 2022.

\bibitem{zhou2021veridb}
Wenchao Zhou, Yifan Cai, Yanqing Peng, Sheng Wang, Ke~Ma, and Feifei Li.
\newblock {VeriDB}: An {SGX}-based verifiable database.
\newblock In {\em SIGMOD}, pages 2182--2194, 2021.

\bibitem{yu2021secure}
Xixun Yu, Yidan Hu, Rui Zhang, Zheng Yan, and Yanchao Zhang.
\newblock Secure outsourced top-k selection queries against untrusted cloud service providers.
\newblock In {\em IWQOS}, pages 1--10, 2021.

\bibitem{wu2023enabling}
Haotian Wu, Zecheng Li, Rui Song, and Bin Xiao.
\newblock Enabling privacy-preserving and efficient authenticated graph queries on blockchain-assisted clouds.
\newblock {\em TKDE}, 35(9):9728--9742, 2023.

\bibitem{zhang2019gem}
Ce~Zhang, Cheng Xu, Jianliang Xu, Yuzhe Tang, and Byron Choi.
\newblock {{GEM$^2$-tree}}: A gas-efficient structure for authenticated range queries in blockchain.
\newblock In {\em ICDE}, pages 842--853, 2019.

\bibitem{zhang2021authenticated}
Ce~Zhang, Cheng Xu, Haixin Wang, Jianliang Xu, and Byron Choi.
\newblock Authenticated keyword search in scalable hybrid-storage blockchains.
\newblock In {\em ICDE}, pages 996--1007, 2021.

\bibitem{li2024authenticated1}
Siyu Li, Zhiwei Zhang, Jiang Xiao, Meihui Zhang, Ye~Yuan, and Guoren Wang.
\newblock Authenticated keyword search on large-scale graphs in hybrid-storage blockchains.
\newblock In {\em ICDE}, pages 1958--1971, 2024.

\bibitem{li2024authenticated2}
Siyu Li, Zhiwei Zhang, Meihui Zhang, Ye~Yuan, and Guoren Wang.
\newblock Authenticated subgraph matching in hybrid-storage blockchains.
\newblock In {\em ICDE}, pages 1986--1998, 2024.

\bibitem{msspalertAwsVoterLeak}
Dan Kobialka.
\newblock {AWS} database leak exposed california voter data.
\newblock https://www.msspalert.com/news/aws-database-leak-california-voter-data-exposed-held-for-ransom, 2017.
\newblock Accessed January 18, 2026.

\bibitem{mykletun2006authentication}
Einar Mykletun, Maithili Narasimha, and Gene Tsudik.
\newblock Authentication and integrity in outsourced databases.
\newblock {\em TOS}, 2(2):107--138, 2006.

\bibitem{pang2005verifying}
HweeHwa Pang, Arpit Jain, Krithi Ramamritham, and Kian-Lee Tan.
\newblock Verifying completeness of relational query results in data publishing.
\newblock In {\em SIGMOD}, pages 407--418, 2005.

\bibitem{li2010authenticated}
Feifei Li, Marios Hadjieleftheriou, George Kollios, and Leonid Reyzin.
\newblock Authenticated index structures for aggregation queries.
\newblock {\em TISSEC}, 13(4):1--35, 2010.

\bibitem{li2020compression}
Tianyi Li, Ruikai Huang, Lu~Chen, Christian~S. Jensen, and Torben~Bach Pedersen.
\newblock Compression of uncertain trajectories in road networks.
\newblock {\em PVLDB}, 13(7):1050--1063, 2020.

\bibitem{li2021trace}
Tianyi Li, Lu~Chen, Christian~S. Jensen, and Torben~Bach Pedersen.
\newblock {TRACE}: Real-time compression of streaming trajectories in road networks.
\newblock {\em PVLDB}, 14(7):1175--1187, 2021.

\bibitem{hamdi2022spatiotemporal}
Ali Hamdi, Khaled Shaban, Abdelkarim Erradi, Amr Mohamed, Shakila~Khan Rumi, and Flora~D Salim.
\newblock Spatiotemporal data mining: a survey on challenges and open problems.
\newblock {\em AIR}, 55(2):1441--1488, 2022.

\bibitem{yu2023continuous}
Xiaofeng Yu, Shunzhi Zhu, and Yongjun Ren.
\newblock Continuous trajectory similarity search with result diversification.
\newblock {\em FGCS}, 143:392--400, 2023.

\bibitem{hu2023spatio}
Danlei Hu, Lu~Chen, Hanxi Fang, Ziquan Fang, Tianyi Li, and Yunjun Gao.
\newblock Spatio-temporal trajectory similarity measures: A comprehensive survey and quantitative study.
\newblock {\em TKDE}, 36(5):2191--2212, 2023.

\bibitem{xu2019vchain}
Cheng Xu, Ce~Zhang, and Jianliang Xu.
\newblock {vChain}: Enabling verifiable {B}oolean range queries over blockchain databases.
\newblock In {\em SIGMOD}, pages 141--158, 2019.

\bibitem{ruan2021lineagechain}
Pingcheng Ruan, Tien Tuan~Anh Dinh, Qian Lin, Meihui Zhang, Gang Chen, and Beng~Chin Ooi.
\newblock {LineageChain}: a fine-grained, secure and efficient data provenance system for blockchains.
\newblock {\em VLDB Journal}, 30:3--24, 2021.

\bibitem{zhu2019sebdb}
Yanchao Zhu, Zhao Zhang, Cheqing Jin, Aoying Zhou, and Ying Yan.
\newblock {SEBDB}: semantics empowered blockchain database.
\newblock In {\em ICDE}, pages 1820--1831, 2019.

\bibitem{pei2020efficient}
Qingqi Pei, Enyuan Zhou, Yang Xiao, Deyu Zhang, and Dongxiao Zhao.
\newblock An efficient query scheme for hybrid storage blockchains based on merkle semantic trie.
\newblock In {\em SRDS}, pages 51--60, 2020.

\bibitem{jia2025mest}
Jinping Jia, Yichen Gao, Yifei Zhen, Zhao Zhang, Kun Qian, and Cheqing Jin.
\newblock {MEST}: An efficient authenticated secondary index in blockchain systems.
\newblock In {\em ICDE}, pages 1636--1649, 2025.

\bibitem{cui2025towards}
Ningning Cui, Dong Wang, Jianxin Li, Huaijie Zhu, Xiaochun Yang, and Jianliang Xu.
\newblock Towards dynamic boolean range query over hybrid-storage blockchains: A secure and reliably verifiable framework.
\newblock In {\em ICDE}, pages 1664--1676, 2025.

\bibitem{sun20253fs}
Pengcheng Sun, Lan Zhang, Jiandong Liu, Chen Tang, and Jialiang Wang.
\newblock {E$^3$FS}: Efficient, secure, and verifiable fuzzy search with data updates in hybrid-storage blockchains.
\newblock In {\em ICDE}, pages 3890--3903, 2025.

\bibitem{pfoser2000novel}
Dieter Pfoser, Christian~S. Jensen, and Yannis Theodoridis.
\newblock Novel approaches to the indexing of moving object trajectories.
\newblock In {\em PVLDB}, pages 395--406, 2000.

\bibitem{tao2001mv3r}
Yufei Tao and Dimitris Papadias.
\newblock The {MV3R-Rree}: A spatio-temporal access method for timestamp and interval queries.
\newblock In {\em PVLDB}, pages 431--440, 2001.

\bibitem{chakka2003indexing}
V.~Prasad Chakka, Adam~C. Everspaugh, and Jignesh~M. Patel.
\newblock Indexing large trajectory data sets with {SETI}.
\newblock In {\em CIDR}, volume~75, page~76, 2003.

\bibitem{yang2017novel}
Xiaochun Yang, Bin Wang, Kai Yang, Chengfei Liu, and Baihua Zheng.
\newblock A novel representation and compression for queries on trajectories in road networks.
\newblock {\em TKDE}, 30(4):613--629, 2017.

\bibitem{bitcoin}
Satoshi Nakamoto.
\newblock Bitcoin: A peer-to-peer electronic cash system.
\newblock https://bitcoin.org/bitcoin.pdf, 2008.

\bibitem{wood2014ethereum}
Gavin Wood.
\newblock Ethereum: A secure decentralised generalised transaction ledger.
\newblock {\em Ethereum project yellow paper}, 151(2014):1--32, 2014.

\bibitem{wang2022vchain+}
Haixin Wang, Cheng Xu, Ce~Zhang, Jianliang Xu, Zhe Peng, and Jian Pei.
\newblock {vChain+}: Optimizing verifiable blockchain {B}oolean range queries.
\newblock In {\em ICDE}, pages 1927--1940, 2022.

\bibitem{singh2023efficient}
Bikash~Chandra Singh, Qingqing Ye, Haibo Hu, and Bin Xiao.
\newblock Efficient and lightweight indexing approach for multi-dimensional historical data in blockchain.
\newblock {\em FGCS}, 139:210--223, 2023.

\bibitem{hao2023efficient}
Kun Hao, Junchang Xin, Zhiqiong Wang, Zhongming Yao, and Guoren Wang.
\newblock Efficient and secure data sharing scheme on interoperable blockchain database.
\newblock {\em TBD}, 9(4):1171--1185, 2023.

\bibitem{yao2023efficient}
Zhongming Yao, Zhiqiong Wang, Liang Wen, Kun Hao, and Junming Xu.
\newblock Efficient blockchain data trusty provenance based on the w3c prov model.
\newblock In {\em ADMA}, pages 61--76, 2023.

\bibitem{zhou2023veridkg}
Enyuan Zhou, Song Guo, Zicong Hong, Christian~S. Jensen, Yang Xiao, Dalin Zhang, Jinwen Liang, and Qingqi Pei.
\newblock {VeriDKG}: a verifiable {SPARQL} query engine for decentralized knowledge graphs.
\newblock {\em PVLDB}, 17(4):912--925, 2023.

\bibitem{hao2022efficient}
Kun Hao, Junchang Xin, Zhiqiong Wang, Zhongming Yao, and Guoren Wang.
\newblock On efficient top-k transaction path query processing in blockchain database.
\newblock {\em DKE}, 141:102079, 2022.

\bibitem{yao2023learned}
Zhongming Yao, Junchang Xin, Kun Hao, Zhiqiong Wang, and Wancheng Zhu.
\newblock Learned-index-based semantic keyword query on blockchain.
\newblock {\em Mathematics}, 11(9):2055, 2023.

\bibitem{yao2024camel}
Yuanyuan Yao, Lu~Chen, Ziquan Fang, Yunjun Gao, Christian~S Jensen, and Tianyi Li.
\newblock Camel: Efficient compression of floating-point time series.
\newblock {\em SIGMOD}, 2(6):1--26, 2024.

\bibitem{hu2024estimator}
Danlei Hu, Ziquan Fang, Hanxi Fang, Tianyi Li, Chunhui Shen, Lu~Chen, and Yunjun Gao.
\newblock Estimator: An effective and scalable framework for transportation mode classification over trajectories.
\newblock {\em TITS}, 25(11):15562--15573, 2024.

\bibitem{lou2009map}
Yin Lou, Chengyang Zhang, Yu~Zheng, Xing Xie, Wei Wang, and Yan Huang.
\newblock Map-matching for low-sampling-rate {GPS} trajectories.
\newblock In {\em SIGSPATIAL}, pages 352--361, 2009.

\bibitem{ma2020forecasting}
Xiaolei Ma, Houyue Zhong, Yi~Li, Junyan Ma, Zhiyong Cui, and Yinhai Wang.
\newblock Forecasting transportation network speed using deep capsule networks with nested {LSTM} models.
\newblock {\em TITS}, 22(8):4813--4824, 2020.

\bibitem{crosby2019embedding}
Henry Crosby, Theodore Damoulas, and Stephen~A Jarvis.
\newblock Embedding road networks and travel time into distance metrics for urban modelling.
\newblock {\em IJGIS}, 33(3):512--536, 2019.

\bibitem{wang2022road}
Yong Wang, Kaiyu Li, Guoliang Li, and Nan Tang.
\newblock Road-aware indexing for trajectory range queries.
\newblock {\em TKDE}, 35(8):8476--8489, 2022.

\bibitem{xu2017range}
Jianqiu Xu, Hua Lu, and Ralf~Hartmut G{\"u}ting.
\newblock Range queries on multi-attribute trajectories.
\newblock {\em TKDE}, 30(6):1206--1211, 2017.

\bibitem{yin2022efficient}
Hongbo Yin, Hong Gao, Binghao Wang, Sirui Li, and Jianzhong Li.
\newblock Efficient trajectory compression and range query processing.
\newblock {\em WWW}, 25(3):1259--1285, 2022.

\bibitem{xu2023query}
Jianqiu Xu, Hua Lu, and Zhifeng Bao.
\newblock A query optimizer for range queries over multi-attribute trajectories.
\newblock {\em TIST}, 14(1):1--28, 2023.

\bibitem{preneel1994cryptographic}
Bart Preneel.
\newblock Cryptographic hash functions.
\newblock {\em ETT}, 5(4):431--448, 1994.

\bibitem{sobti2012cryptographic}
Rajeev Sobti and Ganesan Geetha.
\newblock Cryptographic hash functions: a review.
\newblock {\em IJCSI}, 9(2):461, 2012.

\bibitem{merkle1987digital}
Ralph~C Merkle.
\newblock A digital signature based on a conventional encryption function.
\newblock In {\em CRYPTO}, pages 369--378, 1987.

\bibitem{hendrickson1995multi}
Bruce Hendrickson and Robert~W Leland.
\newblock A multi-level algorithm for partitioning graphs.
\newblock {\em SC}, 95(28):1--14, 1995.

\bibitem{karypis1995analysis}
George Karypis and Vipin Kumar.
\newblock Analysis of multilevel graph partitioning.
\newblock In {\em SC}, pages 29--es, 1995.

\bibitem{karypis1998multilevel}
George Karypis and Vipin Kumar.
\newblock Multilevel algorithms for multi-constraint graph partitioning.
\newblock In {\em SC}, pages 28--28, 1998.

\bibitem{yao2024tsec}
Yuanyuan Yao, Hailiang Jie, Lu~Chen, Tianyi Li, Yunjun Gao, and Shiting Wen.
\newblock Tsec: An efficient and effective framework for time series classification.
\newblock In {\em ICDE}, pages 1394--1406. IEEE, 2024.

\bibitem{finis2015indexing}
Jan Finis, Robert Brunel, Alfons Kemper, Thomas Neumann, Norman May, and Franz Faerber.
\newblock Indexing highly dynamic hierarchical data.
\newblock {\em PVLDB}, 8(10):986--997, 2015.

\bibitem{10.65109/GPHO5000}
Wentao Xu, Zhongming Yao, Weihao Li, Zhenghang Song, Yumeng Song, Tianyi Li, and Yushuai Li.
\newblock Tcrl: Temporal-coupled adversarial training for robust constrained reinforcement learning in worst-case scenarios.
\newblock In {\em Proceedings of the 25th International Conference on Autonomous Agents and Multiagent Systems}, page 3489–3491, 2026.

\bibitem{yao2025vgq}
Zhongming Yao, Tianyi Li, Junchang Xin, Yushuai Li, Chenxu Wang, Zhiqiong Wang, Divesh Srivastava, and Christian~S. Jensen.
\newblock {VGQ}: Enabling verifiable graph queries on blockchain systems.
\newblock In {\em ICDE}, pages 3602--3614, 2025.

\bibitem{tendermintWebsite}
{Tendermint}.
\newblock Tendermint official website.
\newblock https://tendermint.com/, 2026.
\newblock Accessed January 21, 2026.

\bibitem{zheng2011geolifeUserGuide}
Yu~Zheng, Hao Fu, Xing Xie, Wei-Ying Ma, and Quannan Li.
\newblock Geolife {GPS} trajectory dataset.
\newblock https://www.microsoft.com/en-us/research/publication/geolife-gps-trajectory-dataset-user-guide/, July 2011.
\newblock Geolife {GPS} trajectories 1.1 edition. Accessed January 21, 2026.

\bibitem{mao2025dutytte}
Xiaowei Mao, Yan Lin, Shengnan Guo, Yubin Chen, Xingyu Xian, Haomin Wen, Qisen Xu, Youfang Lin, and Huaiyu Wan.
\newblock {DutyTTE}: Deciphering uncertainty in origin-destination travel time estimation.
\newblock In {\em AAAI}, volume~39, pages 12390--12398, 2025.

\bibitem{yung2012authentication}
Duncan Yung, Eric Lo, and Man~Lung Yiu.
\newblock Authentication of moving range queries.
\newblock In {\em CIKM}, pages 1372--1381, 2012.

\end{thebibliography}


\end{document}